\definecolor {infocolor} {rgb} {0.6,0.6,0.6}
\definecolor {sepia} {rgb} {0.75,0.30,0.15}
\newcommand {\mathset} [1] {\ensuremath {\mathbb {#1}}}
\newcommand {\R} {\mathset {R}}
\newcommand {\etal} {\textit {et al.}}
\newcommand {\eps} {\varepsilon}
\newcommand {\eqdef} {:=}
\newcommand {\boruvka}{Bor\r{u}vka}
\DeclareMathOperator {\wspd}{\texttt{wspd}}
\DeclareMathOperator {\as}{as}
\DeclareMathOperator {\emst}{emst}
\DeclareMathOperator {\argmin}{argmin}
\DeclareMathOperator {\DT}{DT}
\DeclareMathOperator {\UC}{UC}
\DeclareMathOperator {\LC}{LC}
\newcommand {\parent} {\overline}
\newcommand {\child} {\underline}
\newtheorem {theorem} {Theorem}[section]
\newtheorem {lem}[theorem] {Lemma}
\newtheorem {observation}[theorem] {Observation}
\newtheorem {cor}[theorem] {Corollary}
\newtheorem {claim}[theorem] {Claim}
\newtheorem {invariant}[theorem] {Invariant}
\title{\Large Triangulating the Square and Squaring the Triangle:\\ 
Quadtrees and Delaunay 
Triangulations are Equivalent\footnote{A preliminary version 
appeared in Proc.~22nd SODA, pp.~1759--1777, 2011}}
\author{Maarten L\"offler\thanks{%
   Department of Information and Computing Sciences,
   Universiteit Utrecht;
   3584 CC Utrecht,
   The Netherlands;
   \textsl{m.loffler@uu.nl}.
  }
\and
Wolfgang Mulzer\thanks{%
        Institut f{\"u}r Informatik,
  Freie Universit{\"a}t Berlin; 14195 Berlin, Germany;
  \textsl{mulzer@inf.fu-berlin.de}.
      }
}
\date{}
\begin{document}

\maketitle

\begin{abstract}
  We show that Delaunay triangulations and compressed quadtrees are equivalent
  structures. More precisely, we give two algorithms: the first computes
  a compressed quadtree for a planar point set, given the Delaunay
  triangulation; the second finds the Delaunay triangulation, given a
  compressed quadtree. Both algorithms run in deterministic linear time on
  a pointer machine.
  Our work builds on and extends previous
  results by Krznaric and Levcopolous~\cite{KrznaricLe98} and
  Buchin and Mulzer~\cite{BuchinMu11}. Our main tool for the second 
  algorithm is the
  well-separated pair decomposition (WSPD)~\cite{CallahanKo95},
  a structure that has been used previously to find Euclidean minimum
  spanning trees in higher dimensions~\cite{Eppstein00}. We show that knowing
  the WSPD (and a quadtree) suffices to compute
  a planar Euclidean minimum spanning tree (EMST) in \emph{linear} time. 
  With the EMST
  at hand, we can find the Delaunay triangulation in
  linear time~\cite{ChinWa99}.

  As a corollary, we obtain
  deterministic versions of many previous algorithms related
  to Delaunay triangulations, such as
  splitting planar Delaunay
  triangulations~\cite{ChazelleDeHuMoSaTe02,ChazelleMu11},
  preprocessing imprecise points for faster Delaunay
  computation~\cite{BuchinLoMoMuXX,LoefflerSn10}, and transdichotomous
  Delaunay triangulations~\cite{BuchinMu11,ChanPa09,ChanPa10}.
\end{abstract}

\section {Introduction}

  \tweeplaatjes [scale=0.95] {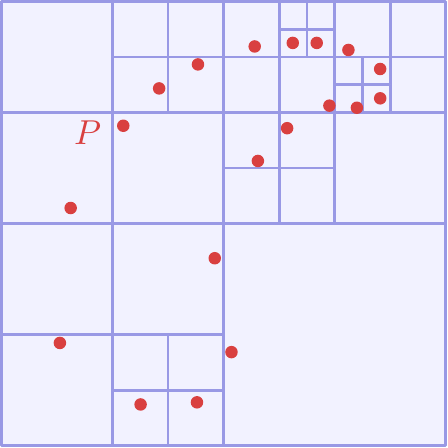} {intro-dt} {A planar point set $P$, and a quadtree (a) and a Delaunay triangulation (b) on it.}

  Delaunay triangulations and quadtrees are among the oldest and
  best-studied notions in computational
  geometry~\cite{deBergChvKrOv08,BoissonnatYv98,d-slsv-34,FinkelBe74,m-pdt-97,Samet90,ShamosHo75,PreparataSh85}, captivating the attention of researchers
  for almost four decades. 
  Both are proximity structures on planar point sets;
  Figure~\ref {fig:intro-qt+intro-dt} shows a simple example of these structures.
  Here, we will demonstrate that
  they are, in fact, equivalent in a very strong sense.
  Specifically, we describe two algorithms. The first computes a suitable 
  quadtree for $P$, given the Delaunay triangulation $\DT(P)$. 
  This algorithm closely follows
  a previous result by Krznaric and
  Levcopolous~\cite{KrznaricLe98}, who solve this problem in a stronger
  model of computation. Our contribution lies in adapting
  their algorithm to the real RAM/pointer machine model.\footnote{Refer
  to Appendix~\ref{app:models} for a description of different computational
  models.}
  The second algorithm, which is the main focus of this paper,
  goes in the other direction and computes
  $\DT(P)$, assuming that a suitable quadtree for $P$ is at hand.
  
  The connection between quadtrees and Delaunay triangulations was first
  discovered and fruitfully applied by Buchin and
  Mulzer~\cite{BuchinMu11} (see also~\cite{BuchinLoMoMuXX}).
  While their approach is to use a hierarchy of quadtrees
  for faster conflict location in a randomized incremental
  construction of $\DT(P)$,
  we pursue a strategy similar to the one by
  L{\"o}ffler and Snoeyink~\cite{LoefflerSn10}:
  we use the additional information
  to find a connected subgraph of $\DT(P)$,
  from which $\DT(P)$ can be computed in linear deterministic 
  time~\cite{ChinWa99}.
  As in L{\"o}ffler and Snoeyink~\cite{LoefflerSn10},
  our subgraph of choice is the
  \emph{Euclidean minimum spanning tree} (EMST) for $P$,
  $\emst(P)$~\cite{Eppstein00}.
  The connection between quadtrees and EMSTs is well known:
  initially, quadtrees were used to obtain fast approximations
  to $\emst(P)$ in high dimensions~\cite{CallahanKo93,Vaidya88}.
  Developing these ideas further, several algorithms  were found that
  use the \emph{well-separated pair decomposition}
  (WSPD)~\cite{CallahanKo95},
  or a variant thereof, to reduce EMST computation to
  solving the \emph{bichromatic closest pair} problem. In that
  problem, we are given
  two point sets $R$ and $B$, and we look for a
  pair $(r,b) \in R \times B$ that minimizes the distance
  $|rb|$~\cite{AgarwalEdScWe91,CallahanKo93,KrznaricLeNi99,Yao82}.
  Given a quadtree for $P$, a WSPD for $P$ can be found in
  linear time~\cite{BuchinLoMoMuXX,CallahanKo95,Chan08,HarPeled11}. 
  EMST algorithms based
  on bichromatic closest pairs constitute the fastest
  known solutions in higher dimensions. 
  Our approach is quite similar, but we focus exclusively
  on the plane. We use the quadtree and WSPDs to
  obtain a sequence of bichromatic closest pair problems,
  which then yield a sparse supergraph of the EMST. 
  There are several issues:  we need
  to ensure that the bichromatic closest pair problems
  have total linear size and can be solved in linear time,
  and we also need to extract the EMST from the supergraph
  in linear time.
  In this paper we show how to do this using
  the structure of the quadtree, combined with a partition of the
  point set according to angular segments similar to 
  Yao's technique~\cite{Yao82}.

  \subsection {Applications}

    Our two algorithms have several implications for derandomizing recent
    algorithms related to DTs. First,
    we mention \emph{hereditary} computation of DTs.
    Chazelle \etal~\cite {ChazelleDeHuMoSaTe02} show how to \emph{split} a Delaunay
    triangulation in linear expected time (see also~\cite{ChazelleMu11}).
    That is, given $\DT(P \cup Q)$,
    they describe a randomized algorithm to find $\DT(P)$ and
    $\DT(Q)$ in expected time $O(|P| + |Q|)$. Knowing that DTs
    and quadtrees are equivalent, this result becomes almost obvious, as quadtrees
    are easily split in linear time. More importantly, our
    new algorithm achieves linear \emph{worst-case} running time.
    Ailon \etal~\cite{AilonChClLiMuSe11} use hereditary
    DTs for 
    \emph{self-improving algorithms}~\cite{AilonChClLiMuSe11}.
    Together with the $\eps$-net construction by Pyrga and Ray~\cite{PyrgaRa08}
    (see~\cite[Appendix~A]{AilonChClLiMuSe11}),
    our result yields a deterministic version of their algorithm for
    point sets generated by a random source (the inputs are probabilistic,
    but not the algorithm).

    Eppstein \etal~\cite {EppsteinGoSu08} introduce the skip-quadtree and show
    how to turn a (compressed) quadtree into a skip-quadtree in linear
    time.  Buchin and Mulzer~\cite{BuchinMu11} use a
    (randomized) skip-quadtree to find the DT in linear
    expected time.
    This yields several improved results
    about computing DTs. Most notably, they show
    that in the \emph{transdichotomous} 
    setting~\cite{ChanPa09,ChanPa10,FredmanWi94}, computing DTs
    is no harder than sorting the points (according to some special order).
    Here, we show how to go directly from a quadtree
    to a DT, without skip-quadtrees or randomness.
    This gives the first \emph{deterministic} transdichotomous reduction from
    DTs to sorting.

    Buchin \etal~\cite {BuchinLoMoMuXX} use both hereditary
    DTs and the connection between skip-quadtrees and DTs
    to simplify and generalize
    an algorithm by L{\"o}ffler and Snoeyink~\cite{LoefflerSn10}
    to preprocess imprecise points for Delaunay triangulation
    in linear expected time (see also Devillers~\cite{Devillers11} for another
    simplified, but not worst-case optimal, solution). L{\"o}ffler and Snoeyink's
    original algorithm is deterministic, and the derandomized version of the
    Buchin~\etal~algorithm proceeds in a very similar spirit. However, we
    now have an optimal deterministic solution for the generalized 
    problem as well.

    \eenplaatje [scale=0.9] {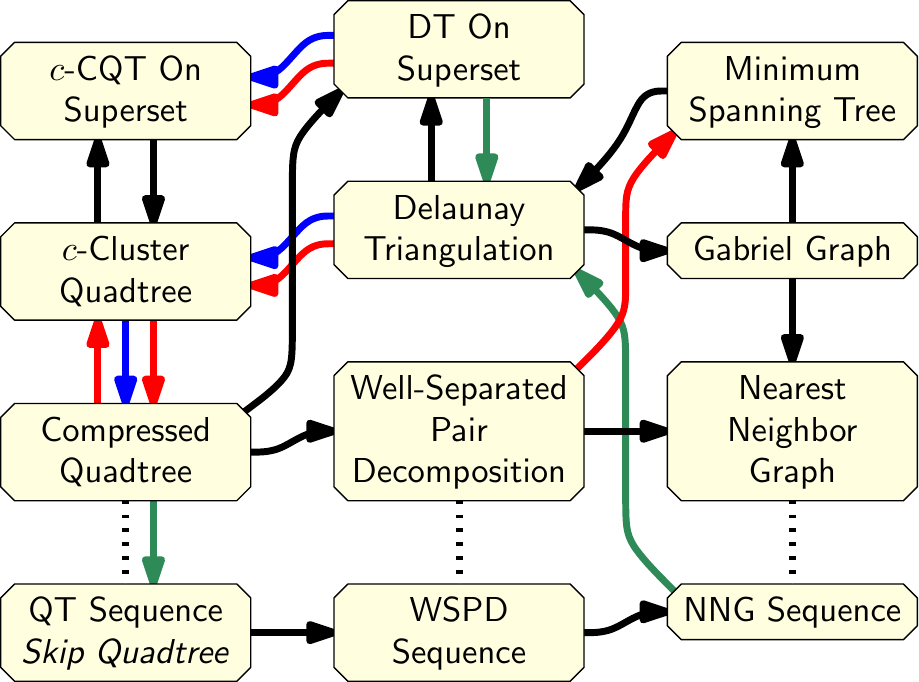}
    { We show which can be computed from which in linear time. The black
      arrows depict known linear time deterministic algorithms that work
      in the pointer machine/real RAM model. The red arrows depict our
      new results. Furthermore, for reference, we also show known 
      randomized linear time algorithms (in green) and known deterministic
      linear time algorithms that work in a weaker model of computation 
      (in blue).
    }

    In Figure~\ref {fig:diagram}, we show a graphical representation of 
    different proximity structures on planar point sets.
    The arrows show which structures can be computed from which in
    linear deterministic time on a pointer machine, before and after
    this paper.
    Please realize that there are several subtleties of different 
    algorithms and their interactions that are hard to show in a
    diagram, it is included purely as illustration of the impact of 
    our results.

  \subsection {Organization of this paper}
  
    The main result of our paper is an algorithm to compute a minimum spanning
    tree of a set of points from a given compressed quadtree. However, before
    we can describe this result in Section~\ref {sec:qt->dt}, we need to
    establish the necessary tools; to this end we review several known
    concepts in Section~\ref {sec:prelim} and prove some related technical
    lemmas in Section~\ref {sec:qt}.
    In Section~\ref {sec:dt->qt}, we describe the algorithm to compute a
    quadtree when given the Delaunay triangulation; this is an adaptation
    of the algorithm by Krznaric and Levcopoulos~\cite {KrznaricLe98} to the
    real RAM model.
    Finally, we detail some important implications of our two new algorithms
    in Section~\ref {sec:applications}.

\section {Preliminaries} \label {sec:prelim}

  We review some known definitions, structures, 
  algorithms, and their relationships.

  \subsection {Delaunay Triangulations and Euclidean Minimum Spanning Trees}

    Given a set $P$ of $n$ points in the plane, an important and extensively
    studied structure is the \emph {Delaunay triangulation} of $P$~\cite
    {deBergChvKrOv08,BoissonnatYv98,d-slsv-34,PreparataSh85,ShamosHo75}, 
    denoted $\DT(P)$.  It can be defined as the dual graph of the
    Voronoi diagram, the triangulation that optimizes the smallest angle in any
    triangle, or in many other equivalent ways, and it has been proven to
    optimize many other different criteria~\cite {m-pdt-97}.

    The \emph {Euclidean minimum spanning tree} of $P$, denoted $\emst(P)$, 
    is the tree of smallest total edge length that has the
    points of $P$ as its vertices, and it is well known that the EMST is a
    subgraph of the DT~\cite[Theorem~7]{ShamosHo75}.
    In the following, we will assume that all the pairwise distances in
    $P$ are distinct (a general position assumption), which implies that
    $\emst(P)$ is uniquely determined. Finally, we remind the reader that
    $\emst(P)$, like every minimum spanning tree, has 
    the following \emph{cut property}: let $P = R \cup B$
    a partition of $P$, and let $r$ and $b$ be the two points with $r \in R$ and
    $b \in B$ that minimize  the distance $|rb|$. Then $rb$ is an edge
    of $\emst(P)$. Note that this is very similar to the bichromatic closest 
    pair reduction mentioned in the introduction, but the cut property
    holds for any partition of $P$, whereas the bichromatic closest
    pair reduction requires a very specific decomposition of $P$ into pairs
    of subsets (which is usually not a partition).

 \subsection {Quadtrees---Compressed and $c$-Cluster}
  \label {sec:cqt&cqt}

    Let $P$ be a planar point set.
    The \emph{spread} of $P$ is defined as the
    ratio between the largest and the smallst distance between
    any two distinct points in $P$.
    A \emph{quad\-tree} for $P$ is a hi\-erarchical
    decomposition of an axis-aligned bounding square for $P$ into smaller
    axis-aligned
    \emph{squares}~\cite{deBergChvKrOv08,FinkelBe74,HarPeled11,Samet90}.
    A \emph{regular} quadtree is constructed by
    successively subdividing every square with at least two points
    into four congruent child squares. A node $v$ of a quadtree
    is associated with
    (i) $S_v$, the square corresponding to $v$;
    (ii) $P_v$, the points contained in $S_v$; 
    and (iii) $B_v$, the axis-aligned bounding square for
    $P_v$.
    $S_v$ and $B_v$ are stored explicitly at the node.
    We write $|S_v|$ and $|B_v|$  for the diameter of
    $S_v$ and $B_v$, and $c_v$ for the center of $S_v$.
    We will also use the shorthand $d(u,v) \eqdef d(S_u, S_v)$
    to denote the shortest distance between any point in $S_u$ and
    any point in $S_v$.
    Furthermore, we denote the parent of $v$ by $\parent v$.
    Regular quadtrees can have unbounded depth (if $P$ has unbounded 
    spread
    so in order to give any theoretical guarantees the concept is
    usually refined.
    In the sequel, we use two such variants of quadtrees, namely
    \emph{compressed} and \emph{$c$-cluster} quadtrees,
    which we show are in fact equivalent.

    A \emph {compressed} quadtree is a quadtree in which we replace
    long paths of nodes with only one child by a single 
    edge~\cite{BernEpGi94,BernEpTe99,BuchinLoMoMuXX,Clarkson83}.
    It has size $O (|P|)$.  Formally,
    given a large constant $a$, an $a$-compressed quadtree
    is a regular quadtree with additional \emph{compressed} nodes.\footnote
    {Such nodes are often called \emph {cluster}-nodes in the
    literature~\cite{BernEpGi94,BernEpTe99,BuchinLoMoMuXX},
    but we prefer the term \emph {compressed} to avoid confusion with
    $c$-cluster quadtrees defined below.}
    A compressed
    node $v$ has only one child $\child v$ with
    $|S_{\child v}| \leq |S_v|/a$ and such that $S_v \setminus S_{\child v}$
    has no points from $P$. 
    Figure~\ref {fig:ex-compquad-notaligned} shows an example.
     Note that in our definition
    $S_{\child v}$ need not be aligned with $S_v$, which would
    happen if we literally ``compressed'' a regular quadtree.
    This relaxed definition is necessary because existing algorithms for 
    computing
    aligned compressed quadtrees use a more powerful model of
    computation than our real RAM/pointer machine
    (see Appendix~\ref{app:models}).  In the usual applications of quadtrees,
    this is acceptable. In fact, Har-Peled~\cite[Chapter~2]{HarPeled11}
    pointed out that some non-standard operation is
    \emph{inevitable} if we require that the squares of the compressed quadtree
    are perfectly aligned. However, here we intend to derandomize algorithms
    that work on a traditional real RAM/pointer machine, so we
    prefer to stay in this model. This keeps our results comparable with the
    previous work.

    \drieplaatjes {ex-compquad-notaligned} {ex-ccluster} {ex-cclusquad} 
    { (a) A compressed quadtree on a set of $15$ points.
      (b) A $c$-cluster tree on the same point set.
      (c) In a $c$-cluster quadtree, the internal nodes of the $c$-cluster tree are replaced by quadtrees.      
    }    

    Now let $c$ be a large enough constant.
    A subset $U \subseteq P$ is a \emph{$c$-cluster} if $U = P$ or
    $d(U, P \setminus U) \geq c |B_U|$, where $B_U$ denotes the
    smallest axis-aligned bounding square
    for $U$, and $d(A,B)$ is the minimum distance between a point in $A$
    and a point in $B$~\cite{KrznaricLe95,KrznaricLe98}. In other words, 
    $U$ is a $c$-cluster precisely if $\{U, P \setminus U\}$ is a
    $(1/c)$-\emph{semi}-separated pair~\cite{HarPeled11,Varadarajan98}.
    It is easily seen that the $c$-clusters for $P$ form a laminar
    family, i.e., a set system in which any two sets $A$ and $B$  satisfy either
    $A \cap B = \emptyset$; $A \subseteq B$; or $B \subseteq A$.
    Thus, the $c$-clusters define a \emph{$c$-cluster tree} $T_c$.
    Figure~\ref {fig:ex-ccluster} shows an example.
    These trees are a very natural way to tackle
    point sets of unbounded spread, and they have linear size. However,
    they also may have high degree. To avoid this, a $c$-cluster tree
    $T_c$ can be augmented by additional nodes, adding more structure
    to the parts of the point set that are not strongly clustered.
    This is done as follows. First, recall that a 
      quadtree is called \emph {balanced} if for every node 
      $u$ that is either a leaf or a compressed node,
     the square $S_u$ is adjacent only to squares that are within a factor $2$
     of the size of $S_u$.\footnote{We remind the reader that in our 
     terminology,
     a \emph{compressed node} is the node whose square contains a much
     smaller quadtree, and not the root node of the smaller quadtree.}
    For each
    internal node $u$ of $T_c$ with set of children $V$, we
    build a balanced regular quadtree on a set of points
    containing one representative point from each node in $V$
    (the intuition being that such a cluster is so small and far from its
    neighbors, that we might as well treat it as a point). This
    quadtree has size $O(|V|)$ (Lemma~\ref{lem:c-cluster-QT}), so
    we obtain a tree of constant degree and linear size,
    the \emph {$c$-cluster quadtree}.
    Figure~\ref {fig:ex-cclusquad} shows an example.
    The sets $P_v$, $S_v$ and $B_v$
    for the $c$-cluster quadtree are just as for regular and compressed
    quadtrees, where in $P_v$ we expand the representative points
    appropriately.
    Note that it is possible that $S_v \nsupseteq P_v$, but the points 
    of $P_v$ can never be too far from $S_v$.
    In Section~\ref {sec:ccqt} we elaborate more on $c$-cluster quadtrees 
    and their properties, and in Section~\ref{sec:compressed-c-cluster}, 
    we prove that $c$-cluster quadtrees and compressed quadtrees
    are equivalent (Theorem~\ref{thm:cluster-compressed-equiv}). 

 \subsection {Well-Separated Pair Decompositions}

    For any two finite sets $U$ and $V$, let
    $U \otimes V \eqdef \{\{u,v\} \mid u \in U, v \in V, u \neq v\}$.
    A \emph {pair decomposition}
    $\mathcal{P}$ for a planar\footnote
    {Although some of these notions extend naturally to higher dimensions,
    the focus of this paper is on the plane.}
    $n$-point set $P$ is a set of $m$ \emph {pairs}
    $\{\{U_1, V_1\},$ $\ldots,$ $\{U_m,V_m\}\}$, such
    that
    (i) for all $i = 1, \ldots, m$, we have $U_i, V_i \subseteq P$ and
       $U_i \cap V_i = \emptyset$; and
    (ii) for any $\{p, q\} \in P \otimes P$, there is exactly one $i$ with
      $\{p, q\} \in U_i \otimes V_i$.
    We call $m$ the \emph{size} of $\mathcal{P}$.
    Fix a constant $\eps \in (0,1)$, and let $\{U, V\} \in \mathcal{P}$.
    Denote by $B_U$, $B_V$ the smallest axis-aligned squares containing
    $U$ and $V$. We say that $\{U,V\}$
    is \emph{$\eps$-well-separated} if $\max\{|B_U|, |B_V|\} \leq \eps 
    d(B_U,B_V)$,
    where $d(B_U,B_V)$ is the distance between $B_U$ and $B_V$ (i.e.,
    the smallest distance
    between a point in $B_U$ and a point in $B_V$).
    If $\{U,V\}$ is not $\eps$-well-separated, we say it is
    \emph{$\eps$-ill-separated}.
    We call $\mathcal{P}$ an
    \emph {$\eps$-well-separated pair decomposition} ($\eps$-WSPD)
    if all its pairs are
    $\eps$-well-separated~\cite{CallahanKo93,CallahanKo95,Eppstein00,HarPeled11}.

    Now let  $T$ be a (compressed or $c$-cluster) quadtree for $P$.
    Given $\eps > 0$, it is well known that $T$ can be used to
    obtain an $\eps$-WSPD for $P$ in linear time~\cite{CallahanKo95,HarPeled11}.
    Since we will need some specific properties of such an $\eps$-WSPD,
    we give pseudo-code for such an  algorithm 
    in Algorithm~\ref{alg:wspd}. We call this algorithm
    $\wspd$, and denote its output on input $T$ by $\wspd(T)$.
    The correctness of the algorithm $\wspd$ is immediate,
    since it only outputs well-separated pairs, and the bounds on the
    running time and the size of $\wspd(T)$ follow from a well-known
    volume argument which we
    omit~\cite{BuchinLoMoMuXX,CallahanKo95,Chan08,HarPeled11}.
   \begin{algorithm}[ht]
    \begin{enumerate}
    \item Call $\wspd(r)$ on the root $r$ of $T$.
    \end{enumerate}
    $\wspd(v)$
    \begin{enumerate}
    \item If $v$ is a leaf, return $\emptyset$.
    \item Return the union of $\wspd(w)$ and
      $\wspd(\{w_1, w_2\})$ for all children $w$ and
      pairs of distinct children $w_1, w_2$ of $v$.
    \end{enumerate}
    \vskip0.2cm
    \noindent
    $\wspd(\{u, v\})$
    \begin{enumerate}
    \item If $S_{u}$ and $S_{v}$ are
    $\eps$-well-separated, return $\{u, v\}$.
    \item Otherwise, if $|S_u|\leq |S_{v}|$, return
       the union of $\wspd(\{u, w\})$ for all children $w$ of $v$.
    \item  Otherwise, return the union of $\wspd(\{w, v\})$ for all
       children $w$ of $u$.
    \end{enumerate}
    \caption{Finding a well-separated pair decomposition.}
    \label{alg:wspd}
    \end{algorithm}

    \begin{theorem}\label{thm:wspd}
       There is an algorithm \emph{$\wspd$}, that given a 
       (compressed or $c$-cluster) 
       quadtree $T$ for a planar $n$-point set $P$, finds in time $O(n)$
       a linear-size $\eps$-WSPD for $P$, denoted \emph{$\wspd(T)$}.
       \qed
    \end{theorem}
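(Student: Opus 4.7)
The plan is to prove the theorem in three parts: correctness of $\wspd$, linearity of the output size, and the $O(n)$ running time. Correctness is essentially trivial: the only output pairs are produced in line~1 of the subroutine $\wspd(\{u,v\})$, where we explicitly check that $S_u$ and $S_v$ are $\eps$-well-separated, which by definition gives $\max\{|S_u|,|S_v|\} \le \eps\, d(S_u,S_v)$. Since $B_U \subseteq S_u$ and $B_V \subseteq S_v$ for the point sets $U = P_u$ and $V = P_v$, the pair $\{U,V\}$ is also $\eps$-well-separated. Moreover, the recursion structure (first on single nodes $\wspd(v)$, then on sibling pairs $\wspd(\{w_1,w_2\})$, and then refinements) guarantees that every unordered pair $\{p,q\} \in P \otimes P$ is covered by exactly one output pair, since it is separated at the unique deepest tree node whose subtree contains both $p$ and $q$, and from that point it is pushed down into exactly one well-separated pair.

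For the size bound, I would use the standard packing argument adapted to our setting. Each well-separated pair $\{u,v\}$ returned by $\wspd(\{u,v\})$ has an immediate ancestor call $\wspd(\{u',v'\})$ in the recursion that found $S_{u'}$ and $S_{v'}$ $\eps$-ill-separated, where, without loss of generality, $u = u'$ and $v$ is a child of $v'$ (or $v' = v$ and $u'$ was the sibling in the initial $\wspd(\{w_1,w_2\})$ call). Since the algorithm always descends on the \emph{larger} square, we have $|S_u| \le |S_v| \le c\,|S_u|$ for a small constant $c$ depending on the quadtree arity, and ill-separation gives $d(S_u,S_v) < |S_v|/\eps$. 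Hence $\{u,v\}$ can be charged to the node $u$, and the number of quadtree nodes $v$ of comparable size within distance $O(|S_u|/\eps)$ of $u$ is $O(1/\eps^2)$ by a volume packing argument. Summing over all nodes $u$ of the quadtree yields a total of $O(n/\eps^2)$ output pairs.

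The only subtlety in the packing argument concerns compressed edges, where $|S_{\child w}|$ may be much smaller than $|S_w|$. Here the point is that the gap $S_w\setminus S_{\child w}$ is empty of $P$, so whenever the recursion enters the child of a compressed node, the node $w$ and the pair partner are necessarily already $\eps$-well-separated (for $a$ large enough relative to $1/\eps$) and no further descent is needed; thus compressed edges contribute nothing beyond $O(1)$ per edge. For $c$-cluster quadtrees the argument is analogous once we recall from Section~\ref{sec:ccqt} that on each internal cluster node we use a balanced regular quadtree of size $O(|V|)$, and across clusters the separation condition makes pairs immediately well-separated. The hard part, and where I would spend the most care, is pinning down the comparable-size claim $|S_u| = \Theta(|S_v|)$ across compressed edges and across the boundary between a $c$-cluster and its children, so that the packing argument applies uniformly.

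For the running time, observe that the calls to $\wspd(v)$ on single nodes simply traverse $T$, contributing $O(n)$ work since $T$ has constant degree and linear size. The calls to $\wspd(\{u,v\})$ form a recursion tree whose leaves are exactly the output pairs and whose internal nodes have $O(1)$ children (bounded arity of quadtree nodes); hence the total number of such calls is $O(n/\eps^2)$ by the previous paragraph, and each call does $O(1)$ work (comparing two stored squares). Treating $\eps$ as a constant, the total running time is $O(n)$, completing the proof. \qed
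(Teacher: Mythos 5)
The paper itself does not prove this theorem: correctness is declared immediate from the pseudocode, and the size and running-time bounds are attributed to a ``well-known volume argument'' that the authors omit, citing Callahan--Kosaraju, Chan, Har-Peled, and Buchin~\etal. So you are attempting something the paper deliberately skips. Your correctness argument and your running-time accounting (once the size bound is established) are fine and follow the same standard route as those references. The problem is in the core of the packing argument.

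Your claim that output pairs satisfy $|S_u| \le |S_v| \le c\,|S_u|$ is false in a compressed quadtree: when $\wspd(\{u,\overline{v}\})$ descends across a compressed edge to $\wspd(\{u,v\})$, the square $|S_v|$ can be an arbitrarily small fraction of $|S_{\overline{v}}|$, hence arbitrarily smaller than $|S_u|$. Observation~\ref{obs:parents-bigger} only gives $\max\{|S_u|,|S_v|\}\le\min\{|S_{\overline{u}}|,|S_{\overline{v}}|\}$, which leaves the ratio $|S_u|/|S_v|$ unbounded. Your attempted patch for compressed edges also fails: the compressed child $\child{w}$ may lie anywhere inside $S_w$, including right at the face nearest $u$, so $d(u,\child{w})$ need not grow; and since the well-separation test is on $\max\{|S_u|,|S_{\child{w}}|\}$, which can equal $|S_u|$, the pair $\{u,\child{w}\}$ can remain ill-separated (the recursion then descends on $u$'s side), so compressed edges can generate more than $O(1)$ further calls. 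You flag the comparable-size step as ``the hard part,'' but the issue is not one of care: the claim is simply untrue. The correct fix, as in the cited literature and as the paper itself does in Claim~\ref{clm:quad-squares}, is to replace $S_u,S_v$ by intermediate-scale bridging squares $R_u,R_v$ with $S_u\subseteq R_u\subseteq S_{\overline{u}}$, $S_v\subseteq R_v\subseteq S_{\overline{v}}$, of \emph{equal} diameter $r$ and mutual distance $\Theta(r/\eps)$, and to run the packing argument on these $R$-squares. A self-contained proof should establish and charge to such squares rather than to $S_u$ and $S_v$ directly.
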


    Note that the WSPD is not stored explicitly: we cannot afford to
    store all the pairs $\{U,V\}$, since their total size might be
    quadratic. Instead, $\wspd(T)$ contains pairs $\{u,v\}$, where $u$
    and $v$ are nodes in $T$, and $\{u,v\}$  is used to represent
    the pair $\{P_u, P_v\}$.

    Note that the algorithm computes the WSPD with respect to the squares
    $S_v$, instead of the bounding squares $B_v$. This makes no big difference,
    since for compressed quadtrees $B_v \subseteq S_v$, and for
    $c$-cluster quadtrees $B_v$ can be outside $S_v$ only for
    $c$-cluster nodes, resulting in a loss of at most a factor $1+1/c$
    in separation. 
    Referring to the pseudo-code in Algorithm~\ref{alg:wspd}, 
    we now prove three observations. The first observation says
    that the size of the squares under consideration strictly 
    decreases throughout the algorithm.

    \begin{observation} \label{obs:parents-bigger}
    Let $\{u,v\}$ be a pair of distinct nodes of $T$. If 
    \emph{$\wspd(\{u, v\})$}
    is executed by \emph{$\wspd$} run
    on $T$ (in particular, if \emph{$\{u,v\} \in \wspd(T)$}),
    then $\max\{|S_u|, |S_v| \} \leq \min \{|S_{\parent u}|, |S_{\parent v}|\}$.
    \end{observation}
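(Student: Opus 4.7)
The plan is to prove this by induction on the depth of the recursion at which the call $\wspd(\{u,v\})$ is made. I would first unfold the pseudo-code of Algorithm~\ref{alg:wspd} to see that a call $\wspd(\{u,v\})$ can arise in exactly two ways: either it is spawned from a call $\wspd(w)$ in step~2 of the single-argument variant, in which case $u$ and $v$ are two distinct children of the same quadtree node $w$; or it is spawned from a previous pair call $\wspd(\{u',v'\})$ in step~2 or step~3 of the two-argument variant, in which case one of $u,v$ equals one of $u',v'$ and the other is a child of the remaining node. The key quadtree property I will use throughout, which holds for regular, compressed, and $c$-cluster quadtrees alike, is that $|S_w| \leq |S_{\parent w}|$ for every non-root node $w$.

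For the base case, suppose $\wspd(\{u,v\})$ is invoked from $\wspd(w)$. Then $\parent u = \parent v = w$, and the quadtree property gives $\max\{|S_u|,|S_v|\} \leq |S_w| = \min\{|S_{\parent u}|,|S_{\parent v}|\}$, so the observation holds trivially.

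For the inductive step, assume the statement for $\wspd(\{u',v'\})$ and suppose this call spawns $\wspd(\{u,v\})$. By symmetry assume step~2 fires, so $|S_{u'}| \leq |S_{v'}|$, $u = u'$, and $v = w$ for some child $w$ of $v'$; in particular $\parent u = \parent{u'}$ and $\parent v = v'$. I then verify the four inequalities encoded in $\max \leq \min$: the two ``diagonal'' ones $|S_u| \leq |S_{\parent u}|$ and $|S_v| \leq |S_{\parent v}|$ are immediate from the quadtree property; the cross inequality $|S_u| \leq |S_{\parent v}|$ reduces to $|S_{u'}| \leq |S_{v'}|$, which is the WLOG hypothesis; and the remaining cross inequality $|S_v| \leq |S_{\parent u}|$ chains the quadtree property $|S_w| \leq |S_{v'}|$ with the inductive bound $|S_{v'}| \leq |S_{\parent{u'}}|$.

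The only real subtlety, and therefore the main thing I would be careful about, is making sure the quadtree property $|S_{\child w}| \leq |S_w|$ is valid in all three quadtree flavors considered in this paper, including compressed nodes, where the single child square is substantially smaller (by a factor of $a$) and possibly unaligned; since the inequality is only in one direction this causes no trouble. No other case analysis is required, and the observation follows.
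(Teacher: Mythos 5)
Your proof is correct and follows essentially the same route as the paper's: induction on the depth of the call stack, with the base case being $u,v$ siblings under a single-argument call $\wspd(w)$, and the inductive step using the fact that the caller only descends into the larger of the two squares, plus the inductive hypothesis to bound $|S_{\parent v}|$ by $|S_{\parent u}|$. The paper states this more compactly (bounding $\max\{|S_u|,|S_v|\}$ by $|S_{\parent v}|$ directly and then showing $|S_{\parent v}|=\min\{|S_{\parent u}|,|S_{\parent v}|\}$), but the underlying reasoning is identical to your four-inequality check.
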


    \begin{proof}
    We use induction on the depth of the call stack for
    $\wspd(\{u,v\})$. Initially, $u$ and $v$ are children 
    of the same node, and the statement holds.
    Furthermore, assuming that $\wspd(\{u,v\})$ is called by
    $\wspd(\{u, \parent v\})$ (and hence $|S_u| \leq |S_{\parent v}|$), we get
    $\max\{|S_u|,|S_v|\} \leq |S_{\parent v}| =
    \min\{|S_{\parent u}|, |S_{\parent v}|\}$,
    where the last equation follows by induction.
    \end{proof}

    The next observation states that the wspd-pairs reported by the
    algorithm are, in a sense, as high in the tree as possible.
    \begin{observation}\label{obs:parents-not-ws}
    If \emph{$\{u, v\} \in \wspd(T)$},
    then $\parent u$ and $\parent v$ are ill-separated.
    \end{observation}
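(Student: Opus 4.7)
The plan is to trace backward through the call stack that produced $\wspd(\{u,v\})$ and inspect the caller. Since $\{u,v\}$ was returned in step~1 of $\wspd(\{u,v\})$, the pair $\{S_u, S_v\}$ is $\eps$-well-separated. There are only two ways $\wspd(\{u,v\})$ can have been invoked: either (i) by $\wspd(w)$ where $u$ and $v$ are distinct children of a common node $w$, or (ii) by $\wspd(\{u',v'\})$ where, thanks to steps~2 and~3 of that routine, one of $u',v'$ is the parent of the corresponding element of $\{u,v\}$ and the pair $\{S_{u'}, S_{v'}\}$ was $\eps$-ill-separated.

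In case (i), $\parent u = \parent v = w$, so the statement is vacuous (or interpreted as trivially true, since $\{\parent u, \parent v\}$ is not even a bona fide pair of distinct nodes). In case (ii), assume without loss of generality that the caller is $\wspd(\{u, \parent v\})$, which entered step~2 because $|S_u| \le |S_{\parent v}|$ and because $\{S_u, S_{\parent v}\}$ is $\eps$-ill-separated, i.e.,
\[
\max\{|S_u|,\,|S_{\parent v}|\} \;=\; |S_{\parent v}| \;>\; \eps\, d(S_u, S_{\parent v}).
\]

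To conclude that $\{\parent u, \parent v\}$ is $\eps$-ill-separated, I use two monotonicity facts: $S_{\parent u} \supseteq S_u$, so $|S_{\parent u}| \ge |S_u|$ and $d(S_{\parent u}, S_{\parent v}) \le d(S_u, S_{\parent v})$. Combining with the inequality above,
\[
\max\{|S_{\parent u}|,\,|S_{\parent v}|\} \;\ge\; |S_{\parent v}| \;>\; \eps\, d(S_u, S_{\parent v}) \;\ge\; \eps\, d(S_{\parent u}, S_{\parent v}),
\]
which is precisely the ill-separation condition for $\{\parent u, \parent v\}$. The symmetric sub-case (caller $\wspd(\{\parent u, v\})$) is identical. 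The main (minor) obstacle is handling case (i) cleanly; beyond that the argument is a one-line application of the monotonicity of bounding-square size and inter-square distance under the parent operation, and does not require any inductive machinery beyond Observation~\ref{obs:parents-bigger} implicitly ensuring that $\parent u$ and $\parent v$ are well-defined and no larger than needed.
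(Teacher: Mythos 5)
Your proof is correct and follows essentially the same route as the paper's: handle the trivial case $\parent u = \parent v$, then observe that the caller must have been $\wspd(\{u, \parent v\})$ (WLOG), extract the ill-separation of $\{u, \parent v\}$ from the fact that step~2 was taken, and push the inequality up to $\{\parent u, \parent v\}$ via monotonicity of square size and inter-square distance under the parent operation.
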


    \begin{proof}
    If $\parent u = \parent v$, the claim is obvious. Otherwise, let us assume
    that $\wspd(\{u,v\})$ was  called by $\wspd(\{u, \parent v\})$. This means
    that $\{u, \parent v\}$ is ill-separated and
    $\max\{|S_u|, |S_{\parent v}|\} = |S_{\parent v}|$.
    Therefore, 
    $ 
    \max\{|S_{\parent u}|, |S_{\parent v}|\} \geq |S_{\parent v}| >
    \eps d(u,\parent v) \geq \eps d(\parent u, \parent v) 
    $,  
    and $\{\parent u, \parent v\}$ is ill-separated.
    \end{proof}

   The last claim shows that for each wspd-pair, we can find
   well-behaved boxes whose size is comparable to the distance
   between the point sets. In the following, this will be a useful tool
   for making volume arguments that bound the number of wspd-pairs to consider.
\begin{claim}\label{clm:quad-squares}
Let \emph{$\{u,v\} \in \wspd(T)$}.
Then there exist squares $R_u$ and $R_v$ such that
(i) $S_u \subseteq R_u \subseteq S_{\overline{u}}$ and
    $S_v \subseteq R_v \subseteq S_{\overline{v}}$;
(ii) $|R_u| =  |R_v|$;
and (iii) $|R_u|/2\eps \leq d(R_u,R_v) \leq 2|R_u|/\eps$.
\end{claim}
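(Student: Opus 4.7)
The plan is to take $R_u$ and $R_v$ to be axis-aligned squares of a common diameter $s$, tuned so that (i)-(iii) follow from the interplay between well-separatedness of $\{u,v\}$ and the ill-separatedness of the ancestor pairs traversed by $\wspd$. Writing $D := d(S_u, S_v)$, $M := \max\{|S_u|, |S_v|\}$, and $m_p := \min\{|S_{\parent u}|, |S_{\parent v}|\}$, I would set $s := \min(m_p, \eps D)$ and pick $R_u$ (resp.\ $R_v$) to be any axis-aligned square of diameter $s$ sandwiched between $S_u$ and $S_{\parent u}$ (resp.\ $S_v$ and $S_{\parent v}$).

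The quantitative crux is the bound $D < (2 + 1/\eps)\, m_p$, which I would prove by tracing the wspd recursion down to $\{u,v\}$. At some earlier step the pair had the form $\{\parent u, b\}$ for an ancestor $b$ of $v$, and the algorithm split $\parent u$, so $|S_{\parent u}| \geq |S_b|$ and, by ill-separatedness, $d(S_{\parent u}, S_b) < |S_{\parent u}|/\eps$. The triangle inequality through $S_{\parent u}$ and $S_b$, combined with $S_u \subseteq S_{\parent u}$ and $S_v \subseteq S_b$, gives $D \leq |S_{\parent u}| + d(S_{\parent u}, S_b) + |S_b| < (2 + 1/\eps)\,|S_{\parent u}|$. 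A symmetric trace on the $v$-side yields the analogous bound in terms of $|S_{\parent v}|$, and taking the smaller parent proves the crux.

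With the crux in hand, Observation~\ref{obs:parents-bigger} ($M \leq m_p$) and well-separatedness ($M \leq \eps D$) together place $s$ in the interval $[M, m_p]$, so suitable $R_u$ and $R_v$ exist, proving (i) and (ii). For the upper bound in (iii), I combine $d(R_u, R_v) \leq D$ with $D \leq 2s/\eps$, which is immediate when $s = \eps D$ (since then $2s/\eps = 2D$) and follows from the crux (for $\eps \leq 1/2$) when $s = m_p$. For the lower bound, the triangle inequality gives $d(R_u, R_v) \geq D - 2s$, since every point of $R_u$ lies within $s$ of the $S_u$-witness for $D$ and similarly on the $v$-side; combined with well-separatedness $D \geq M/\eps$, this yields $D - 2s \geq s/(2\eps)$ in both subcases provided $\eps$ is below an absolute constant.

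The main obstacle is the crux bound itself: the descent trace works only when $u$ (respectively $v$) entered the recursion as a proper descendant, not as one of the two initial siblings at the root of the pair recursion. These degenerate ``sibling'' configurations need to be excluded, which is done by noting that they would force $D \leq \max\{|S_u|, |S_v|\}$ (the descendant side is contained in a sibling square adjacent to the other side), contradicting well-separatedness for $\eps < 1$.
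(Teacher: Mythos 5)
Your proof is correct and follows essentially the same route as the paper's: both choose $R_u,R_v$ of the common diameter $\min\{\eps\,d(S_u,S_v),\,\min\{|S_{\parent u}|,|S_{\parent v}|\}\}$ and deduce (iii) from Observation~\ref{obs:parents-bigger} together with the ill-separatedness of a parent pair encountered along the recursion. The paper streamlines this by fixing WLOG which of $\parent u,\parent v$ was split last — by Observation~\ref{obs:parents-bigger} that parent is automatically the smaller one — so it only needs a single one-level trace (yielding the slightly tighter $D<(1+1/\eps)|S_{\parent v}|$) and never has to separately address your ``initial sibling'' degeneracy, but the underlying estimates are the same.
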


\begin{proof}
Suppose $\wspd(\{u,v\})$ is called by
$\wspd(\{u, \overline{v}\})$, the other case is symmetric.
Let us define $r \eqdef \min\{\eps d(u,v), |S_{\overline{v}}|\}$.
By Observation~\ref{obs:parents-bigger},
we have $|S_u|,|S_v| \leq |S_{\overline{v}}| \leq |S_{\overline{u}}|$.
 Since $\{u,v\}$
is well-separated, we have $\eps d(u,v) \geq \max\{|S_u|, |S_v|\}$.
Hence, $|S_{\overline{u}}|, |S_{\overline{v}}| \geq r \geq |S_u|, |S_v|$,
and we can pick squares $R_u$ and $R_v$ of diameter $r$ that fulfill (i).
Now (ii) holds by construction, and it remains to check (iii).
First, note that $d(R_u, R_v) \geq d(u,v) - 2r
\geq (1-2\eps) d(u,v) \geq r/2\eps$, for $\eps \leq 1/4$. This
proves the lower bound.
For the upper bound, observe that
$\eps d(u,v) \leq \eps(d(u,\overline{v})+|S_{\overline{v}}|)
\leq (1+\eps)|S_{\overline{v}}|$, because $\{u,\overline{v}\}$
is ill-separated. Thus, we have $\eps d(u,v)/2 \leq r$,
and $d(R_u, R_v) \leq d(u,v) \leq 2r/\eps$, as desired.
\end{proof}

\section{More on Quadtrees} \label {sec:qt}

  In this section, we describe a few more properties of the $c$-cluster
  trees and $c$-cluster quadtrees defined in Section~\ref{sec:cqt&cqt}, 
  and we prove that they are equivalent to the more standard compressed
  quadtrees (Theorem~\ref {thm:cluster-compressed-equiv}).
  Since most of the material is very technical, we encourage 
  the impatient reader to skip ahead to Section~\ref {sec:qt->dt}.

  \subsection {$c$-Cluster Quadtrees} \label {sec:ccqt}

    Krznaric and Levcopolous~\cite[Theorem~7]{KrznaricLe95} showed that a
    $c$-cluster tree can be computed in linear time from a Delaunay 
    triangulation.
    \begin{theorem}[Krznaric-Levcopolous]\label{thm:c-cluster-tree}
      Let $P$ be a planar $n$-point set.
      Given a constant $c \geq 1$ and $\DT(P)$, we can find a $c$-cluster
      tree $T_c$ for $P$ in $O(n)$ time and space on a pointer machine.
      \qed
    \end{theorem}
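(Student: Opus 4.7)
The plan is to build $T_c$ bottom-up from the Euclidean minimum spanning tree. Since $\emst(P) \subseteq \DT(P)$ and $\DT(P)$ is planar with $O(n)$ edges, I would first extract $\emst(P)$ from $\DT(P)$ in linear deterministic time on a pointer machine via the Cheriton--Tarjan linear-time MST algorithm for planar graphs.

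Next, I would establish the structural lemma that every proper $c$-cluster $U \subsetneq P$ induces a \emph{connected} subtree of $\emst(P)$, provided $c > 1$. Every EMST edge with both endpoints in $U$ has length at most $|B_U|$ (since the diameter of $U$ is at most $|B_U|$), whereas every EMST edge that crosses the cut $\{U, P \setminus U\}$ has length at least $d(U, P\setminus U) \geq c|B_U|$. If the restriction of $\emst(P)$ to $U$ were disconnected, picking representatives in two of its components and closing the EMST path between them into a cycle would exhibit two long cross-cut edges of length $\geq c|B_U|$, contradicting the MST cycle property, which forces the heaviest edge of any cycle to lie outside the MST.

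Armed with this lemma, I would execute a \boruvka-style contraction on $\emst(P)$. Start with one component per point, maintain the axis-aligned bounding square $B_C$ of each component $C$ via its four extremal coordinates, and in each round merge every $C$ with its cheapest outgoing EMST edge $e_C$. Whenever a component $C$ first satisfies $|e_C| \geq c|B_C|$, emit $C$ as a new node of $T_c$ whose children are the cluster nodes already emitted for the subcomponents amalgamated into $C$. By the laminar property of $c$-clusters noted in Section~\ref{sec:cqt&cqt}, combined with the subtree lemma above, this sweep produces exactly $T_c$.

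The main obstacle is bringing the total running time down from $O(n \log n)$ (a textbook \boruvka) to $O(n)$ on a pointer machine. The key observation is that contractions of $\emst(P)$ inside the planar $\DT(P)$ preserve planarity: after $i$ \boruvka\ phases the contracted multigraph has $O(n/2^i)$ vertices and, after eliminating parallel edges and self-loops, $O(n/2^i)$ edges. Summing the per-phase work over all phases produces a geometric series bounded by $O(n)$. Two technical subtleties remain: maintaining each $B_C$ in $O(1)$ amortized time per merge, and re-testing the cluster inequality $|e_C| \geq c|B_C|$ only when $B_C$ or $e_C$ actually changes. These are precisely the bookkeeping points at which the Krznaric--Levcopolous algorithm must exercise its care.
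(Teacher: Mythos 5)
The paper does not prove this statement; it is cited as a black box from Krznaric and Levcopoulos (their 1995 FSTTCS paper, whose title is indeed ``Computing hierarchies of clusters from the Euclidean minimum spanning tree in linear time''), so there is no in-paper proof to compare against. Your high-level plan---extract $\emst(P)$ from $\DT(P)$ in linear time, observe that a $c$-cluster induces a connected subtree of the EMST, and then build $T_c$ by a bottom-up contraction of the EMST---is the right spirit, and your structural lemma and its cycle-property proof are correct (note it needs $c>1$, not $c\geq 1$, but that is cosmetic).

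The gap is in the \boruvka\ mechanism. In a single \boruvka\ phase every component selects its cheapest outgoing edge and \emph{all} selected edges are contracted simultaneously; the selected edges can form long chains, so an entire sequence of nested $c$-clusters can collapse into one component within a single phase and never appear as an intermediate component at all. Concretely, take collinear points $a_1,\dots,a_n$ with $|a_ia_{i+1}|=c^{2(i-1)}$. Every prefix $\{a_1,\dots,a_k\}$, $2\le k\le n-1$, is a $c$-cluster, but in the very first \boruvka\ phase $a_1$ selects $a_1a_2$, $a_2$ selects $a_1a_2$, and each $a_i$ with $i\ge 3$ selects $a_{i-1}a_i$; every EMST edge is selected, the whole set contracts to a single component, and your test $|e_C|\ge c|B_C|$ is never applied to any prefix. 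So the \boruvka\ sweep, as you describe it, misses almost all of $T_c$. The obvious repair is to process contractions in increasing order of edge length (Kruskal-style), but then you must sort the EMST edges, which is $\Theta(n\log n)$ unless one does something cleverer; that ``something cleverer'' (processing the merges in length order without a global sort, by exploiting the bounded degree of the planar EMST and propagating local thresholds) is exactly the content of the Krznaric--Levcopoulos algorithm, not a bookkeeping footnote. Your geometric-series running-time argument is fine once the contraction schedule is correct, but as written the schedule itself is what is broken.

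Separately, if you do fall back to Kruskal-style processing you also need to be careful that the test $|e_C|\ge c|B_C|$ is applied to $C$ \emph{after} all strictly shorter EMST edges incident to $C$ have been contracted and \emph{before} the next longer one is; otherwise the cut property no longer identifies $|e_C|$ with $d(P_C,P\setminus P_C)$. This timing is automatic in Kruskal order but is precisely what the phase-parallel \boruvka\ contraction destroys.
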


    Here, we will actually use a more relaxed notion of $c$-cluster
    trees: let $c_1$, $c_2$ be two constants with
    $1 \leq c_1 \leq c_2$, and let $P$ be a planar $n$-point set.
    A \emph{$(c_1, c_2)$-cluster tree} $T_{(c_1, c_2)}$ is a rooted
    tree in which each inner node has at least two children and
    which has $n$ leaves, one for each point in $P$. 
    Each node $v \in T_{(c_1,c_2)}$ corresponds to a subset $P_v \subseteq P$
    in the natural way. Every node $v$ must fulfill
    two properties:
    (i) if $v$ is not the root, then
      $d(P_v, P \setminus P_v) \geq c_1|B_{P_v}|$; and
    (ii) if $P_v$ has a proper subset $Q \subset P_v$ with
       $d(Q, P \setminus Q) \geq c_2|B_Q|$, then
       there is a child $w$ of $v$ with $Q \subseteq P_w$.
    In other words, each node of $T_{(c_1, c_2)}$ corresponds to
    a $c_1$-cluster of $P$, and $T_{(c_1, c_2)}$ must have a node for every
    $c_2$-cluster of $P$.  Thus, the original $c$-cluster tree is also 
    a $(c,c)$-cluster tree. Our relaxed definition allows for some flexibility
    in the construction of $T_{(c_1, c_2)}$ while providing the same
    benefits as the original  $c$-cluster tree. Thus, outside this section
    we will be slightly sloppy and not distinguish between $c$-cluster trees
    and $(c, \Theta(c))$-cluster trees.

      As mentioned above, the tree $T_{(c_1, c_2)}$ is quite similar 
      to a well-separated pair decomposition: any two unrelated nodes
      in $T_{(c_1,c_2)}$ correspond to a $(1/c_1)$-well-separated pair. 
      However, $T_{(c_1, c_2)}$ has the huge drawback that it may contain
      nodes of unbounded degree.  For example, if
      the points in $P$ are arranged in a square grid,
      then $T_{(c_1,c_2)}$ consists of a single root with $n$ children.
      Nonetheless, $T_{(c_1,c_2)}$ is still useful, since it represents
      a decomposition of $P$ into well-behaved pieces. 
      As explained above,  the $(c_1, c_2)$-cluster quadtree
      $T$ is obtained by augmenting $T_{(c_1, c_2)}$
      with quadtree-like pieces to replace the nodes with many children.

      We will now prove some relevant properties of $(c_1,c_2)$-cluster
      quadtrees. 
      For a node $u$ of $T_{(c_1,c_2)}$, let $T^Q_u$ be the balanced regular 
      quadtree on the representative points of $u$'s children.
      The \emph{direct neighbors} of a square $S$ in $T_u^Q$ 
      are the $8$ squares of size $|S|$ that surround $S$.
      First, we recall how the balanced tree $T^Q_u$ is obtained: we start with
      a regular (uncompressed) quadtree $T'$ for the representative points.
      While $T'$ is not balanced, we take a leaf square $S$ of $T'$ that
      is adjacent to a leaf square of size less than $|S|/2$ and
      we split $S$ into four congruent child squares. The following
      theorem is well known.
     \begin {theorem} [Theorem~14.4 of \cite{deBergChvKrOv08}]
    \label {thm:balance}
      Let $T'$ be a quadtree with $m$ nodes. 
      The above procedure yields a balanced
      quadtree with $O(m)$ nodes, and it can be implemented
      to run in $O(m)$ time.
      \qed
    \end {theorem}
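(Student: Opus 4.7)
The plan is to prove both the size bound $|T|=O(m)$ and the running-time bound by a single charging argument that assigns every leaf of the balanced tree $T$ to some leaf of the original tree $T'$ and shows each such leaf receives only $O(1)$ charges. The structural fact driving everything is that a split happens only when a neighbor of size at most $|S|/2$ already exists; so every ``new'' square (one in $T\setminus T'$) has a reason that can be traced to a small leaf of $T'$.

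The first step is to formalize the witness. For each leaf $v$ of $T$ with $v\notin T'$, its parent $\parent v$ was split because some neighbor cell of $\parent v$ has size less than $|S_{\parent v}|/2$. Descending into that neighbor along the quadtree (always following a side containing a smaller child) reaches a leaf $L$ of $T'$ with $|S_L|\le |S_v|$ lying within $O(1)$ direct-neighbor slots of $v$. I would pick one such $L$ canonically (e.g., fixed priority over the $8$ neighbor slots and a fixed descent rule) and charge $v$ to $L$.

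The second step is to bound the multiplicity of charges on a fixed leaf $L\in T'$. At any size scale $s\ge |S_L|$, a new cell charging $L$ must occupy one of $O(1)$ aligned grid positions adjacent to $L$ at scale $s$ (because ``direct neighbor'' at scale $s$ covers only $O(1)$ positions around $L$). Moreover, as we move from $L$ outward, once we reach a scale where some intermediate new cell $v'$ already separates us from $L$, the canonical descent rule will assign subsequent charges to a cell inside $v'$, not back to $L$; hence only $O(1)$ scales contribute to $L$'s count. Summing over these $O(1)$ scales and $O(1)$ positions per scale yields $|T\setminus T'|=O(m)$.

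For the running time I would maintain a queue of leaves requiring inspection, keyed by level (coarsest first). Each split creates four children, and only the $O(1)$ same-level or coarser neighbors of the split cell can newly become out-of-balance; these are located in $O(1)$ amortized time by walking up to the nearest common ancestor in the quadtree and back down, using parent pointers augmented with side-neighbor links. Since the total number of splits equals $|T\setminus T'|=O(m)$ by the size bound, the total work is $O(m)$. The main obstacle, in my view, is making the multiplicity step in the size bound honest: a careless charging cascades through $\Theta(\log m)$ levels per leaf of $T'$, and the argument must genuinely exploit the canonical descent so that ``further'' scales are absorbed by intermediate new cells rather than recharged to the original small leaf.
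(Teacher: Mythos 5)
The paper does not prove this statement---it is cited as Theorem~14.4 of de~Berg et~al.---so the relevant comparison is with the standard textbook argument. That argument charges each \emph{split} performed during balancing to a node of the original tree $T'$ at a \emph{fixed depth offset} from the square being split (a $T'$-node at depth $i+1$ in the constant-size neighborhood of the depth-$i$ square, whose existence is the real content of the proof and is established by tracing the chain of triggers back into $T'$). Because both the depth and the grid position are pinned, each $T'$-node is charged $O(1)$ times, giving $O(m)$ splits and hence $O(m)$ nodes.

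Your scheme instead charges each new leaf $v\in T\setminus T'$ to a nearby \emph{leaf} $L$ of $T'$ with $|S_L|\le|S_v|$, and the multiplicity step is exactly where it breaks. Consider the staircase: $T'$ subdivides $[0,\tfrac12]$ down to depth $d$ by always recursing on the right child, so $L=[\tfrac12-2^{-d},\tfrac12]$ is a $T'$-leaf of size $2^{-d}$, while $[\tfrac12,1]$ is a $T'$-leaf of size $\tfrac12$ (the $2$D version just pads the second coordinate). Balancing then splits $[\tfrac12,1]$, $[\tfrac12,\tfrac34]$, $[\tfrac12,\tfrac58],\dots$ down to size $\Theta(2^{-d})$, each split triggered by $L$, and each of the $\Theta(d)$ new leaves so created traces its canonical witness---the small neighbor that caused its parent's split---back to the \emph{same} leaf $L$. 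Your escape clause, that an intermediate new cell $v'$ ``separating'' a coarser new leaf from $L$ will absorb the charge, cannot be made to work: $v'$ was produced by subdividing a single $T'$-leaf (here $[\tfrac12,1]$), so there is no $T'$-leaf strictly inside $v'$ for the descent to redirect to; the descent must exit $v'$ and lands on $L$ again. Here $m=\Theta(d)$, so your charging places $\Theta(m)$ charges on one $T'$-leaf---consistent with the theorem's conclusion, but not with the $O(1)$-per-leaf multiplicity your proof needs. The fix is to charge across \emph{position at a fixed depth}, as above, rather than to a fixed small leaf across many depths. Your running-time sketch (level-ordered queue plus neighbor links, $O(1)$ per split) is standard and is fine once the size bound is in hand.
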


    Let $v$ be a child of $u$ in $T_{(c_1,c_2)}$.
    The properties of the balanced quadtree $T_u^Q$ and the
    fact that the children of $u$ are mutually well-separated yield 
    the following observation.
    
      \begin {observation} \label {obs:close}
        If $c_1$ is large enough,
        at most four leaf squares of $T^Q_u$ contain points from $P_v$.
      \end {observation}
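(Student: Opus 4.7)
The claim reduces to showing that if $c_1$ is sufficiently large, then every leaf $\ell$ of $T^Q_u$ that meets $B_{P_v}$ has diameter at least $|B_{P_v}|$. Granting this, the conclusion follows from a standard geometric argument: the leaves of $T^Q_u$ are pairwise interior-disjoint axis-aligned squares each of diameter $\geq |B_{P_v}|$, so the projection of the square $B_{P_v}$ onto each coordinate axis meets at most two leaf projections (disjoint intervals each of length $\geq |B_{P_v}|$), and therefore $B_{P_v}$ itself meets at most $2 \cdot 2 = 4$ leaves. Since $P_v \subseteq B_{P_v}$, this immediately bounds the number of leaves containing a point of $P_v$.

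To establish the bound $|\ell| \geq |B_{P_v}|$, I would split on what $\ell$ contains, exploiting that every representative stored in $T^Q_u$ is either $v$'s representative or a point of $P \setminus P_v$; in the latter case the cluster property $d(P_v, P \setminus P_v) \geq c_1 |B_{P_v}|$ does most of the work. If $\ell$ contains the representative $r$ of some child $w \neq v$, then for any $x \in \ell \cap B_{P_v}$ we have $|r - x| \leq |\ell|$, and since $x$ lies within $|B_{P_v}|$ of some point of $P_v$, the cluster property yields $|\ell| \geq (c_1 - 1)|B_{P_v}|$. If $\ell$ contains $v$'s representative, I would argue that in the unbalanced regular quadtree the leaf of $v$'s representative already has diameter $\Omega(c_1 |B_{P_v}|)$, because any ancestor square that contains a second representative must be large enough to hold two points at mutual distance $\geq c_1 |B_{P_v}|$. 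If $\ell$ is empty, I would use the structural fact that in a balanced quadtree every leaf lies within distance $O(|\ell|)$ of some representative, which reduces this case to one of the preceding two with only a constant-factor loss.

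The main obstacle is the interaction with balancing in the last two cases: balancing may subdivide the leaf of $v$'s representative if a much smaller leaf is nearby, and one must check that this does not push $|\ell|$ below $|B_{P_v}|$. The clean observation is that the cluster property provides an isolation zone of radius $c_1|B_{P_v}|$ around $v$'s representative in which no other representative lies, so any chain of subdivisions propagating inward from outside this zone loses a factor of two in leaf size per step of the chain and can span only $O(\log c_1)$ levels before reaching the leaf of $v$; choosing $c_1$ sufficiently large therefore keeps all leaves meeting $B_{P_v}$ of diameter at least $|B_{P_v}|$, and the geometric argument of the first paragraph finishes the proof.
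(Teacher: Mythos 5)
Your high-level strategy matches the paper's: show that every leaf of $T^Q_u$ containing a point of $P_v$ has diameter $\Omega(c_1|B_v|)$, and then conclude that at most four such squares cover $P_v$. But the execution diverges, and there are real gaps in the steps you left informal.

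The paper's key move is a unified one: for any leaf $S$ with $S\cap P_v\neq\emptyset$, it identifies a nearby square of comparable size that contains \emph{at least two representative points} -- either $\parent S$ itself (if $S$ already exists in the unbalanced tree) or a direct neighbor of $\parent S$ (if $S$ was created by balancing, a structural fact borrowed from the proof of Theorem~14.4 in de Berg et al.). In both cases the same one-line cluster-separation argument then forces $|S|\geq c_1 d/4$. You instead case on the \emph{content} of $\ell$ (does it hold $w$'s rep, $v$'s rep, or nothing?) and defer all balancing concerns to a separate ``chain'' argument. That decomposition creates two problems. First, the empty-leaf case rests on the unproved assertion that ``in a balanced quadtree every leaf lies within distance $O(|\ell|)$ of some representative''; this is true and does follow from the de Berg et al. fact, but you neither state nor prove anything that delivers it, so the case is left open. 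Second, and more seriously, the cascade argument is stated with the wrong orientation: a chain of balancing subdivisions propagating inward from a small source \emph{grows} the sizes of the newly created leaves as it advances (roughly proportionally to the distance traveled from the source), it does not ``lose a factor of two in leaf size per step,'' and it can span far more than $O(\log c_1)$ levels if the source is tiny. The correct reason $v$'s leaf stays large is that any cascade source lies outside a radius-$c_1|B_v|$ isolation zone, and since a decreasing chain of balanced leaves of head size $L$ can cover distance only $O(L)$, one gets $L=\Omega(c_1|B_v|)$; your phrasing does not support this conclusion. Finally, the projection argument for the final count of four is incorrect as stated: the leaves are at different levels, so their axis-projections can overlap and ``at most two intervals per axis'' fails. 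The standard fix is to pass to the grid at the level of the smallest leaf $S^*$ meeting $P_v$: $B_{P_v}$ meets at most four cells of that grid, and each leaf meeting $P_v$ contains at least one of them.
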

      
      \begin {proof}
        Let $d \eqdef |B_v|$ be the diameter of the bounding square for 
	$P_v$. By definition, $P_v$ is a $c_1$-cluster,
        so the distance from any point in $P_v$ to any point
        in $P \setminus P_v$ is at least $c_1d$.
	Suppose that $S$ is a leaf square of $T^Q_u$ with
	$S \cap P_v \neq \emptyset$, and let $\overline{S}$ be the parent of $S$.

	There are two possible reasons for the creation of $S$:
	either $S$ is part of the original regular quadtree for
	the representative points, or $S$ is generated during the
	balancing procedure.
	 In the former case, 
	$\overline{S}$ contains at least two representative points.  
	Thus, since in $\overline{S}$ there is a point from $P_v$ and 
	a point from $P \setminus P_v$,
	we have $|S| \geq c_1d/2$. In the latter case, 
	$\overline{S}$ must be a direct neighbor of a square with
	at least two representative points 
	(see~\cite[Proof of Theorem~14.4]{deBergChvKrOv08}).
	Therefore, since $\overline{S}$  contains
	a point from $P_v$ and has a direct neighbor with a point
	from $P \setminus P_v$, the diameter of $S$ is at least
	$c_1d/4$. Either way, we certainly have 
	$|S| \geq c_1d/4$.

        Now if $c_1 \geq 8$, then $c_1d/4 \geq 2d$, so the side length
	of every leaf square $S$ that intersects $P_v$ is strictly larger than
	$d$. Thus, $P_v$ can be covered by at most $4$ such squares,
	and the claim follows. 
      \end{proof}

      To see that $(c_1, c_2)$-cluster quadtrees have linear size, we need 
      a property that is (somewhat implicitly)
      shown in~\cite[Section~4.3]{KrznaricLe98}.

      \begin{lem}\label{lem:c-cluster-QT}
      If $u$ has $m$ children $v_1$, $v_2$, $\ldots$, $v_m$ in $T_c$, 
      then $T^Q_u$ has $O(m)$ nodes.
      \end{lem}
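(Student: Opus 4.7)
The plan is to bound the number of nodes of the unbalanced regular quadtree $T'$ built on the $m$ representative points, and then apply Theorem~\ref{thm:balance} to pass to the balanced version $T^Q_u$. At most $m-1$ internal nodes of $T'$ are \emph{branching} (i.e., have two or more non-empty quadrants), so it suffices to bound every maximal single-child chain by $O(1)$ length. I would fix such a chain $w_1\to\cdots\to w_\ell$ with $w_\ell$ branching, note $|S_{w_\ell}|=|S_{w_1}|/2^{\ell-1}$, and collect the $k\geq 2$ representatives $p_{i_1},\ldots,p_{i_k}$ lying in $S_{w_\ell}$ into the set $P'\eqdef\bigcup_j P_{v_{i_j}}\subsetneq P_u$.

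The key claim is that for $\ell$ larger than a constant depending only on $c_1,c_2$, the set $P'$ is a $c_2$-cluster; since $P'$ is a proper subset of $P_u$ and a union of at least two children of $u$, this contradicts property~(ii) of the $(c_1,c_2)$-cluster tree. Two estimates are needed. For the upper bound on $|B_{P'}|$, the $c_1$-cluster property applied to any pair of $p_{i_j}$'s (which lie within $\sqrt{2}\,|S_{w_\ell}|$ of each other) gives $|B_{v_{i_j}}|\leq\sqrt{2}\,|S_{w_\ell}|/c_1$ for every $j$, hence $|B_{P'}|\leq(1+O(1/c_1))|S_{w_\ell}|$. For the lower bound on $d(P',P\setminus P')$, I would exhibit a witness representative $p_{i'}\notin S_{w_1}$ at distance $\Omega(|S_{w_1}|)$ from $S_{w_\ell}$ using the branching structure above $w_1$, and propagate the bound from $p_{i'}$ to all of $P_{v_{i'}}$ via the $c_1$-separation of $v_{i'}$, yielding $d(P',P\setminus P')=\Omega(2^\ell|S_{w_\ell}|)$. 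Combined with the upper bound, $\ell>O(\log c_2)$ forces $d(P',P\setminus P')\geq c_2|B_{P'}|$, producing the desired contradiction.

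The main obstacle is producing the witness at the right scale: a single-child chain can push $S_{w_\ell}$ into a corner of $S_{w_1}$, so the immediate sibling of $w_1$ provided by the branching parent may have its representative close to $S_{w_\ell}$. I would resolve this by ascending to a higher branching ancestor of $w_\ell$ whose structure guarantees a representative in a quadrant far from the descent corner of $S_{w_\ell}$, and then use the $c_1$-cluster property to prevent nearby ``escapees'' from small-diameter outside clusters. Once the chain length bound is established we obtain $|T'|=O(m)$, and Theorem~\ref{thm:balance} yields $|T^Q_u|=O(m)$.
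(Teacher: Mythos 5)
There is a genuine gap, and it is in the very first reduction: it is \emph{not} true that every maximal single-child chain of $T'$ has length $O(1)$ (or $O(\log c_2)$). Consequently the plan of bounding chains individually cannot yield $|T'|=O(m)$, since there can be $\Theta(m)$ maximal chains. The key claim---that for $\ell$ large the set $P'$ of clusters whose representatives lie in $S_{w_\ell}$ is a $c_2$-cluster---is false, and the ``corner'' difficulty you flag is not a corner case but the generic situation. Your witness argument lower-bounds $d(P',P\setminus P')$ by exhibiting one \emph{far} representative, but $d(P',P\setminus P')$ is an infimum over all of $P\setminus P'$, and a single far witness says nothing about whether some other outside cluster is close. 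A concrete instance (with $c_2=4$): representatives $A_1=(\tfrac12+2^{-L-1},\tfrac12+2^{-L-1})$ and $A_2=(\tfrac12+2^{-L},\tfrac12+2^{-L})$, representatives $E_k=(\tfrac12-4^{k-1}2^{-L},\tfrac12-4^{k-1}2^{-L})$ for $k=1,\dots,K$ with $4^{K-1}2^{-L}\approx\tfrac12$, and one representative $F$ near $(1,1)$. One checks directly that no proper union of two or more of these is a $c_2$-cluster (the ratios are bounded by $3$), so this is a valid node of a $(c_1,c_2)$-cluster tree with $m=K+3=\Theta(L)$ children. In the quadtree with base square $\approx[0,1]^2$, the cell $[\tfrac12,\tfrac34]^2$ is the top of a maximal single-child chain of length $\Theta(L)=\Theta(m)$ descending to the cell of size $\Theta(2^{-L})$ holding $A_1,A_2$. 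Here $P'=P_{A_1}\cup P_{A_2}$ has $d(P',P_{E_1})\approx 3|B_{P'}|<c_2|B_{P'}|$, so $P'$ is never a $c_2$-cluster and no contradiction appears, despite $\ell=\Theta(m)$. Ascending to a higher branching ancestor does not help: the descent corner is the \emph{center} of the base square, every quadrant around it carries nearby representatives, and the $c_1$-separation of the escapee clusters only makes their bounding boxes small---it does not push their representatives away from $S_{w_\ell}$.

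The paper's proof is structurally different and is designed precisely to tolerate long chains. It never bounds chain lengths. Instead, it defines a square to be \emph{merged} if it has at least two full children (there are $O(m)$ of these), and proves (Claim~\ref{clm:merged_ancestor}) that every non-merged full square $S$ with at least two representatives has, within $O(\log c_2)$ levels, an ancestor that is merged or has a merged \emph{direct neighbor}. The proof of that claim performs exactly the iterative absorption you would need: since $S\cap P_u$ is not a $c_2$-cluster, there is a close outside representative, giving a full square nearby; repeating at most a constant number of times, two tracked squares of equal size acquire a common parent, which is merged. Each square in a long chain is then charged to a \emph{different} merged square sitting just across the descent corner at a comparable scale (in the example above, the merged squares containing the $E_k$'s), and each merged square is charged $O(1)$ times. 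This local charging handles chains of length $\Theta(m)$, whereas a global per-chain $O(1)$ bound---which your proof requires---simply does not hold.
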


      \begin{proof}
      Note that the total number of nodes in $T_u^Q$ is
      proportional to the number of 
      squares that contain at least two representative points.
      Indeed, the number of squares in a balanced regular quadtree
      is proportional to the number of squares in the corresponding 
      unbalanced regular quadtree (Theorem~\ref{thm:balance}), and in that tree 
      the squares with
      at least two points correspond to the internal nodes, each
      of which has exactly four children.
      Thus, it suffices to 
      show that the number of squares in $T_u^Q$ with at least two representative 
      points is $O(m)$.

      Call a square $S$ of $T_u^Q$ \emph{full} if $S$ contains a representative
      point. 
      A full square $S \in T_u^Q$ is called \emph{merged} if it has
      at least two full children.  There are
      $O(m)$ merged squares, so 
      we only need to bound the number of non-merged full squares
      with at least two points. 
      These squares can be charged to the merged squares,
      using the following claim.
     
      \begin{claim}\label{clm:merged_ancestor}
       There exists a constant $\beta$ (depending on $c_2$) such
       that the following holds:
      for any full square $S$ with at least two representative points,
      one of the $\beta$ closest ancestors of $S$
      in $T_u^Q$ (possibly $S$ itself) is either merged or 
      has a merged direct neighbor.
      \end{claim}
      \begin{proof}
      Let $S$ be a non-merged full square with at least two representative points.
      Since $S$ intersects more than one $P_{v_i}$, the definition of
      $T_{(c_1,c_2)}$ implies that the set 
      $S \cap P_u$ is not a $c_2$-cluster. Thus, 
      $P_u \setminus S$ contains a point at distance at most $c_2|S|$ from $S$.
      Hence, $S$ has an ancestor $S'$ in $T_u^Q$ that is at most
      $O(\log c_2)$ levels above $S$ and that has a full direct 
      neighbor $S'' \not= S'$ (note that $T_u^Q$ is balanced, so 
      $S''$ actually belongs to $T_u^Q$). 

      We repeat the argument: since
      $(S' \cup S'') \cap P_u$ is not a $c_2$-cluster, there is a point in
      $P_u \setminus (S' \cup S'')$ at distance
      at most $c_2|S' \cup S''| \leq 2c_2|S'|$ from $S' \cup S''$. Thus, if we go up
      $O(\log c_2)$ levels in $T_u^Q$, we either encounter a common ancestor
      of $S'$ and $S''$, in which case we are done, or we have found
      a set $\mathcal{S}$ of
      three full squares of $T_u^Q$ such that (i) one square in
      $\mathcal{S}$ is an ancestor of $S$; (ii) the squares in $\mathcal{S}$
      have equal size;  and (iii) the squares in $\mathcal{S}$ 
      form a (topologically) connected set.

      We keep repeating the argument while going up the tree. 
      In each step, if we do not encounter a common ancestor of
      at least two squares in $\mathcal{S}$,
      we can add one more full square to $\mathcal{S}$.
      However, as soon as we have five squares of equal size
      that form a connected set, at least two of them have a common
      parent. Thus, the process stops after at most two more
      iterations. Furthermore, since $\mathcal{S}$ is connected, once at least two 
      squares in $\mathcal{S}$ have a common parent, the parents of the other
      squares must be direct neighbors of that parent. Hence, 
      we found an ancestor of $S$ that is
      only a constant number of levels above $S$ and that is merged
      or has a merged direct neighbor, as desired.
      \end{proof}
      
      Now we use Claim~\ref{clm:merged_ancestor} to charge
      each non-merged full node with at least two representative points
      to a merged node.
      Each merged node is charged at most 
      $9 \cdot 4^\beta = O(1)$ times, and Lemma~\ref{lem:c-cluster-QT} 
      follows.
      \end{proof}

  The proof of Lemma~\ref{lem:c-cluster-QT} implies the following, slightly
  stronger claim: Recall that $T^Q_u$ was constructed by building
  a regular quadtree for the representative points for $u$'s children,
  followed by a balancing step. Now, suppose that before the balancing
  step we subdivide each leaf that contains a representative
  point for a $c$-cluster $C$ until it has size at 
  most $\alpha d(C, P \setminus C)$, for some constant $\alpha > 0$ (if the 
  leaf is smaller than $\alpha d(C, P \setminus C)$, we do nothing).
  Call the tree that results after the balancing step $T_2$.
  \begin{cor}\label{cor:c-cluster-QT-extended}
      The tree $T_2$ has $O(m)$ nodes.
  \end{cor}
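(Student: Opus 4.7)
The plan is to extend the argument of Lemma~\ref{lem:c-cluster-QT} to absorb the new single-representative subdivisions. By Theorem~\ref{thm:balance}, $|T_2| = O(|T'_2|)$, where $T'_2$ is the tree after the new subdivisions but before balancing. Since every internal node of $T'_2$ has four children, $|T'_2|$ is proportional to its internal-node count, which splits into (a) squares with at least two representatives, inherited from $T'$ and bounded by $O(m)$ via the proof of Lemma~\ref{lem:c-cluster-QT}, and (b) single-representative nodes on the chains added by the new subdivision. For each child $C$ of $u$ with representative $r_C$, the subdivision appends a chain of length $k_C \eqdef \max\{0,\lceil \log_2(|\ell_C|/(\alpha\, d(C, P\setminus C)))\rceil\}$ below the leaf $\ell_C$ of $T'$ containing $r_C$. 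It therefore suffices to prove $\sum_C k_C = O(m)$, for which I would show $k_C = O(1)$ for each $C$.

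Since $\alpha$ is constant, $k_C = O(1)$ reduces to $|\ell_C| = O(d(C, P\setminus C))$ with a hidden constant depending only on $c_1$ and $c_2$. The parent $\overline{\ell_C}$ of $\ell_C$ in $T'$ is necessarily merged, since its $\geq 2$ representatives cannot all sit in $\ell_C$, so there is another representative $r'$ in a sibling of $\ell_C$. As $r' \in P\setminus C$, the lower bound $d(C, P\setminus C) \leq d(r_C, r') \leq \sqrt{2}\,|\overline{\ell_C}|$ is immediate. For the matching upper bound, the pair $\{r_C, r'\}$ is a proper subset of $P_u$ straddling two distinct children of $u$, so property~(ii) of the $(c_1, c_2)$-cluster tree forbids it from being a $c_2$-cluster; this means some point of $P$ lies within $c_2\, d(r_C, r')$ of the pair. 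Chasing this constraint through the quadtree, in the same style as the proof of Claim~\ref{clm:merged_ancestor}, yields $|\overline{\ell_C}| = O(c_2)\cdot d(r_C, r')$. Finally, by choosing $r'$ to be a representative of the cluster $C'$ realising (up to constants) the minimum distance to $C$, the $c_1$-cluster property gives $d(r_C, r') \leq |B_C| + d(C, C') + |B_{C'}| \leq (1 + 2/c_1)\, d(C, P\setminus C) = O(d(C, P\setminus C))$, completing the bound $|\ell_C| = O(d(C, P\setminus C))$.

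The main obstacle is the geometric step $|\overline{\ell_C}| = O(c_2)\cdot d(r_C, r')$: converting the non-$c_2$-cluster condition on $\{r_C, r'\}$ into a quantitative bound on the enclosing quadtree box requires essentially the same merged-ancestor tracking developed in the proof of Claim~\ref{clm:merged_ancestor}. Once that bound is in hand, $k_C = O(\log(1/\alpha)) = O(1)$, hence $\sum_C k_C = O(m)$, $|T'_2| = O(m)$, and therefore $|T_2| = O(m)$.
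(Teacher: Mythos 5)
Your plan is to show $\sum_C k_C = O(m)$ by proving $k_C = O(1)$ for each cluster $C$, and the step you yourself flag as ``the main obstacle'' --- namely $|\overline{\ell_C}| = O(c_2)\cdot d(r_C,r')$, equivalently $|\ell_C| = O(d(C,P\setminus C))$ --- is in fact false, not merely delicate. The non-$c_2$-cluster condition on $\{r_C,r'\}$ only guarantees that another point of $P_u$ is nearby; it says nothing about how the representatives sit relative to the quadtree grid, and it is the grid that determines the size of $\ell_C$. If $r_C$ and $r'$ lie just across a high-level grid line, the (unbalanced) leaf $\ell_C$ can be huge compared to $d(r_C,r')$.

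Concretely, take singleton clusters with $r_1=(1/2-\epsilon,\,1/2-\epsilon)$ and $r_2=(1/2+\epsilon,\,1/2-\epsilon)$ straddling the central grid line of the base square $[0,1]^2$, and let $r_3,\dots,r_m$ chain away in the lower-right quadrant, each placed at distance just under $c_2$ times the current diameter from the previously placed representatives, so that no proper union of children is a $c_2$-cluster. The diameter then grows by a constant factor (roughly $1+c_2$) per step, so fitting $m$ representatives forces $\epsilon = c_2^{-\Theta(m)}$. In the unbalanced regular quadtree, $r_1$ is alone in $[0,1/2]^2$, so $|\ell_{C_1}|=\Theta(1)$ while $d(C_1,P\setminus C_1)=2\epsilon=c_2^{-\Theta(m)}$; hence $k_{C_1}=\Theta(m)$, not $O(1)$. (The global bound $\sum_C k_C = O(m)$ does still hold in this instance --- only this one chain is long --- but your per-cluster argument cannot establish it.)

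This is precisely why the paper's proof does \emph{not} try to bound each $k_C$: it charges each added square to a merged square found a constant ($O(\log(1/\alpha)+\log c_2)$) number of levels above it, via the same chaining used for Claim~\ref{clm:merged_ancestor}. Squares at different depths of one long chain below $\ell_C$ then charge merged squares at correspondingly different scales --- and such merged squares exist, since the other representatives chain away geometrically and create merged squares across that whole range of levels --- so each merged square absorbs only $O(1)$ charges. Your decomposition of the internal nodes into ``$\ge 2$-representative squares'' plus ``chain squares'' matches the paper, but the chain squares must be accounted for by this per-square charging, not by a per-cluster constant.
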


  \begin{proof}
    We only need to worry about the additional squares created during the
    subdivision of the leaves. If we take such a square and go up
    at most $\log(1/\alpha)$ levels in the tree, we get a square with a direct
    neighbor that contains a point from another cluster. Now the 
    argument from the proof of Lemma~\ref{lem:c-cluster-QT} applies
    and we can charge the additional squares to merged squares, as before.
  \end{proof}
  \subsection{Balancing and Shifting Compressed Quadtrees}
    \label{sec:shiftbalance}

\tweeplaatjes {shift-plain} {shift-comp}
{
  (a) A regular quadtree on a set of $8$ points.
  (b) A slight shift of the base square may cause many new compressed nodes in the quadtree.
}

    In this section, we show that it is possible to ``shift'' a quadtree; that is, given a compressed quadtree on a set of points $P$ with base square $R$, to compute another compressed quadtree on $P$ with a base square that is similar to $R$, in linear time.
    The main difficulty lies in the fact that the clusters in the two quadtrees can be very different, as illustrated in Figure~\ref {fig:shift-plain+shift-comp}.

    \begin {theorem} \label {thm:shiftbalance}
      Suppose $a$ is a sufficiently large constant and
      $P$ a planar $n$-point set.
      Furthermore, let $T$ be an $a$-compressed quadtree
      for $P$ with base square $R$, and let $S$ be a square with
      $S \supseteq P$ and $|S| = \Theta(|R|)$. 
      Then we can construct in $O(m)$ time a balanced $a$-compressed quadtree $T'$
      for $P$ with base square $S$ and with $O(m)$ nodes.
    \end {theorem}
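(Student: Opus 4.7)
The plan is to separate $T$ into its shift-invariant part (the underlying cluster structure of $P$) and its shift-dependent part (the local regular-quadtree pieces), rebuild the latter with respect to $S$, and reassemble. Throughout, choose constants $c_1 = \Theta(a)$ and $c_2 = \Theta(c_1)$, with $a$ taken large enough to satisfy the inequalities appearing in the Lemmas of Section~\ref{sec:ccqt}.

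First, extract a $(c_1, c_2)$-cluster tree $T_c$ for $P$ from $T$ in a single bottom-up pass. Every compressed node $v$ of $T$ directly certifies a $c_1$-cluster: by $a$-compression, $|S_{\underline{v}}| \leq |S_v|/a$, and by definition no point of $P$ lies in $S_v \setminus S_{\underline{v}}$, so for $a$ large enough $P_{\underline v}$ satisfies the cluster condition with room to spare. Intermediate ``near-clusters'' that arise between successive compressed nodes of $T$ are collected by merging sibling subtrees whose joint bounding box is small compared with the parent square. This yields $T_c$ with $O(m)$ nodes in $O(m)$ time.

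Second, for every internal node $u$ of $T_c$ with children $v_1, \dots, v_{m_u}$, build a balanced regular quadtree $T_u^Q$ on one representative point per child cluster. For the root of $T_c$ the base square is $S$; for every other $u$ the base square is an axis-aligned square of side $\Theta(|B_u|)$ containing $B_u$. Stop refinement inside any cell whose side has fallen below $\Theta(d(P_{v_i}, P \setminus P_{v_i}))$ for some contained cluster $v_i$, then balance using Theorem~\ref{thm:balance}. By Lemma~\ref{lem:c-cluster-QT} together with Corollary~\ref{cor:c-cluster-QT-extended}, $|T_u^Q| = O(m_u)$; summed over $u$ this is $O(m)$.

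Third, splice the pieces together according to $T_c$: each leaf of $T_u^Q$ that hosts the representative of child $v_i$ becomes the square $S_v$ of a new compressed node $v$, whose child is the root of the (recursively shifted) quadtree for $v_i$. With $c_1 \geq 2a$, the truncation rule in Step~2 ensures $|S_{\underline{v}}|/|S_v| = O(1/c_1) \leq 1/a$, so the result is an $a$-compressed quadtree with base square $S$; balance is preserved across compressed edges by the definition of balance. The main obstacle is carrying out Step~2 in $O(m_u)$ time per cluster on a pointer machine, since a naive regular quadtree on $m_u$ arbitrary points can have depth $\omega(m_u)$. We overcome this via the truncation rule above combined with reuse of the pointer information already present in $T$ to locate each representative's leaf in amortized $O(1)$, so that no floor or bit-reversal operations outside the real RAM are needed.
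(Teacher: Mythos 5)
Your approach is genuinely different from the paper's, but it has two substantive gaps that both trace back to the same difficulty—local grid misalignment—which is exactly what Theorem~\ref{thm:shiftbalance} is needed to resolve in the first place.

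First, the claim that ``every compressed node $v$ of $T$ directly certifies a $c_1$-cluster'' is false. The definition of a compressed node constrains only $|S_{\child v}| \leq |S_v|/a$ and $P \cap (S_v \setminus S_{\child v}) = \emptyset$; it places no constraint on \emph{where} $S_{\child v}$ sits inside $S_v$. If $S_{\child v}$ is flush against the boundary of $S_v$, a point of $P \setminus P_v$ can sit arbitrarily close to that boundary on the outside, at distance comparable to (or smaller than) $|B_{P_v}|$. Concretely, take $P = \{(1-\eps, 0.3),\ (1-2\eps, 0.3),\ (1+\eps, 0.3)\}$ with base square $[0,2]^2$: the first two points can be collapsed into a compressed child of $[0.5,1)\times[0,0.5)$ with $|B| = \Theta(\eps)$, yet $d(P_v, P \setminus P_v) = 2\eps$, so $d/|B| = O(1) \ll c_1$. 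Your ``merging sibling subtrees whose joint bounding box is small compared with the parent square'' rule suffers from the same boundary problem and is, in any case, too underspecified to verify.

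Second, and more fundamentally, the splice in Step~3 is circular. A child cluster $v_i$ of $u$ in $T_c$ can have its bounding square $B_{v_i}$ straddle a cell boundary (indeed up to four cells, cf.\ Observation~\ref{obs:close}) of the freshly built $T_u^Q$, so it cannot simply become a compressed child of ``the leaf hosting its representative.'' To hang $T_{v_i}^Q$ below $T_u^Q$ you must rebuild $T_{v_i}^Q$ with a base square aligned to the cells of $T_u^Q$—but that is a shifting operation, i.e.\ precisely the theorem you are proving. This is not a cosmetic issue: the paper's own proof of Lemma~\ref{lem:c-cluster->compressed} handles exactly this case by invoking Corollary~\ref{cor:qt-shift}, which is derived \emph{from} Theorem~\ref{thm:shiftbalance}. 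Phrasing it as ``recursively shifted'' does not resolve the dependence; you would be calling the shift algorithm on arbitrarily misaligned squares as a subroutine in its own correctness proof, without a decreasing measure that makes the recursion well-founded. The paper instead proves Theorem~\ref{thm:shiftbalance} without any reference to $c$-clusters: it builds $T'$ top-down from the new base square $S$, maintains for each tentative $T'$-square a constant-size set of associated $T$-squares for fast point location, handles compressed children via a secondary stage and a carefully chosen recursion base, and bounds the total work by a charging argument (Lemmas~\ref{lem:main-body} and \ref{lem:secondary-bound}). Only \emph{after} this is established does the paper deduce the compressed/$c$-cluster equivalence.
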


    The idea is to construct $T'$ in the traditional way through
    repeated subdivision of the base square $S$, while using the information
    provided by $T$ in order to speed up the point location. 
    We will use the terms \emph{$T$-square} and \emph{$T'$-square} to
    distinguish the squares in the two trees.
    During the subdivision process, we maintain the partial tree $T'$, 
    and for each square $S'$ of $T'$ 
    we keep track of the $T$-squares that have similar size
    as $S'$ and that intersect $S'$ (in an associated set). 
    We call the leaves of the current partial tree the
    \emph{frontier} of $T'$. In each step, we
    pick a frontier $T'$-square and split it, until we have reached
    a valid quadtree for $P$. 
    We need to be careful in order to keep $T'$ 
    balanced and in order to deal with compressed nodes. The former problem
    is handled by starting a cascading split operation as soon as 
    a single split makes $T'$ unbalanced.
    For the latter problem, we would like to treat the compressed children
      in the same way as the points in $P$, and handle them
      later recursively. However, there is a problem: during the
      balancing procedure, it may happen that a compressed child becomes
      too large for its parent square and should be part of the regular
      tree. In order to deal with this, we must keep track of the compressed
      children in the associated sets of the $T'$-squares. When we detect
      that a compressed child has become too large for its parent, we treat
      it like a regular square. Once we are done, we recurse on the 
      remaining compressed children. Through a charging scheme, we can show
      that the overall work is linear in the size of $T$.
      The following paragraphs describe the individual steps of the algorithm 
      in more detail.

   \paragraph{Initialization and Data Structures.}
      We obtain from  $S$ a grid with squares of size in
       $(|R|/2, |R|]$, either by repeatedly subdividing 
      $S$, if $|S| > |R|$; or by repeatedly doubling $S$,
      if $|S| \leq |R|/2$. Since $|S| = \Theta(|R|)$, this requires a constant
      number of steps. Then we determine
      the $T'$-squares $S_1', \dots, S_k'$ of that grid that intersect $R$ 
      (note that $k \leq 9$). Our algorithm maintains the 
      following data structures:
      (i) a list $L$ of \emph{active} $T'$-squares; and (ii) for each 
      $T'$-square $S'$ a list $\as(S')$ of \emph{associated} $T$-squares. 
      We will maintain the invariant that $\as(S')$ contains the smallest 
      $T$-squares that have size at least $|S'|$  and that intersect $S'$, as
      well as any compressed children that are contained in such a
      $T$-square and that intersect $S'$. This invariant implies
      that each $S'$ has $O(1)$ associated squares. We call
      a $T'$-square $S'$ \emph{active} if 
      $\as(S')$ contains a $T$-square
      of size in $[|S'|, 2|S'|)$ or a compressed child of size in 
      $[|S'|/2^{2a}, |S'|)$. 
      Initially, we set $L \eqdef \{S_1', \dots, S_k'\}$ 
      and $\as(S_1') = \as(S_2') = \dots = \as(S_k') = \{R\}$,
      fulfilling the invariant.
  
    \paragraph{The Split Operation.}
      The basic operation of our algorithm is the \emph{split}.
      A split takes a $T'$-square $S'$ and subdivides it into
      four children $S'_1, \dots, S'_4$. Then it computes
      the associated sets $\as(S'_1), \dots, \as(S'_4)$ as follows.
      For each $i=1,\dots,4$, we intersect 
      $S'_i$ with all $T$-squares in $\as(S')$, and we put those $T$-squares 
      into $\as(S_i')$ that have non-empty intersection with $S'_i$.
      Then we replace each $T$-square in $\as(S')$ that is neither a 
      leaf, nor a compressed node, nor a compressed child by those of its
      children that have non-empty intersection with $S_i'$. Finally, we remove
      from $\as(S_i')$ those compressed nodes whose compressed children have
      size at least $|S_i'|$ and intersect $S_i'$. Having determined
      $\as(S'_i)$, we use it to check whether $S_i'$ is active. If so, we add 
      it to $L$. The split operation maintains the invariant
      about the associated sets, and it takes constant time.
      
    \paragraph{Main Body and Point-Location.}
      We now describe the main body of our algorithm. It consists of 
      \emph{phases}. In each phase, we remove a $T'$-square 
      $S'$ from $L$. We perform a split operation on $S'$ as
      described above. Then, we start the \emph{balancing procedure}. 
      For this, we check the four $T'$-squares in the
      current frontier that are directly above, below, to the left 
      and to the right of $S'$ to see whether any of 
      them have size $2|S'|$.
      We put each such $T'$-square into a queue $Q$. Then, while $Q$
      is not empty, we remove a square $N'$ from $Q$ and perform a
      split operation on it (note that this may create new active
      squares). Furthermore, if $N'$ is in $L$, 
      we remove it from $L$.  Finally, we consider the
      $T'$-squares of the current frontier directly above, below, to 
      the left and to the right
      of $N'$. If any of them have size $2|N'|$ and are not in $Q$ yet, we 
      append them to $Q$ and continue. The balancing procedure, and 
      hence the phase, ends once $Q$ is empty.

      We continue this process until $L$ is empty. Next, we do
       \emph{point-location}. Let $S'$ be a
      $T'$-square of the current frontier. Since $L$ is empty, 
      $S'$ is associated with $O(1)$ $T$-squares, all of 
      which are either leaves or compressed nodes or 
      compressed children in $T$. For each $T$-leaf that intersects 
      $S'$, we determine whether it contains a point that lies 
      in $S'$. In the end, we have a set of
      at most four points from $P$  or compressed children of $T$
      that intersect $S'$, and we call this set the \emph{secondary} associated
      set for $S'$, denoted by $\as_2(S')$. We do this for every $T'$-square
      in the
      current frontier. 
      
      \paragraph{The Secondary Stage.}
      Next, the goal is to build a small compressed quadtree for the secondary
      associated set of each square in the current frontier. Of course,
      the tree needs to remain balanced.
      For this, we start an operation that is similar to the main body of the 
      algorithm.  We call a $T'$-square $S'$ \emph{post-active} if 
      $|\as_2(S')| \geq 2$ and the smallest bounding
      square for the elements in $\as_2(S')$ has size larger than $|S'|/128a$.
      We put all the post-active squares into a list $L_2$ and we proceed
      as before: we repeatedly take a post-active square from $L_2$,
      split it, and then perform a balancing procedure. Here,
      the splitting operation is as follows: given
      a square $S'$, we split it into four children $S'_1, \ldots, S'_4$.
      By comparing each child $S_i'$ to each element in the secondary associated
      set $\as_2(S')$, we determine the new secondary associated sets 
      $\as_2(S'_1), \ldots, \as_2(S'_4)$. We use these associated sets to check
      which children $S'_i$ (if any) are post-active and add them to $L_2$,
      if necessary. This splitting
      operation takes constant time. Again, it may
      happen that the balancing procedure creates new post-active
      squares. We repeat this procedure until 
      $L_2$ is empty.

      \paragraph{Setting Up the Recursive Calls.}
      After the secondary stage, there are no more post-active squares, so 
      for each square $S'$ in the current frontier
      we have (i) $|\as_2(S')| \leq  1$; or (ii) the smallest bounding
      square of $\as_2(S')$ has size at most $|S'|/128a$.
      Below in Lemma~\ref{lem:secondary-bound} 
      we will argue that if $\as_2(S')$ contains a single
      compressed child $C$, then $C$ has size at most
      $|S'|/128a$. Thus, (ii) holds in any case. 
      The goal now is to set up a recursive call of the algorithm to handle
      the remaining compressed children. Unfortunately, a compressed
      child may intersect several leaf $T'$-squares,
      so we need to be careful about choosing the base squares for the
      recursion. 

      \tweeplaatjes {shift-rec} {shift-rec2}
      { (a) A frontier square $S'$ of $T'$ intersects several compressed children of $T$. We identify the list $X$ of $T'$ squares that intersect the same children.
        (b) To apply the shifting algorithm recursively, we choose base squares $\tilde R$ and $\tilde S$ aligned with $T$ and $T'$.
      }
      
      Let $S'$ be a square of the current frontier, and set $X \eqdef \{S'\}$.
      While there is a compressed child $C$ in 
      $\as_2(X) \eqdef \bigcup_{S'' \in X} \as_2(S'')$ that intersects 
      the boundary of 
      $S(X) \eqdef \bigcup_{S'' \in X} S''$, we add all the $T'$-squares 
      of the current frontier
      that are intersected by $C$ to $X$.  Since $T'$ is balanced, the 
      $i$-th square $S^{(i)}$ that we add to $X$ has size at most $2^i|S'|$ and
      hence the bounding square of $\as_2(S^{(i)})$  has size at most 
      $2^i|S'|/128a$. By construction, $\as_2(S^{(i)})$ contains at least
      one element that intersects a square in the old $X$, so by induction
      we know that after $i$ steps the set
      $\as_2(X)$ has a bounding square of size at most
      $2^{i+1}|S'|/128a$. It follows that the process stops after at most
      three steps (i.e., when $X$ has four elements), because after
      four steps we would have a bounding square of size at most 
      $2^5|S'|/128a \leq |S'|/4a$
      that is intersected by five disjoint squares of size at least 
      $|S'|/2^4 = |S'|/16$ (since 
      $T'$ is balanced), which is impossible (for $a$ large enough).
      Figure~\ref {fig:shift-rec} shows an example.

      Now we put two base squares around $\as_2(X)$: a square $\widetilde{R}$
      that is aligned with $T$, and a square $\widetilde{S}$ that is 
      aligned with $T'$. For $\widetilde{R}$, if $\as_2(X)$ contains 
      only one element, we just use the bounding square of $\as_2(X)$.
      If $|\as_2(X)| \geq 2$, then the elements
      of $\as_2(X)$ are separated by an edge or a corner between 
      leaf $T$-squares. Thus,
      we can pick a base square $\widetilde{R}$ for $\as_2(X)$ such that 
      (i) $|\widetilde{R}| \leq 2^6|S'|/128a = |S'|/2a$; (ii) $\widetilde{R}$ is 
      aligned with $T$; and (iii) the first split of $\widetilde{R}$ 
      separates the elements in $\as_2(X)$.
      For $\widetilde{S}$, if $|X| =1$, we just use the bounding square for
      $\as_2(X)$. If $|X| \geq 2$, the squares in $X$ must share a common
      edge or corner, and we can find a base square $\widetilde{S}$ such that
      (i) $\widetilde{S}$ contains $\as_2(X)$; (ii) the first split of 
      $\widetilde{S}$ produces squares that are aligned with this edge or 
      corner of $X$; and (iii) $|\widetilde{S}| \leq 2^6|S'|/128a = |S'|/2a$.
      Figure~\ref {fig:shift-rec2} shows an example.
      We now construct an $a$-compressed quadtree $\widetilde{T}$
      with base square $\widetilde{R}$ for the elements of $\as_2(X)$
      in the obvious way. (If $\as_2(X)$ contains any compressed children, 
      we reuse them as compressed children for $\widetilde{T}$. This may 
      lead to a violation of the condition for compressed nodes at the 
      first level of $\widetilde{T}$. However, our algorithm automatically 
      treats large compressed children as active squares, so there is no 
      problem.) This takes constant time.  We call the algorithm recursively 
      to shift $\widetilde{T}$ to the new base
      square $\widetilde{S}$. Note that this leads to a valid $a$-compressed
      quadtree since either $\widetilde{S}$ is wholly contained in $S'$;
      or the first split of $\widetilde{S}$ produces squares that are wholly
      contained in the $T'$-leaf squares and have size at most $|S'|/4a$,
      while each square that intersects $\widetilde{S}$ has size at least
      $|S'|/4$, as $T'$ is  balanced. We repeat the procedure 
      for every leaf $T'$-square whose secondary associated set we 
      have not processed yet.
      
      \paragraph{Analysis.}
      The resulting tree $T'$ is a balanced $a$-compressed quadtree
      for $P$. It remains to prove that the algorithm runs in linear time.
      The initialization stage needs $O(1)$ steps.
      Next, we consider the main body of the algorithm.
      Since each split takes constant time, the total running time 
      for the main body is proportional to the number of 
      splits. Recall that a $T'$-square $S'$ is called \emph{active} if 
      it is put into $L$, i.e., if $\as(S')$ contains a $T$-square of
      size in $[|S'|, 2|S'|)$ or a compressed child of size
      in $(|S'|/2^{2a}, |S'|]$. Since each $T$-square can cause only
      a constant number of $T'$-squares to be active, the total 
      number of active $T'$-squares is $O(m)$. Thus, we can use the following 
      lemma to conclude that the total number of
      splits in the main body of the algorithm is linear.
      
      \begin{lem}\label{lem:main-body}
         Every split in the main body of the algorithm can be charged
	 to an active $T'$-square such that each such square
	 is charged a constant number of times.
      \end{lem}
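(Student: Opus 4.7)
Our plan is to separate every split into a \emph{primary} split (triggered by popping from $L$) and a \emph{balancing} split (triggered by popping from $Q$) and bound each type. Writing $A^{*} \subseteq A$ for the subset of active squares that are primary-split (where $A$ denotes the set of squares ever placed in $L$), each element of $A^{*}$ is split exactly once as a primary split, so there are $|A^{*}| \leq |A|$ primary splits, and we charge each one to the active square being split.

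The main task is to bound the balancing splits. My plan is to introduce an auxiliary unbalanced quadtree $\widetilde{T}$ that isolates the ``primary'' portion of $T'$: $\widetilde{T}$ is the subtree of $T'$ obtained by starting from the initial grid $S_1', \ldots, S_k'$ and recursively splitting every square that belongs to $A^{*}$, so that the internal nodes of $\widetilde{T}$ are precisely the $A^{*}$-nodes reachable from the initial grid through $A^{*}$-splits only. By construction $|\widetilde{T}| = O(|A^{*}|) = O(|A|)$. The crucial observation is that the online balancing procedure used by the algorithm is equivalent to the standard balancing procedure of Theorem~\ref{thm:balance}: a balance split of $N'$ is triggered exactly when $N'$ is a frontier square with a neighbor of half its size, which is precisely the imbalance condition used in Theorem~\ref{thm:balance}. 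Hence $T'$ can be viewed as the minimal balanced extension of $\widetilde{T}$, so by Theorem~\ref{thm:balance} we have $|T'| = O(|\widetilde{T}|) = O(|A|)$, giving $O(|A|)$ balancing splits in total.

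To convert this bound into a concrete charging, I would appeal to the charging scheme underlying the proof of Theorem~\ref{thm:balance}: every square introduced during balancing is charged to a $T'$-square in its constant-radius neighborhood that is a leaf of $\widetilde{T}$ of strictly smaller size, and each such $\widetilde{T}$-leaf receives only $O(1)$ charges. Since every $\widetilde{T}$-leaf (apart from the $O(1)$ initial grid squares) is one of the four children of an $A^{*}$-node, I can forward each leaf's charges up to its parent, so that every element of $A^{*}$ accumulates at most $O(1)$ charges overall---one from its own primary split, plus a constant number from its four $\widetilde{T}$-children. Active squares in $A\setminus A^{*}$ may receive no charges, which is allowed.

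The main obstacle is making the equivalence between the interleaved online balancing and the conceptual ``primary-first, then balance'' procedure precise. In particular, an $A^{*}$-node can in principle be created as a child of a previously balance-split square, in which case it is not reachable within $\widetilde{T}$ as defined. The argument must therefore handle these ``secondary'' primary splits separately, either by absorbing them into the bound through the volumetric argument that was used to show $|A| = O(m)$ in the first place (using the invariant on the associated sets to limit how many active squares a single $T$-square or compressed child of $T$ can spawn, even across balance-split ancestors), or by recursively applying the same analysis to the subtrees rooted at balance-split nodes and bounding the total work across the recursion.
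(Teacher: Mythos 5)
Your proposal contains a genuine gap that you correctly identify but do not resolve, and the resolution sketches you offer are not viable. The central problem is that the reduction to Theorem~\ref{thm:balance} requires $T'$ to be the balanced extension of the tree $\widetilde{T}$ reachable from the grid via $A^{*}$-splits alone. But this is false precisely because of the ``secondary primary splits'' you flag: a balance split on a square $N'$ can place a previously small compressed child into the range $[|S'_i|/2^{2a}, |S'_i|)$ for one of $N'$'s new children $S'_i$, making $S'_i$ active. That $S'_i$ is in $A^{*}$ but not in $\widetilde{T}$, so the identity $|T'| = O(|\widetilde{T}|)$ does not bound all the splits. Patching $\widetilde{T}$ by adding the missing $A^{*}$-nodes along with all their ancestors would reintroduce the balance-split ancestors you were trying to avoid counting, so the argument becomes circular. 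Your first fallback (``absorb them via the volumetric argument'') only re-establishes $|A| = O(m)$, which already holds but does not by itself bound the number of balance splits each active square can indirectly cause. Your second fallback (recurse on subtrees below balance-split nodes) has no clear termination or size bound; the recursion depth is a priori unbounded, and there is no potential function to aggregate it.

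The paper's proof does not attempt this reduction at all. It charges each balance split directly: for a balance-split square $S'$, induction on the balancing queue gives a direct neighbor $N'$ with an active descendant whose removal from $L$ triggered the cascade. If $N'$ has an active ancestor within five levels, the charge goes there. Otherwise the \emph{associated-set structure} is exploited: $\as(N')$ must contain a compressed child of size $<|N'|/2^{2a}$ together with $T$-squares of size $\geq 64|N'|$, and one shows the chain $S' = N^{(0)}, N^{(1)}, \ldots$ of neighbors with this same structure cannot exceed sixteen links, because at most four $T$-squares of size $\geq 64|S'|$ can meet the $O(|S'|)$-diameter neighborhood, each has at most one compressed child, and each compressed child is intersected by at most four disjoint $|S'|$-sized squares. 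This localizes the charge to an active square within a constant radius and handles the feedback between balancing and activation head-on, which is precisely the phenomenon your Theorem~\ref{thm:balance} reduction cannot see. To complete your proof you would need exactly this kind of bound on how far a cascade of ``balance split triggers new active square triggers balance split'' can propagate, and that requires the specific geometry of $T$-squares and compressed children, not a black-box appeal to the static balancing theorem.
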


      \begin{proof}
      If we split an active square $S'$, we can trivially
      charge the split to $S'$.
      Hence, the critical splits
      are the ones during the balancing procedure. 
      By induction on the number of steps of the balancing
      procedure, we see that if a square $S'$ is split, there must be 
      a square $N'$ in the current partial tree $T'$ that is a direct 
      neighbor of $S'$ and that has an active descendant whose 
      removal from $L$  triggered the
      balancing procedure.\footnote{Recall that a direct neighbor
      of $S'$ is one of the eight squares of size $|S'|$ that surround
      $S'$.}

      If $N'$ has an active ancestor $\widetilde{N}$ that is at 
      most five levels above $N'$ in $T'$ (possibly $\widetilde{N} = N'$), 
      we charge the split of $S'$ to
      $\widetilde{N}$, and we are done. Otherwise, we know that
      $\as(N')$ contains at least one compressed child of
      size less than $|N'|/2^{2a}$ (otherwise, $N'$ would not
      have an active descendant or would itself be active) 
      and $T$-squares of size at least
      $64|N'|$ (otherwise, one of the five nodes above $N'$ in $T'$ 
      would have been active). Now, before $S'$ is split, there
      must have been a split on $N'$: otherwise the
      active descendant of $N'$ that triggers
      the split on $S'$ would not exist. Thus, we repeat the argument to show
      that $N'$ has a direct neighbor $N''$ with an
      active descendant that triggers the split of $S'$. 
      Note that $N'' \neq S'$, because the split on $N'$ happens before
      the split on $S'$.
      If $N''$ has an active ancestor that is at most five levels higher up in
      $T'$ (possibly $N''$ itself), we are done again. 
      Otherwise, we repeat the argument again. 
      
      We claim that this process finishes after at most $16$ steps. 
      Indeed, suppose we find $17$ squares 
      $S' = N^{(0)}, N^{(1)}, N^{(2)}, \ldots, N^{(17)}$ without stopping. 
      We know that each $N^{(j)}$ is a direct neighbor of $N^{(j-1)}$ and that
      each $N^{(j)}$ is associated with a compressed child of size
      at most $|S'|/2^{2a}$ and with $T$-squares of size at least
      $64|S'|$. Since the set $\bigcup_{j=0}^{16} N^{(j)}$ has
      diameter at most $17|S'|$, the set $\bigcup_{j=0}^{16} \as(N^{(j)})$
      contains at most four $T$-squares of size at least $64|S'|$.
      Now each compressed child in an associated set
      $\as(N^{(j)})$ is the only child of one of these four large $T$-squares,
      so there are at most four of them.
      Furthermore, each such compressed child is intersected by at
      most four disjoint $T$-squares of size $|S'|$, so 
      there can be at most $16$ squares $N^{(j)}$, a contradiction.
      Hence, we can charge each split to an active square in the
      desired fashion, and the lemma follows.
      \end{proof}
       
      Next, we analyze the running time of the secondary stage. Again,
      the running time is proportional to the number of splits, which
      is bounded by the following lemma.

      \begin{lem}\label{lem:secondary-bound}
         Let $S'$ be a frontier $T'$-square at the beginning of the
	 secondary stage. Then after the secondary stage, the subtree
	 rooted at $S'$ has height at most $O(\log a)$.
      \end{lem}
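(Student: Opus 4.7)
The plan is to bound the height of the subtree rooted at $S'$ by tracing an arbitrary root-to-leaf path $S' = S_0, S_1, \ldots, S_h$ and showing $h = O(\log a)$. Every non-leaf $S_i$ on such a path is post-active, i.e., $|\as_2(S_i)| \geq 2$ and the side length $B(S_i)$ of the bounding square of $\as_2(S_i)$ exceeds $|S_i|/128a$. I first observe that $|\as_2(S')| \leq 4$: the invariant of the main body guarantees $|\as(S')| = O(1)$, and in the point-location phase each $T$-leaf in $\as(S')$ contributes at most one point while each compressed node contributes at most one compressed child. Furthermore, because no active squares remain at the end of the main body, every compressed child that can enter $\as_2(S')$ has size at most $\sigma \eqdef |S'|/2^{2a}$.

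Since any element of $\as_2(S_{i+1})$ must intersect $S_{i+1} \subseteq S_i$, we have $\as_2(S_{i+1}) \subseteq \as_2(S_i)$, so $|\as_2(S_i)|$ is non-increasing along the path, and can strictly decrease (a \emph{separation}) at most three times. The heart of the proof is to bound by $O(\log a)$ the number of levels between two consecutive separations. Suppose $\as_2$ remains equal to a fixed set $E$ over levels $i, i+1, \ldots, i+j$. Since the bounding square of $E$ is intrinsic to $E$, $B(S_{i+j}) = B(S_i)$, and post-activity at $S_{i+j}$ gives
\[
B(S_i) \;>\; |S_{i+j}|/128a \;=\; |S_i|/(2^j \cdot 128a).
\]
On the other hand, every element of $E$ intersects the descendant $S_{i+j}$, so any two of them are geometrically confined to distance at most $\sqrt{2}\,|S_{i+j}| + 2\sigma$, giving $B(S_i) \leq \sqrt{2}\,|S_{i+j}| + 2\sigma$. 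Combining these inequalities and using $\sigma \ll |S_i|/128a$ for $a$ sufficiently large yields $2^j = O(a)$, i.e., $j = O(\log a)$.

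Putting the two bounds together, $h$ decomposes into at most three segments each of length $O(\log a)$, so $h = O(\log a)$. Along the way I will also verify the side claim mentioned in the text: whenever a secondary-stage leaf $S''$ has $\as_2(S'') = \{C\}$ for a compressed child $C$, we actually have $|C| \leq |S''|/128a$. This follows by examining the post-active parent of $S''$ in the subtree, whose bounding square already satisfies a comparable bound and forces $C$ to be small. The main obstacle throughout is the geometric bookkeeping for compressed children, which are squares rather than points; however, the smallness of $\sigma = |S'|/2^{2a}$ relative to the post-activity threshold $|S|/128a$ at any descendant reached within $O(\log a)$ splits makes all size contributions negligible, which is precisely where the hypothesis that $a$ is a sufficiently large constant enters the argument.
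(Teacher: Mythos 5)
Your plan has the right core idea---bound levels by tracking $\as_2$ along a downward path---but it rests on a claim that is simply false and that the paper is careful to work around. You assert that ``every non-leaf $S_i$ on such a path is post-active.'' That does not hold: a square in the subtree rooted at $S'$ can become an internal node because it was split during a \emph{balancing procedure}, not because it was pulled from $L_2$ as a post-active square. Such a square need not satisfy $|\as_2(S_i)|\geq 2$, and it need not satisfy the size inequality against $|B(\mathcal{A})|$. Once you drop that premise, the chain of inequalities in your ``heart of the proof'' no longer applies at every level of the path, and you cannot conclude $h = O(\log a)$ just from counting separations. The paper handles exactly this by splitting the count into two parts: it first bounds the number of \emph{splits to post-active descendants} of $S'$ by $O(\log a)$ (using essentially your volume argument, though organized per subset $\mathcal{A}\subseteq\as_2(S')$ rather than per path), and then \emph{separately} argues that each balancing phase adds at most one level below $S'$ and that each balancing phase is triggered by a post-active split of a descendant of $S'$ or of one of $S'$'s $O(1)$ direct neighbors (each of which also has only $O(\log a)$ post-active descendants). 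Without some analogue of that second half, your path argument has an uncontrolled number of balancing levels interleaved with the post-active ones.

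A secondary issue: you defer the auxiliary claim about compressed children (that a compressed child $C\in\as_2(S'')$ is tiny relative to $|S''|$) to ``examining the post-active parent of $S''$.'' But that parent is precisely the kind of square the balancing cascade could have produced, so the deferral is circular unless you first control the total depth. The paper resolves this with a clean ``first violation'' bootstrap: assume the claim holds, derive the $O(\log a)$ depth bound, and then show that a first violating $S''$ would have to satisfy $|S'|/2^{2a} \geq |C| > |S''|/2^a > |S'|/2^{2a}$, a contradiction for $a$ large. You should adopt that structure explicitly rather than gesture at ``the post-active parent.'' With both repairs---adding the balancing-phase accounting and the bootstrap for compressed children---your separation-counting route would work, but as written it has a genuine gap.
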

      \begin{proof}
      Below, we will argue that for every descendant $S''$ of $S'$,
      if $\as_2(S'')$ contains a compressed child $C$, then
      $|C| \leq |S''|/2^a$. For now, suppose that this holds.

      First, we claim that there are $O(\log a)$ splits
      to post-active descendants of $S'$. 
      The secondary associated set $\as_2(S')$ contains at most four 
      elements, so $\as_2(S')$ has at most $11$ subsets with
      two or more elements. Fix such a subset $\mathcal{A}$.
      Then $S'$ has at most $O(\log a)$ post-active
      descendants with secondary associated set $\mathcal{A}$.
      This is because each level of $T'$ has at most two squares
      with secondary associated set $\mathcal{A}$, and 
      the post-active squares with secondary associated set $\mathcal{A}$
      must have size between $|B(\mathcal{A})|/2$ and 
      $128a|B(\mathcal{A})|$, where $B(\mathcal{A})$ denotes the
      smallest bounding square for the elements in $\mathcal{A}$. 
      (Here we use our claim that the compressed children in
      the secondary associated set of each frontier $T'$-square 
      $S''$ are much smaller
      than $S''$.) There are only $O(\log a)$ such levels, so
      adding over all $\mathcal{A}$, we see that $S'$ has 
      at most $O(\log a)$ post-active descendants, implying the
      claim.

      Each split creates at most one new level below $S'$,
      so there are only $O(\log a)$ new levels 
      due to splits to post-active descendants of $S'$.
      Next, we bound the number of new levels that are created by
      splits during the balancing phases. Each balancing phase
      creates at most one new level below $S'$.
      Furthermore, by induction on the number of steps in the 
      balancing phase, we see that the balancing phase was
      triggered by the split of a post-active square
      that is a descendant either of $S'$ or of a direct neighbor
      of $S'$.
      At the beginning of the secondary stage, there
      are $O(1)$  $T'$-squares that are descendants of direct 
      neighbors of $S'$ (as $T'$ is balanced). As we argued above, each 
      of them has at most $O(\log a)$ post-active descendants.
      Thus, the balancing phases add at most $O(\log a)$ new levels below
      $S'$.
      
      Finally, we need to justify the assumption that for any descendant $S''$
      with a compressed child $C \in \as_2(S'')$, we have $|C| \leq |S''|/2^a$.
      By construction, we have $|C| \leq |S'|/2^{2a}$. Suppose that $S'$
      has a descendant $S''$ that violates this assumption. The square $S''$ was
      created through a split in the secondary stage, and suppose that $S''$
      is the first such square during the whole secondary stage. This means
      that during all previous splits, the assumption holds, so by the
      argument above, there are at most $O(\log a)$ levels below $S'$. 
      This means that $|S''| \geq |S'|/a^{O(1)}$, so we would get
      $|S'|/2^{2a} \geq |C| > |S''|/2^a > |S'|/2^{2a}$,
      a contradiction (for $a$ large enough). Thus, no $S''$ can violate the
      assumption, as desired.
      \end{proof}
       
      The time to set up the recursion is constant for each square of the
      current frontier. From Lemmas~\ref{lem:main-body} and 
      \ref{lem:secondary-bound}, we can conclude that the total time of
      the algorithm is $O(m)$, which also implies that $T'$ has $O(m)$
      squares. This concludes the proof of Theorem~\ref{thm:shiftbalance}.

    \paragraph{Special Cases.}
    We note two useful special cases of Theorem~\ref{thm:shiftbalance}.
    The first one gives an analog of Theorem~\ref{thm:balance} for compressed
    quadtrees.
    \begin {cor} \label {cor:balance-'n-thread}
      Let $T$ be a $a$-compressed quadtree with $m$ nodes. 
      There exists a balanced $a$-compressed quadtree that contains $T$,
      has $O (m)$ nodes and can be constructed in $O (m)$ time.
    \end {cor}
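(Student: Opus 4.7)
The plan is to deduce Corollary~\ref{cor:balance-'n-thread} as an immediate special case of Theorem~\ref{thm:shiftbalance}: invoke the shifting algorithm with the target base square equal to the existing base square $R$ of $T$, so that $S \eqdef R$ satisfies $S \supseteq P$ and $|S| = |R| = \Theta(|R|)$ trivially. Theorem~\ref{thm:shiftbalance} then directly furnishes a balanced $a$-compressed quadtree $T'$ for $P$ with base square $R$, of $O(m)$ size, and constructed in $O(m)$ time, which already matches every quantitative bound in the corollary.

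The only property claimed by the corollary that is not literally part of the statement of Theorem~\ref{thm:shiftbalance} is that $T'$ \emph{contains} $T$, so the real content of the proof is to check this. Because $S = R$, the two trees share a common base square and every $T'$-square produced by a split operation is perfectly aligned with some $T$-square; hence for each $T$-square $\sigma$ the axis-aligned square with the same geometry has a chance of appearing in $T'$. I would formalize this by induction on the depth of $\sigma$ in $T$: as long as the $T'$-ancestor of $\sigma$ carries $\sigma$ (or one of $\sigma$'s ancestors in $T$) in its associated set, the active-square condition forces that ancestor to be split, and the induction pushes $\sigma$ itself into $T'$ eventually. Balancing only adds further splits, never merges, so it cannot destroy any square we have already placed.

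The case that I expect will need the most care is that of compressed nodes of $T$. In principle the balancing cascade can split a parent $T'$-square until its size drops to the point where a compressed child $C$ of $T$ no longer fits as a compressed child, which would trigger the secondary stage and the recursive $(\widetilde{R}, \widetilde{S})$ call to expand $C$. With $S = R$, however, there is no geometric shift: for each compressed child we can pick $\widetilde{R} = \widetilde{S}$ aligned with both $T$ and $T'$, so the recursive call produces exactly a balanced refinement of the original compressed child without altering its position. Consequently every compressed child of $T$ survives as a compressed child of $T'$ (possibly with extra ancestor splits above it due to balancing). Once this containment at compressed nodes is verified, $T \subseteq T'$ follows together with all the quantitative bounds from Theorem~\ref{thm:shiftbalance}, and the corollary is complete.
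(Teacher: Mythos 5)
Your proposal matches the paper's proof exactly: the paper also simply applies Theorem~\ref{thm:shiftbalance} with $S = R$, the existing base square. The paper leaves the ``contains $T$'' claim entirely implicit, so the additional verification you supply (that with no geometric shift every regular $T$-square reappears via the active-split cascade, and that the recursive calls for compressed children can be set up with $\widetilde R = \widetilde S$ so alignment is preserved) is a correct and welcome fleshing-out of a step the authors took for granted rather than a different approach.
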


   \begin{proof}
      Let $R$ be the base square of $T$. We apply 
      Theorem~\ref{thm:shiftbalance} with $S = R$. 
   \end{proof}

    The second special case says that we can realign an uncompressed 
    quadtree locally in any way we want, as long as we are willing
    to relax the definition of quadtree slightly.\footnote
    {We cannot get a non-relaxed ($1$-relaxed) uncompressed quadtree, since two points could be arbitrarily close to each other if they were separated by a boundary. However, we can always turn a $\lambda$-relaxed quadtree into a non-relaxed compressed quadtree in linear time again.}
    Let $P$ be a planar point set.
    We call a quadtree for $P$ \emph {$\lambda$-relaxed} if it has at most
    $\lambda$ points of $P$ in each leaf, and is otherwise a regular quadtree.
    \begin {cor} \label {cor:qt-shift}
      Let $P$ be a planar point set and $T$ a regular quadtree for $P$, 
      with base square $R$.
      Let $S$ be another square with $S \supseteq P$ and 
      $|S| = \Theta(|R|)$.
      Then we can build a $4$-relaxed quadtree $T'$ for $P$ 
      with base square $S$ in $O(|T|)$ time such that
      $T'$ has $O(|T|)$ nodes.      
    \end {cor}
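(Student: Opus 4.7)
The plan is to invoke the construction from the proof of Theorem~\ref{thm:shiftbalance}, viewing the regular quadtree $T$ as a trivially $a$-compressed quadtree with no compressed nodes (for a sufficiently large constant $a$), and to halt the algorithm immediately after the point-location stage, skipping the secondary stage and the recursive calls. The essential feature of the input is that $T$ contains no compressed children at all, so throughout the main body no compressed child ever enters any associated set $\as(\cdot)$; the complications that motivate the secondary stage and the recursion simply do not arise here.

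To see that the output is a valid $4$-relaxed quadtree, I will argue as follows. The invariant of the main body says that $\as(S')$ stores the smallest $T$-squares of size $\geq |S'|$ that intersect $S'$. Any non-leaf $T$-square $x \in \as(S')$ must therefore have size in $[|S'|, 2|S'|)$: otherwise $x$'s children (of size $|x|/2 \geq |S'|$) would still intersect $S'$ and would be smaller valid candidates, contradicting the ``smallest'' clause of the invariant. But having such a $T$-square in $\as(S')$ is precisely the activity condition, which fails once $L$ is empty. Hence at termination every frontier $T'$-square $S'$ has $\as(S')$ consisting solely of $T$-leaves of size at least $2|S'|$. A square of side $|S'|$ straddles at most four squares of side $\geq 2|S'|$ on any such grid, and each $T$-leaf of a regular quadtree contains at most one point of $P$; so the secondary associated set $\as_2(S')$ produced by the point-location step contains at most four points of $P$ and no compressed children. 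Declaring each frontier $T'$-square a leaf that stores its (at most four) points in $\as_2(S')$ therefore yields a $4$-relaxed regular quadtree for $P$ with base square $S$, as required.

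The time and size bounds follow from Lemma~\ref{lem:main-body} applied essentially verbatim: there are $O(|T|)$ active $T'$-squares (each $T$-square activates only $O(1)$ $T'$-squares), every split, including those triggered by the balancing procedure, is charged to an active $T'$-square a constant number of times, and point-location spends $O(1)$ time per frontier square. The only mildly delicate step I foresee is the structural observation above about $\as(S')$ at the moment $L$ becomes empty; once that is in hand, the corollary is a direct appeal to machinery already developed for Theorem~\ref{thm:shiftbalance}.
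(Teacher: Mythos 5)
Your proposal is correct and takes essentially the same approach as the paper's (one-sentence) proof: both apply Theorem~\ref{thm:shiftbalance} to $T$ viewed as an $a$-compressed quadtree with no compressed nodes, halt right after the point-location step, and note that each secondary associated set then contains at most four points and no compressed children. Your elaboration of the termination invariant on $\as(S')$ (non-leaf entries have size in $[|S'|,2|S'|)$, hence at termination only $T$-leaves of size $\geq 2|S'|$ remain) is a correct spelling-out of a step the paper leaves implicit.
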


   \begin{proof}
      We apply Theorem~\ref{thm:shiftbalance} to $T$, but we stop the
      algorithm before the beginning of the secondary stage.
      Since each secondary associated set for a leaf square has
      at most four elements, and since $T$ contains no compressed
      nodes, the resulting tree $T'$ has the desired properties.
   \end{proof}

\subsection{Equivalence of Compressed and $c$-Cluster Quadtrees}
\label{sec:compressed-c-cluster}

    The goal of this section is to prove the following theorem.
    \begin{theorem}\label{thm:cluster-compressed-equiv}
      Let $P$ be a planar $n$-point set.  Given a
      $(c_1, c_2)$-cluster quadtree on $P$, we can compute in $O(n)$ time an
      $O(c_1)$-compressed quadtree on $P$; and given an $a$-compressed
      quadtree on $P$, we can compute in $O(n)$ time an
      $(a^{1/5}, 2a^{1/5})$-cluster quadtree on $P$.
    \end{theorem}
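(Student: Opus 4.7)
Plan for Theorem~\ref{thm:cluster-compressed-equiv}. The statement has two directions, which I address in turn.

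For the first direction, suppose we are given a $(c_1, c_2)$-cluster quadtree on $P$. By construction it already contains, for each internal cluster node $u$ with children $v_1, \ldots, v_m$, a balanced regular quadtree $T^Q_u$ on the representative points of these children. The natural idea is to \emph{stitch} these quadtrees together: convert each leaf of $T^Q_u$ that houses the representative of a child cluster $v_i$ into a compressed node whose unique (compressed) child is the root of the already-processed compressed quadtree for $P_{v_i}$, wrapped in a tight square around $B_{v_i}$. The compression ratio falls out of the argument in the proof of Observation~\ref{obs:close}: the leaf square $L$ containing the representative of $v_i$ satisfies $|L| \geq (c_1/4)|B_{v_i}|$, so the ratio between $|L|$ and the wrapping square around $B_{v_i}$ is $\Omega(c_1)$, yielding an $O(c_1)$-compressed quadtree. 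Summing over all cluster nodes via Lemma~\ref{lem:c-cluster-QT} gives $O(n)$ total size, and a single post-order traversal achieves $O(n)$ time.

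For the second, substantive direction, given an $a$-compressed quadtree $T$, I would proceed in two phases: first build the cluster tree $T_{(c_1,c_2)}$ with $c_1 = a^{1/5}$ and $c_2 = 2a^{1/5}$, then augment each internal node $u$ with a balanced regular quadtree $T^Q_u$. To build $T_{(c_1,c_2)}$, I scan $T$ top-down; the candidate cluster nodes are the compressed children $\underline v$ of compressed nodes $v$, and I keep those for which I can certify the $c_1$-cluster property of $P_{\underline v}$. Certification works by examining $B_{\underline v}$ and producing a square $S^* \subseteq S_v$ concentric with $B_{\underline v}$ of side at least $c_1 |B_{\underline v}|$: since every point of $P \setminus P_{\underline v}$ lies outside $S_v$ and hence outside $S^*$, this immediately gives $d(P_{\underline v}, P \setminus P_{\underline v}) \geq \tfrac{1}{2}(|S^*| - |B_{\underline v}|) = \Omega(c_1|B_{\underline v}|)$. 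The compression budget $|S_v|/|B_{\underline v}| \geq a$ must cover both the required factor $c_1 = a^{1/5}$ of the enclosing square and the factor $a^{4/5}$ of slack needed to center $S^*$ around $B_{\underline v}$, using the freedom in placing $S_{\underline v}$ (or, if $B_{\underline v}$ sits unfavorably, the shifting algorithm of Theorem~\ref{thm:shiftbalance} applied locally).

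Conversely, I need to verify that every $c_2$-cluster $C \subseteq P$ is contained in $P_w$ for some accepted cluster node $w$. Because $C$ has an empty moat of width $c_2|B_C|$ around $B_C$, the quadtree must contain some cell $S_v$ whose unique child meeting $C$ is a compressed child with a compression gap large enough that my certification rule accepts it; the fact that $c_2 = 2c_1$ supplies the safety margin which turns mere detection into acceptance. In Phase~2, each $T^Q_u$ is obtained by restricting $T$ to the representatives of $u$'s children and rebalancing via Corollary~\ref{cor:balance-'n-thread}; Corollary~\ref{cor:c-cluster-QT-extended} bounds its size by $O(|V|)$, so the resulting cluster quadtree has $O(n)$ nodes and can be built in $O(n)$ time.

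The main obstacle is certification in Phase~1. A compressed child may be placed arbitrarily within its parent cell, so a priori $d(P_{\underline v}, \partial S_v)$ can be zero, and the raw existence of a compressed node of ratio $a$ does \emph{not} immediately certify any cluster. The exponent $1/5$ is the smallest that simultaneously leaves (i) enough residual compression to witness a genuine cluster after paying for positioning, (ii) enough gap between $c_1$ and $c_2$ so that every $c_2$-cluster of $P$ lands inside a cell we accept, and (iii) symmetric margins on both sides so the two implications can be closed.
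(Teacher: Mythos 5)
Your first direction has the right intuition (stitch the child compressed quadtrees into $T^Q_u$ as compressed children, and read off the ratio $O(c_1)$ from Observation~\ref{obs:close}), but it glosses over a real alignment issue: Observation~\ref{obs:close} only bounds the number of leaf squares of $T^Q_u$ that $P_{v_i}$ meets at four, so $B_{v_i}$ can straddle up to four leaves and cannot simply be hung as the compressed child of the one leaf that happens to contain the representative point. The paper's Lemma~\ref{lem:c-cluster->compressed} addresses exactly this by re-aligning the child tree to an appropriate base square via Corollary~\ref{cor:qt-shift} before inserting it. Without that step your construction does not produce a valid compressed quadtree.

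The second direction has a genuine gap. You take the candidate cluster nodes to be precisely the compressed children $\child v$ of compressed nodes $v$, and you argue in your ``conversely'' paragraph that every $c_2$-cluster $C$ must be captured by some such compressed child. That claim is false: in a compressed quadtree a $c_2$-cluster need not give rise to any compressed node at all. If $B_C$ straddles a quadtree grid line at a coarse level (in the worst case, already at the root), the smallest quadtree cell enclosing $C$ can be enormous, that cell has two or more nonempty children, and no compression happens; the points of $C$ are then distributed over several subtrees with no compressed node separating them from the rest of $P$. Scanning compressed children will never see this cluster, so the output fails property (ii) of a $(c_1,c_2)$-cluster tree. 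The paper's Lemma~\ref{lem:compressed->c-cluster} avoids this by first balancing $T$ via Corollary~\ref{cor:balance-'n-thread}, which makes the cells around such a cluster locally uniform, and then detecting clusters as groups of at most four mutually adjacent full leaf squares whose surrounding $(4a^{1/5}+1)\times(4a^{1/5}+1)$ grid of equal-size cells is empty; compressed children are handled by a \emph{separate} second mechanism. You need at least one of these ingredients, and in particular the balancing step, which is entirely absent from your plan.

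A smaller but also real problem is your certification test in Phase~1: you want a square $S^* \subseteq S_v$ concentric with $B_{\child v}$ of side $\geq c_1 |B_{\child v}|$. But $B_{\child v}$ may sit arbitrarily close to the boundary of $S_v$, in which case no such $S^*$ fits inside $S_v$ even though $P_{\child v}$ \emph{is} a $c_1$-cluster (because the neighbouring cells of $S_v$ happen to be empty). Your test would then wrongly reject, again breaking property (ii). Checking the grid of neighbouring same-size cells, after balancing, is precisely what repairs this.
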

   
    We present the proof of Theorem~\ref{thm:cluster-compressed-equiv}
    in two lemmas.

    \begin{lem} \label{lem:c-cluster->compressed}
    Let $P$ be a planar $n$-point set. Given a
    $(c_1, c_2)$-cluster quadtree $T$ for $P$, we can compute in linear time an
    $O(c_1)$-compressed quadtree $T'$ on $P$.
    \end{lem}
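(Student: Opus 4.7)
The plan is to proceed by bottom-up induction on the $(c_1,c_2)$-cluster tree $T_{(c_1,c_2)}$ that underlies the input cluster quadtree. For each cluster node $u$, I will construct an $a$-compressed quadtree $T'_u$ for $P_u$ with a base square $R_u$ of diameter $O(|B_{P_u}|)$, where the compression parameter $a=\Theta(c_1)$ is chosen once at the beginning. The base case, where $u$ is a leaf of $T_{(c_1,c_2)}$ representing a single point of $P$, is trivial.

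For the inductive step, assume $u$ has children $v_1,\ldots,v_m$ and that we already have compressed quadtrees $T'_{v_i}$ with base squares $R_{v_i}$ of size $O(|B_{P_{v_i}}|)$. I use the balanced regular quadtree $T_u^Q$ on the representatives $\rho_1,\ldots,\rho_m$ already stored in the cluster quadtree, and build $T'_u$ by grafting each $T'_{v_i}$ onto $T_u^Q$ via a compressed edge. Let $\ell_i$ be the leaf of $T_u^Q$ that contains $\rho_i$. By the cluster property, $d(\rho_i,\rho_j)\geq (c_1-2)\max(|B_{P_{v_i}}|,|B_{P_{v_j}}|)$ for every $j\neq i$; combined with the fact that in a balanced quadtree the leaf sizes grow at least linearly with the distance from the smallest leaf, this forces $|\ell_i|=\Omega(c_1|B_{P_{v_i}}|)$. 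Hence a square $S_{N_i}$ of side length $a|R_{v_i}|$ containing $R_{v_i}$ fits inside $\ell_i$, and replacing $\ell_i$ by a compressed node with square $S_{N_i}$ whose single child is the root of $T'_{v_i}$ achieves compression ratio $a$.

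A subtlety remains: by Observation~\ref{obs:close} up to four leaves of $T_u^Q$ may contain points from a given $P_{v_j}$, so $\ell_i$ might contain spillover points from a sibling $v_j\neq v_i$. However, the cluster property guarantees that any such spilled point $p\in P_{v_j}$ satisfies $d(p,P_{v_i})\geq c_1|B_{P_{v_i}}|$, so choosing $|S_{N_i}|\leq c_1|B_{P_{v_i}}|/2$ ensures that $S_{N_i}\setminus R_{v_i}$ is free of points of $P$, fulfilling the compressed-node condition. The spilled points are then absorbed into $S_{N_j}$ when the sibling $v_j$ is processed. Should $S_{N_i}$ fail to align with the quadtree grid of $T_u^Q$, I will invoke Theorem~\ref{thm:shiftbalance} locally around $\ell_i$ to re-align, adding only $O(1)$ work per child.

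For the analysis, Lemma~\ref{lem:c-cluster-QT} gives $|T_u^Q|=O(m)$, and a single pass over the leaves of $T_u^Q$ suffices to attach all compressed subtrees, so the work at $u$ is $O(m)$. Summing over all cluster nodes yields $\sum_u m_u=O(n)$ and total time $O(n)$. The resulting tree is an $O(c_1)$-compressed quadtree of size $O(n)$. The hardest step I foresee is verifying that the local grafting choices for different children remain consistent globally, namely that every point of $P$ lies in exactly one compressed square $S_{N_j}$ after all graftings, and that the final structure genuinely satisfies the compressed-quadtree definition from Section~\ref{sec:cqt&cqt}, including the possibly non-aligned compressed edges.
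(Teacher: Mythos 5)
Your bottom-up induction runs into a cost problem that the paper's top-down construction is specifically designed to avoid. When $R_{v_i}$ (the bounding square of $P_{v_i}$, and hence the base square of $T'_{v_i}$) straddles a boundary of the grid of $T_u^Q$, no amount of refinement of $T_u^Q$ will make $R_{v_i}$ fit inside a single leaf, so $T'_{v_i}$ cannot be attached as a compressed child; this is exactly the situation where Observation~\ref{obs:close} allows $P_{v_i}$ to occupy up to four leaves. Your remedy is to ``invoke Theorem~\ref{thm:shiftbalance} locally around $\ell_i$ \ldots adding only $O(1)$ work per child,'' but Theorem~\ref{thm:shiftbalance} must walk the whole tree it realigns (including recursing through compressed children, since a compressed child of the old tree may itself straddle a boundary of the new grid), so realigning $T'_{v_i}$ costs $\Theta(|T'_{v_i}|)$, not $O(1)$. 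Worse, since you build bottom-up, $T'_u$ is assembled around the grid of $T_u^Q$ before you know how $T_u^Q$ will sit inside $T_w^Q$ for the parent $w$; when $R_u$ straddles a boundary of $T_w^Q$, you have to realign all of $T'_u$ (including the already-grafted $T'_{v_i}$'s) again. The total realignment cost is then $\Theta\bigl(\sum_u |P_u|\bigr)$, which is $\Theta(n\cdot\text{depth})$ in the worst case, not $O(n)$.

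The spillover argument has a related flaw. You observe that a sibling $P_{v_j}$ may spill into $\ell_i$, and propose that ``the spilled points are then absorbed into $S_{N_j}$ when the sibling $v_j$ is processed.'' But $S_{N_j}$ is a single square placed inside $\ell_j$, whereas the spilled points lie in $\ell_i$; you cannot absorb them into $S_{N_j}$ without $R_{v_j}$ (hence $S_{N_j}$) extending across the $\ell_i$--$\ell_j$ boundary. This is the same multi-leaf-spanning obstruction as above, merely seen from the sibling's side, and shrinking $|S_{N_i}|$ does nothing to resolve it. The paper sidesteps both problems by going \emph{top-down}: at the node $u$ it only ever realigns the $O(m_u)$-node quadtree $T_u^Q$ on $u$'s representative points (via Corollary~\ref{cor:qt-shift}), and the children's quadtrees are built afterward relative to the already-shifted coordinates, so nothing is realigned twice; the cost is $\sum_u O(m_u)=O(n)$, with the leaf-splitting cost controlled separately by Corollary~\ref{cor:c-cluster-QT-extended}.
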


    \begin{proof} 
      We construct the compressed quadtree in a top-down fashion, 
      beginning from the root. Suppose that we have constructed a
      partial compressed quadtree $T'$, and let $q$ be the representative
      point for a node $u$ in the $(c_1,c_2)$-cluster tree 
      $T_{(c_1,c_2)}$ that corresponds
      to $T$. We show how to expand $q$ in $T'$ to the corresponding
      quadtree $T_u^Q$. 

      First, we add to $T_u^Q$ a new root that is aligned with the
      old base square and larger by a constant factor, such that
      the old base square does not touch any boundary of the
      new one. Next, we determine by a search from $q$
      which leaf squares of $T'$ intersect
      $T_u^Q$. By Observation~\ref{obs:close}, there are at most
      four such leaves, so this step takes constant time.
      (Note that since we grow the base square of each quadtree that we
      expand, it cannot happen that $T_u^Q$ intersects the boundary of
      its parent quadtree.) Next, we repeatedly split
      each leaf that intersects $T_u^Q$ and that contains some other point
      or compressed child until there are no more such leaves.

      The proof of Observation~\ref{obs:close} shows that
      every leaf square of $T'$ that intersects
      $T_u^Q$ has size at least $c_1d/4$, where $d$
      is the size of $T_u^Q$'s base square.
      If $T_u^Q$ lies completely inside a
      leaf of $T'$, we add $T_u^Q$ as a compressed child to $T'$. 
      If $T_u^Q$ intersects more than one leaf square,  we
      identify a square at most twice the size of $T_u^Q$'s base square that is
      aligned appropriately with the relevant edges of $T$, 
       and apply Corollary~\ref{cor:qt-shift} to shift $T_u^{Q}$ 
       to this new base square.  
      This results in a valid $O(c_1)$ compressed quadtree in which
      $q$ has been expanded. We repeat this process until all the quadtree
      pieces of $T$ have been integrated into a large compressed quadtree.

      The total time for the top-down traversal and for the realignment
      procedures is linear. Furthermore, 
      Corollary~\ref{cor:c-cluster-QT-extended} shows that the 
      total work for splitting
      the leaves of $T'$ is also linear, since the points in the different
      clusters are $(1/c_1)$-semi-separated. Hence, the total running time
      is linear.
    \end{proof}

    \begin{lem}\label{lem:compressed->c-cluster}
      Let $P$ be a planar $n$-point set, and $T$ be an $a$-compressed
      quadtree for $P$. Then we can compute in linear time a
      $(a^{1/5}, 2a^{1/5})$-cluster quadtree for $P$.
    \end{lem}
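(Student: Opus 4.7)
The plan is to reverse the construction of Lemma~\ref{lem:c-cluster->compressed} and build the $(a^{1/5}, 2a^{1/5})$-cluster quadtree directly from $T$. As a first step, I apply Corollary~\ref{cor:balance-'n-thread} to balance the given $a$-compressed quadtree in linear time, producing a balanced $a$-compressed quadtree of size $O(n)$; from now on $T$ denotes this balanced tree. Balancing is essential so that, when reasoning about the neighbourhood of a compressed node $v$, all squares adjacent to $S_v$ have comparable size, which will let me locate nearby points of $P\setminus P_v$ in constant time.

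Next, I extract the cluster tree $T_{(c_1, c_2)}$ by walking $T$ and declaring candidate cluster nodes. The governing observation is that at each compressed node $v$ of $T$, we have $P_v \subseteq S_{\child v}$ with $|S_{\child v}| \leq |S_v|/a$, while all of $P\setminus P_v$ lies outside $S_v$. When the bounding square $B_{P_v}$ is centered sufficiently far from $\partial S_v$, the scale ratio $|S_v|/|B_{P_v}| \geq a$ immediately gives an $\Omega(a^{1/5})$-cluster separation with plenty of slack (since we only need a factor $a^{1/5} \ll a$). When $B_{P_v}$ instead lies near the boundary of $S_v$, I climb to the nearest ancestor whose square affords the required buffer, or I coalesce $P_v$ with a point set sitting in an adjacent square of comparable size; balancedness guarantees that only an $O(1)$-neighbourhood must be examined, so the work is bounded by the size of $T$.

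To ensure the $c_2$-cluster property I argue the converse: every subset $Q\subseteq P$ with $d(Q, P\setminus Q) \geq 2a^{1/5}|B_Q|$ induces a scale gap of size $\Theta(a^{1/5}|B_Q|)$ around $Q$ in the plane, which forces $T$ to exhibit a compressed node whose compressed-child square encloses $B_Q$ at the matching scale, and hence $Q$ is captured by the cluster-tree node produced above. With $T_{(c_1, c_2)}$ in hand, I augment each internal node by a balanced regular quadtree on the representative points of its children, reusing the ``non-compressed'' portions of $T$ that lie between successive cluster scales and realigning base squares with Corollary~\ref{cor:qt-shift} as needed. Lemma~\ref{lem:c-cluster-QT} (and Corollary~\ref{cor:c-cluster-QT-extended}) bounds the total size by $O(n)$, and each stage runs in linear time.

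The main obstacle is rigorously handling the misaligned positioning of compressed children inside their parent squares. Reading off a compressed node $v$ naively as a cluster fails whenever $B_{P_v}$ abuts $\partial S_v$ and an adjacent balanced neighbour of $S_v$ carries a point near the shared boundary. Showing that the climbing/coalescing step always recovers an $a^{1/5}$-cluster within $O(1)$ levels, and that the constants $a^{1/5}$ and $2a^{1/5}$ come out of the argument cleanly in both directions (clusters created and $c_2$-clusters captured), is the delicate part of the proof. Once this geometric bookkeeping is in place, the linear running time follows from the $O(n)$ size of $T$ together with the fact that every step touches only $O(1)$ nodes per compressed node of $T$.
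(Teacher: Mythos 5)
There is a genuine gap: your plan builds cluster candidates from the compressed nodes of $T$, and the correctness of the capture property hinges on the claim that every $Q$ with $d(Q, P\setminus Q)\geq 2a^{1/5}|B_Q|$ ``forces $T$ to exhibit a compressed node whose compressed-child square encloses $B_Q$ at the matching scale.'' This is false. An $a$-compressed quadtree introduces a compressed node only when a single-child chain of roughly $\log a$ levels arises, i.e.\ when the scale gap is a factor $\geq a$; a $2a^{1/5}$-cluster provides a scale gap of only $\Theta(a^{1/5})$, which is far too small (since $a^{1/5}\ll a$) and produces no compressed node at all. So the construction will silently miss many of the $2a^{1/5}$-clusters that the definition of a $(a^{1/5},2a^{1/5})$-cluster quadtree requires you to capture, and the $c_2$-property fails. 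A secondary issue is that the ``climb to the nearest suitable ancestor / coalesce with an adjacent square'' step for misaligned compressed children is asserted but not shown to preserve the $a^{1/5}$-separation of the resulting candidate; coalescing in particular can destroy it.

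The paper sidesteps both problems by not tying cluster discovery to compressed nodes at all. It balances $T$ once to get $T_{\text{top}}$, then runs a purely geometric local search at \emph{every} scale: for each full square $S$ it checks whether the full direct neighbors share a corner (giving a candidate box $U$ of size $|S|$ or $2|S|$), and then tests whether the surrounding $(4a^{1/5}+1)\times(4a^{1/5}+1)$ grid of $|U|$-sized cells is empty. That grid test is exactly what yields the $a^{1/5}$ separation on the forward direction and, applied at the scale of $|B_Q|$, catches every $2a^{1/5}$-cluster $Q$ for the converse, whether or not $Q$ induced compression. Compressed children are handled separately by a small disc-growing argument when setting up the recursion. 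If you want to salvage your plan, you must replace ``compressed node of $T$'' as the cluster trigger with a local, balanced-tree, grid-based test of roughly $a^{1/5}$ extent, performed at all levels.
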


    \begin{proof} 
      We use 
      Corollary~\ref{cor:balance-'n-thread} to balance $T$, 
      but without the recursive calls for the remaining cluster
      nodes. This gives a balanced top-level quadtree $T_\text{top}$ 
      (possibly with some compressed children of $T$ now 
      integrated in the tree), in which each leaf square is  
      associated with at most four points from $P$ or compressed
      children of $T$. Furthermore, for each leaf square $S$ of $T_\text{top}$,
      we have a bounding square for the associated elements
      that is aligned with $T$ and has size at most 
      $|S|/a$. 

      We use $T_{\text{top}}$ to identify a partial 
      cluster quadtree, and we then recurse on the 
      compressed children.
      We say a square $S \in T_\text{top}$ is \emph{full} if 
      there is a leaf below $S$ with a non-empty associated set.
      Otherwise, $S$ is \emph{empty}.
      First, we consider the squares of $T_\text{top}$ in top-down
      fashion and check for each
      full square $S$ which direct neighbors of $S$ 
      are empty  (this can
      be done in constant time since $T$ is balanced). If $S$ has
      at most three full direct neighbors, and if all these
      full squares share a common corner,
      we let $U$ be a square that
      is aligned with $S$ and contains the full squares (i.e., either
      $U = S$ or $U$ is a square of size $2|S|$ that contains $S$ and
      its full neighbors). Next, we 
      consider the squares of size $|U|$ in
      the $(4a^{1/5}+1) \times (4a^{1/5}+1)$ grid centered at $U$
      and check whether they are all empty (again, since $T$ is balanced,
      this takes constant time).
      If so, the points associated with $U$
      define a $a^{1/5}$-cluster. We put a representative
      point for the cluster into $U$, make a new quadtree with root $U$,
      and remove $U$'s children from $T_\text{top}$.
      We continue until all the squares of $T_\text{top}$ have been traversed,
      and then we process all the new trees in a similar way, iterating
      if necessary. After we are done, a part of the cluster quadtree has
      been created, and we need to consider the compressed children
      to set up a recursion.

      For this, we consider each non-empty leaf square $S$ of the partial tree.
      Let $B$ be the bounding square of the associated elements of $S$.
      We know that $|B| \leq |S|/a$, so the disc $D$ of radius  $2|B|a^{1/5}$
      centered at $B$ intersects at most three other leaf squares. 
      We check for each of these leaf squares whether $D$ intersects
      the bounding square of its associated elements.
      If so, we make a new bounding square for the union of these elements and
      repeat. This can happen at most twice more, because in each step
      the size of the bounding square increases by a factor of at most $a^{1/5}$.
      Hence, after three steps we have a disk $D$ of 
      radius $O(|B|a^{4/5})$ that 
      intersects four disjoint squares of size $\Omega(|B|a)$ that share a
      corner. Thus, $D$ must be completely contained in those squares. 
      This also implies that this procedure yields a
      $a^{1/5}$-cluster.  For each such cluster, we 
      create a representative point and
      an appropriate base square for the child quadtree. 
      Then, we process the cluster
      recursively. In the end, we can prune the resulting compressed
      trees to remove unnecessary nodes.

      By the proof of Corollary~\ref{cor:balance-'n-thread},
      and since be spend only constant additional time for each square,
      this procedure takes linear time. Furthermore, as we argued
      above, we create only $a^{1/5}$-clusters. If $Q \subset P$
      is a $2a^{1/5}$-cluster, then $Q$ is either contained in
      at most four leaf squares of $T_\text{top}$ that share a corner or
      the bounding square $B_Q$ intersects at most four squares of 
      $T_\text{top}$ of size $\Theta(|B_Q|)$ such that the
      surrounding $(4a^{1/5}+1) \times (4a^{1/5}+1)$ grid contains only  empty
      squares. In either case, $Q$ (or a superset) is discovered.
      It follows that the result is a valid $(a^{1/5},2a^{1/5})$-cluster 
      quadtree.
    \end{proof}

\section {From a $c$-Cluster Quadtree to the Delaunay Triangulation}
\label{sec:qt->dt}

We now come to the heart of the matter and show how to construct a DT from
a WSPD. Let $P$ be a set of points, and $T$ a
compressed quadtree for $P$. Throughout this section, $\eps$ is a
small enough constant (say, $\eps = \pi/400$), and $k$ is
a large enough constant (e.g., $k = 100$).
Let $u$ and $v$ be two \emph{unrelated} nodes of $T$, i.e.,
neither node is an ancestor of the other. Let $L_{uv}$ be the 
set of directed lines that stab $S_u$ before $S_v$. 
The set $\Phi_{uv} \subseteq [0, 2 \pi)$ of directions for $L_{uv}$
is an interval modulo ${2\pi}$
whose extreme points correspond to
the two diagonal bitangents of $S_u$ and $S_v$, i.e., the two lines that
meet $S_u$ and $S_v$ in exactly one point each and have $S_u$ and
$S_v$ to different sides. 
Figure~\ref {fig:angle-int} illustrates this.

\tweeplaatjes {angle-int} {angle-sep} {(a) The set of possible directions 
between two unrelated nodes $u$ and $v$. (b) The set of possible directions 
between well-separated pairs is small.}

\begin {observation} \label {obs:directions-contained}
Let $u$ and $v$ be two unrelated nodes of $T$, and let $\child u$ be a
descendant of $u$ and $\child v$ be a descendant of $v$.
Then $\Phi_{\child u \child v} \subseteq \Phi_{uv}$.
\end {observation}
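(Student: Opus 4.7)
The plan is to argue by a direct containment/convexity argument. The key observation is that in a compressed quadtree the square of any descendant is contained in the square of its ancestor, so $S_{\child u} \subseteq S_u$ and $S_{\child v} \subseteq S_v$. Moreover, since $u$ and $v$ are unrelated, their lowest common ancestor must be an (uncompressed) internal node whose four children have pairwise interior-disjoint squares, and $S_u$ and $S_v$ are nested in two different such children. Hence $S_u$ and $S_v$ are interior-disjoint convex regions.

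Given these two facts I would take any directed line $\ell$ whose direction lies in $\Phi_{\child u \child v}$ and show that its direction also lies in $\Phi_{uv}$. By definition $\ell$ meets $S_{\child u}$ before $S_{\child v}$; since $S_{\child u}\subseteq S_u$ and $S_{\child v}\subseteq S_v$, the line $\ell$ certainly meets both $S_u$ and $S_v$. The only thing to rule out is that $\ell$ could meet $S_v$ before $S_u$.

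The main (and only nontrivial) step is this impossibility. Suppose for contradiction that $\ell$ enters $S_v$ at some parameter $p$ and leaves $S_v$ at $p'$, and enters $S_u$ at $q$ with $p<q$. Because $S_u$ and $S_v$ are both convex and interior-disjoint, the two intervals $\ell\cap S_u$ and $\ell\cap S_v$ are interior-disjoint intervals on $\ell$, so $p' \le q$. The first point where $\ell$ meets $S_{\child v}\subseteq S_v$ lies in $[p,p']$, and the first point where it meets $S_{\child u}\subseteq S_u$ lies in $[q,\infty)$; hence $\ell$ would meet $S_{\child v}$ strictly before $S_{\child u}$, contradicting the assumption that its direction is in $\Phi_{\child u \child v}$.

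I expect no real obstacle beyond carefully stating the interior-disjointness of $S_u$ and $S_v$ (which requires noting that the least common ancestor of $u$ and $v$ cannot be a compressed node, since compressed nodes have only one child, so it is a regular internal node whose quadrants are interior-disjoint, and $S_u,S_v$ descend into different quadrants). Once that is in place the convexity argument above finishes the proof in one paragraph.
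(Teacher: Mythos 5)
Your proof is correct and takes essentially the same approach as the paper, which simply states that the claim is immediate from $S_{\child u}\subseteq S_u$ and $S_{\child v}\subseteq S_v$; you are supplying the omitted details (interior-disjointness via the least common ancestor, convexity, and the interval argument on the parametrized line) that justify why the containment of squares yields the containment of direction intervals.
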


\begin{proof}
This is immediate, because $S_{\child u} \subseteq S_u$ and
$S_{\child v} \subseteq S_v$.
\end{proof}

\begin{observation}\label{obs:small-Phi}
If $u$ and $v$ are two  nodes of $T$ such that $\{u,v\}$ is
$\eps$-well-separated, then
$|\Phi_{uv}| \leq 8\eps$.
\end{observation}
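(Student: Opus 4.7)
The plan is to bound the angular width of $\Phi_{uv}$ by reducing to a calculation about two disjoint discs, for which the angle between the inner common bitangents has a classical closed form.

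First I would enclose each square in its circumscribed disc: let $D_u$ be the disc of radius $r_u = |S_u|/2$ centered at $c_u$, and similarly $D_v$ with radius $r_v = |S_v|/2$ centered at $c_v$. Since $\{u,v\}$ is $\eps$-well-separated (with $\eps$ small, so certainly $\eps<1$), we have $|c_uc_v| \geq d(S_u,S_v) \geq \max\{|S_u|,|S_v|\}/\eps > r_u + r_v$, so $D_u$ and $D_v$ are disjoint.

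Next I would show the containment $\Phi_{uv} \subseteq \Phi_{D_uD_v}$. Fix a directed line $\ell$ with direction $\theta \in \Phi_{uv}$ stabbing $S_u$ before $S_v$. Because $S_u \subseteq D_u$ and $S_v \subseteq D_v$ and the discs are disjoint, $\ell$ meets $D_u$ and $D_v$ in two disjoint intervals; and since $\ell \cap S_u \subseteq \ell \cap D_u$ comes before $\ell \cap S_v \subseteq \ell \cap D_v$ along $\ell$, the $D_u$-interval must come before the $D_v$-interval. Hence $\theta \in \Phi_{D_uD_v}$, as desired.

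Now I would invoke the classical formula for the inner common tangents of two disjoint discs: the two diagonal bitangents of $D_u$ and $D_v$ meet at the internal center of similitude on the segment $c_uc_v$, and each makes an angle $\arcsin\bigl((r_u+r_v)/|c_uc_v|\bigr)$ with that segment, so the angular width of $\Phi_{D_uD_v}$ equals $2\arcsin\bigl((r_u+r_v)/|c_uc_v|\bigr)$. Using $r_u+r_v = (|S_u|+|S_v|)/2 \leq \eps\, d(S_u,S_v) \leq \eps |c_uc_v|$, the argument of $\arcsin$ is at most $\eps$. Finally, since $\arcsin x \leq \pi x/2 \leq 2x$ on $[0,1]$ (by convexity of $\arcsin$ and the chord from $(0,0)$ to $(1,\pi/2)$), we conclude $|\Phi_{uv}| \leq |\Phi_{D_uD_v}| \leq 4\eps \leq 8\eps$.

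The only real subtlety is the containment step, which needs the disjointness of the enclosing discs together with the order-preservation argument along $\ell$; everything else is a direct computation with a standard tangent formula, and the constant $8$ is simply a comfortable slack over the actual bound of $4\eps$.
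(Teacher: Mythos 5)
Your proof is correct and follows essentially the same route as the paper's: enclose the squares in concentric discs, observe that $\Phi_{uv}$ is contained in the angular interval bounded by the inner common tangents of the discs, and evaluate that interval with the standard $2\arcsin\bigl((r_u+r_v)/|c_uc_v|\bigr)$ formula together with $\arcsin x \leq 2x$. The only differences are cosmetic: you use the circumscribed discs of the individual squares (radii $|S_u|/2$, $|S_v|/2$) while the paper uses two equal discs of radius $\eps|c_uc_v|$, and you make explicit the containment $\Phi_{uv}\subseteq\Phi_{D_uD_v}$ and the disjointness check that the paper leaves implicit, which buys you the slightly sharper bound $4\eps$ before relaxing to $8\eps$.
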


\begin{proof}
  Let $d \eqdef |c_uc_v|$, $D_u$ be the disk around $c_u$ with radius
  $\eps d$, and $D_v$ the disk around $c_v$ with the same radius.\footnote
  {Recall, $c_u$ is the center point of $B_u$.}
  By well-separation, $S_u \subseteq D_u$ and $S_v \subseteq D_v$. Let $\beta$
  be the angle between the 
  diagonal bitangents of $D_u$ and $D_v$. Then
  $|\Phi_{uv}| \leq \beta$, and 
  $ 
  \beta = 2 \arcsin (\eps d / \frac12 d)
  = 2 \arcsin (2 \eps) \leq 8 \eps,
  $ 
  as claimed. 
  Figure~\ref {fig:angle-sep} illustrates this.
\end{proof}

For a number  $\phi \in [0, 2 \pi[$ we define
$\Phi_\phi \eqdef \{\psi \bmod 2 \pi \mid
\psi \in [\phi - \eps/2,
\phi + \eps/2] \}$, i.e., the set
of all directions that differ from $\phi$ by at most
$\eps/2$. We say
that an ordered pair  $(u, v)$ of nodes has direction $\phi$ if
$\Phi_{uv} \cap \Phi_\phi \neq \emptyset$.
We also say that a pair of points $(p,q)$
has direction
$\phi$ if the corresponding pair in the WSPD has
direction $\phi$. The same definition also applies to an edge.
For a given point $p$ in the plane, we define the
$\eps$-cone $\mathcal{C}_\phi (p)$ as the cone with apex $p$ and
opening angle $\eps$ centered around the direction $\phi$.

\subsection {Constructing a Supergraph of the EMST}\label{sec:emstsup}

In the following, we abbreviate $\mathcal{P} \eqdef \wspd(T)$.
The goal of this section is to construct a graph $H$
with vertex set $P$ and $O(n)$ edges, such that
$\emst(P) \subseteq H$. 
It is well known that if we take the graph $H'$ on $P$ with edge set
$E \eqdef \{e_{uv} \mid \{u,v\} \in \mathcal{P}\}$, where
each $e_{uv}$ connects the bichromatic closest pair for $P_u$ and $P_v$, 
then $H'$ contains $\emst(P)$ and has $O(n)$ edges~\cite{Eppstein00}.
However, as defined, it is not clear how to find $H'$ in linear time.
There are several major obstacles. Firstly, even though the tree $T$
has $O(n)$ nodes, it could be that $\sum_{u \in T} |P_u| = \Omega(n^2)$.
Secondly, even if the total size of all $P_u$'s was $O(n)$, we still need
to find bichromatic closest pairs for all \emph{pairs} in $\mathcal{P}$.
Thus, a large set $P_u$ might appear in many pairs of
$\mathcal{P}$, making the total problem size superlinear. Thirdly,
we need to actually solve the bichromatic closest pair problems. A 
straightforward solution to find the bichromatic closest pair for
sets $R$ and $B$ with sizes $r$ and $b$ would take time 
$O((r + b) \log(\min(r,b))$, by computing the Voronoi diagram
for the smaller set and locating all points from the other set in it.
We need to find a way to do it in linear time.

To address these problems, we actually construct a slightly larger graph
$H$, by partitioning the pairs in $\mathcal{P}$ according to their
direction. More precisely,
let $Y = \{0, \eps, 2\eps, \ldots, (l-1)\eps\}$ be a set of
$l$ numbers, where we assume 
that $l = 2\pi/\eps$ is an integer.
For every  $\phi \in Y$, we  construct a graph $H_\phi$
with $O(n)$ edges and then let $H = \bigcup_{\phi \in Y} H_\phi$.
Given $\phi \in Y$, the graph $H_\phi$ is constructed in three steps:
\begin{enumerate}
\item 
  For every node
  $u \in T$, select a subset $Z_u \subseteq P_u$, such that
  $\sum_{u \in T} |Z_u| = O(n)$,
  and such that 
  $\{\{p, q\} \mid p \in Z_u, q \in Z_v, \{u, v\} \in \mathcal P\}$ still
  contains all edges of $\emst(P)$ with orientation $\phi$. This addresses
  the first problem by making the total set size linear.
\item 
  Find a subset $\mathcal{P}' \subseteq \mathcal{P}$,
  such that
  each $u \in T$ appears in $O(1)$ pairs of
  $\mathcal{P}'$, and the set 
  $\{\{p, q\} \mid p \in Z_u, q \in Z_v, \{u, v\} \in \mathcal {P}'\}$
  contains all edges of $\emst(P)$ with orientation $\phi$.
  In particular, we choose
  for every node $u \in T$ a subset
  $\mathcal{P}_u \subseteq \mathcal {P}$ such that
  $\mathcal{P}' = \bigcup_{u  \in T} \mathcal{P}_u$,
  each  pair in $\mathcal{P}_u$ contains $u$, and
  $|\mathcal{P}_u| = O(1)$. This addresses the second problem
  by ensuring that every set appears in $O(1)$ pairs.
\item 
  For every pair $\{u, v\} \in \mathcal{P}'$, we
  include in $H_\phi$ the edge $pq$ such that
  $\{p,q\}$ is the closest pair in $Z_u \otimes Z_v$ (i.e.,
  $\{p,q\} = \argmin_{\{p',q'\} \in Z_u \otimes Z_v} |p'q'|$).
  Here we actually solve all the bichromatic closest pair problems.
\end{enumerate}
Clearly, $H_\phi$ has $O(n)$ edges, and we will show that $H$ is indeed a
supergraph of $\emst(P)$. Our strategy of subdividing the edges
according to their orientation goes back to Yao, who
used a similar scheme to find EMSTs in higher dimensions~\cite{Yao82}. 

\paragraph{Step 1: Finding the $Z_u$'s.}
Recall that we fixed a direction $\phi \in Y$.  Take the set
$\mathcal{P}_\phi \subseteq \wspd(T)$ of pairs with direction
$\phi$. For a pair $\pi \in \mathcal{P}_{\phi}$, we write 
$(u,v)$ for the tuple such that $\pi = \{u,v\}$ and $c_u$ comes before
$c_v$ in direction $\phi$, it is a \emph{directed} pair in $\mathcal{P}_\phi$. 
Call a node $u$ of $T$ \emph{full} if either 
(i) $u$ is the
root; (ii) $u$ is a non-empty leaf; or (iii) $\mathcal{P_\phi}$ has a
directed pair $(u, v)$. 
Let $T'$ be the tree 
obtained from $T$ by
connecting every full node to its closest full ancestor, and by
removing the other nodes. We can compute $T'$ in linear time
through a post-order traversal.  Now, for every leaf $v$ of $T'$, put
the point $p \in P_v$ into the sets $Z_u$, where $u$ is one the $k$\footnote
{Recall, $k$ is a sufficiently large constant.}
closest ancestors of $v$ in $T'$. Repeat this procedure, while changing
property (iii) above so that $\mathcal{P}_\phi$ has a directed pair
$(v,u)$. 
This takes
linear time, and $\sum_{u \in T} |Z_u| = O(n)$. Intuitively, $Z_u$
contains those points of $P_u$ that are sufficiently on the outside
of the point set in direction $\phi$.
Figure~\ref {fig:qt-example+qt-wspd+qt-zu} shows an example.
\drieplaatjesbreed {qt-example} {qt-wspd} {qt-zu}
{(a) A node $u$ in the quadtree, with $|P_u| = 8$.
 (b) The relevant wspd-pairs (in green) for the points in $P_u$
     with direction $\phi$ (up). There are also wspd-pairs between $u$ and
     other nodes above and below it.
 (c) For $k=1$, $Z_u$ contains those $p \in P_u$ for which the 
     lowest wspd-pair in the tree $T'$ that involves $p$ 
     contains $u$.
     In other words, $Z_u$  has the points that do not have a green edge in
     both directions in (b).
}
Variants of the following claim
have appeared several times before~\cite{AgarwalEdScWe91,Yao82}.

\begin{claim}\label{clm:lune_nn}
Let $p \in P$, and let $\mathcal{C}^+_{\phi}(p)$ denote the cone with
apex $p$ and opening angle $17\eps$ centered around $\phi$.
Suppose that $pq$ is an edge of $\emst(P)$ and
$q \in \mathcal{C}^+_{\phi}(p)$.  Then $q$ is the nearest neighbor of $p$
in $\mathcal{C}^+_{\phi}(p) \cap P$.
\end{claim}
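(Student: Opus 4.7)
The plan is a standard Yao-style exchange argument, exploiting the fact that the cone $\mathcal{C}^+_\phi(p)$ has opening angle $17\eps$, which is safely below $\pi/3$ for the chosen constant $\eps = \pi/400$ (indeed $17\eps = 17\pi/400 < \pi/3$). I would argue by contradiction: suppose there is some $q' \in \mathcal{C}^+_\phi(p) \cap P$ with $q' \neq q$ and $|pq'| < |pq|$, and derive a spanning tree of $P$ whose total length is strictly less than that of $\emst(P)$.

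The first step is a purely geometric distance estimate. Since both $q$ and $q'$ lie in $\mathcal{C}^+_\phi(p)$, the angle $\angle qpq'$ is at most $17\eps < \pi/3$. Applying the law of cosines to the triangle $pqq'$ and using $\cos\angle qpq' > 1/2$ gives
\[
|qq'|^2 \;<\; |pq|^2 + |pq'|^2 - |pq|\cdot|pq'| \;=\; |pq|^2 + |pq'|\bigl(|pq'|-|pq|\bigr) \;<\; |pq|^2,
\]
because $|pq'| < |pq|$. Hence $|qq'| < |pq|$.

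Next I would invoke the cycle/cut property of the EMST. Remove the edge $pq$ from $\emst(P)$; this splits the tree into two subtrees $T_p \ni p$ and $T_q \ni q$. The point $q'$ lies in exactly one of them. If $q' \in T_q$, then $\emst(P) \setminus \{pq\} \cup \{pq'\}$ is a spanning tree of $P$ of total length strictly less than $\emst(P)$, since $|pq'| < |pq|$. If instead $q' \in T_p$, then $\emst(P) \setminus \{pq\} \cup \{qq'\}$ is a spanning tree, and by the bound above $|qq'| < |pq|$, so again we have a cheaper spanning tree. Either case contradicts the minimality of $\emst(P)$, completing the proof.

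There is no serious obstacle here; the only mildly delicate step is ensuring $17\eps$ is small enough for the law-of-cosines bound to give $|qq'| < |pq|$, which is why the statement fixes $\eps$ to be a sufficiently small constant (any bound strictly below $\pi/3$ for the opening angle suffices, so the slack $17\eps \ll \pi/3$ is ample).
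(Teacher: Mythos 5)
Your proof is correct and is essentially the same argument as the paper's: both hinge on the fact that the cone opening angle $17\eps$ is safely below $\pi/3$, which via the law of cosines forces any $q'$ in the cone with $|pq'|<|pq|$ to also satisfy $|qq'|<|pq|$. The paper phrases this by citing the empty-lune property of EMST edges and observing that inside the cone the lune coincides with the disk of radius $|pq|$ about $p$, whereas you unfold that lune property into an explicit exchange argument across the cut induced by removing $pq$ --- same content, just written in a more self-contained form.
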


\begin{proof}
If $pq$ is an edge of $\emst(P)$, then the \emph{lune} $L$ defined by
$p$ and $q$ contains no point of $P$~\cite{deBergChvKrOv08}.\footnote{$L$
is the intersection
of two disks with radius $|pq|$, one centered at $p$, the other centered
at $q$.}
Since the opening angle of $\mathcal{C}^+_{\phi}(p)$ is at most  $\pi/3$,
for $\eps$ small enough,
the intersection of $\mathcal{C}^+_{\phi}(p)$ with $L$ equals
the intersection of $\mathcal{C}^+_{\phi}(p)$ with the disk around 
$p$ of radius $|pq|$. Hence, $q$ must be the nearest neighbor of
$p$ in $\mathcal{C}^+_{\phi}(p) \cap P$.
\end{proof}

\begin{lem}\label{lem:constant-levels}
Let $pq$ be an edge of $\emst(P)$ with direction $\phi$, and let
$\{u,v\}$ be the corresponding wspd-pair.
Then $\{p,q\} \in Z_u \otimes Z_v$.
\end{lem}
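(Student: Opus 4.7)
The plan is to prove $p \in Z_u$; the assertion $q \in Z_v$ follows by a symmetric argument applied to direction $\phi + \pi$ and the second pass of the construction. Assume without loss of generality that $(u,v)$ is the directed pair, so that $c_u$ precedes $c_v$ in direction $\phi$. Let $\ell_p$ denote the leaf of $T$ containing $p$; this is also a leaf of $T'$. It suffices to show that $u$ is among the $k$ closest $T'$-ancestors of $\ell_p$, equivalently that at most $k - 1$ full nodes lie strictly between $\ell_p$ and $u$ on the path in $T$.

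I argue by contradiction: suppose there are at least $k$ full strict descendants of $u$ on the path to $\ell_p$. These form a chain $w_1 \supsetneq w_2 \supsetneq \cdots \supsetneq w_k$ in which every $w_i$ appears as the first coordinate of some directed pair $(w_i, w_i') \in \mathcal{P}_\phi$. I focus on the deepest node $w \eqdef w_k$ and its partner $w' \eqdef w_k'$, pick any $q_w \in P_{w'}$, and aim to show $q_w \in \mathcal{C}^+_\phi(p) \cap P$ and $|p q_w| < |pq|$, which contradicts Claim~\ref{clm:lune_nn}.

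For the direction, since $p \in S_w$ and $q_w \in S_{w'}$ the direction of $\overrightarrow{pq_w}$ lies in $\Phi_{ww'}$; this interval has length at most $8\eps$ by Observation~\ref{obs:small-Phi} and meets $\Phi_\phi$ by the definition of directed pair, so $\overrightarrow{pq_w}$ deviates from $\phi$ by at most $8.5\eps$, comfortably inside the opening-angle-$17\eps$ cone $\mathcal{C}^+_\phi(p)$. For the distance, Claim~\ref{clm:quad-squares} applied to $\{u,v\}$ gives $|pq| \geq |S_u|/(2\eps)$, while the same claim applied to $(w,w')$ yields $|p q_w| = O(|S_{\overline w}|/\eps)$. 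Because $w = w_k$ sits at the bottom of a chain of $k$ strict $T$-descendants of $u$, a factor-of-two shrinkage is guaranteed per step (even through compressed nodes, since each $w_{i+1}$ is a strict $T$-descendant of $w_i$), so $|S_{\overline w}| \leq |S_{w_{k-1}}| \leq |S_u|/2^{k-1}$; taking $k$ as large as the paper does forces $|pq_w| < |pq|$.

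Finally, $q_w \neq q$ because the WSPD decomposes $P \otimes P$ uniquely, so $\{p,q\} \in P_u \otimes P_v$ excludes $q \in P_{w'}$. Hence $q_w$ is a point of $P$ strictly closer to $p$ than $q$ inside $\mathcal{C}^+_\phi(p)$, contradicting Claim~\ref{clm:lune_nn}. The delicate steps will be keeping the directional error tight enough that $q_w$ truly lands in $\mathcal{C}^+_\phi(p)$ (handled by Observation~\ref{obs:small-Phi}) and tracking how square sizes shrink down the chain; both are in hand thanks to Observations~\ref{obs:parents-bigger}--\ref{obs:small-Phi} and Claim~\ref{clm:quad-squares}.
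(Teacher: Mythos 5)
Your proof is correct and follows essentially the same route as the paper's: both argue by contradiction that if $u$ is not among the $k$ closest full ancestors of $p$'s leaf, then a chain of $k$ nested strict descendants of $u$ exists, each with an outgoing wspd-pair in direction $\phi$; both apply Claim~\ref{clm:quad-squares} to the deepest such pair to exhibit a point that lies in $\mathcal{C}^+_\phi(p)$ (by Observation~\ref{obs:small-Phi}) and is closer to $p$ than $q$ (by the factor-$2^{k-1}$ geometric shrinkage down the chain), contradicting Claim~\ref{clm:lune_nn}. The only cosmetic differences are that the paper lower-bounds $|pq|$ by $|S_u|/\eps$ directly from well-separation rather than $|S_u|/(2\eps)$ via Claim~\ref{clm:quad-squares}, and that your closing remark that $q_w \neq q$ (while true by uniqueness of the WSPD) is already implied by the strict inequality $|pq_w| < |pq|$, so it is not needed.
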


\begin{proof}
Let $w$ be the leaf for $p$, and suppose for contradiction
that $p \notin Z_u$, i.e., $u$ is not among the  $k$ closest
ancestors of $w$ in $T'$.
This means there exists a
sequence
$u_1, u_2, \ldots, u_k, u$ of $k+1$ distinct ancestors of $w$,
such that each node is an ancestor of all previous nodes and such
there are well-separated
pairs $\{u_1, v_1\}, \{u_2, v_2\}, \ldots, \{u_k, v_k\} \in \mathcal{P}_\phi$.

Let $\mathcal{C}^+_{\phi}(p)$ be the cone with apex $p$ and opening angle
$17\eps$ centered around $\phi$.  By Observation~\ref{obs:small-Phi},
we have $S_v, S_{v_1}, \ldots, S_{v_k} \subseteq \mathcal{C}^+_{\phi}(p)$.
Furthermore, since $\{u, v\}$ is well-separated, 
$d(u, v) \geq |S_u|/\eps$.
Now Claim~\ref{clm:quad-squares} implies that there are
squares $R_{u_1}$, $R_{v_1}$ such that
(i) $S_{u_1} \subseteq R_{u_1} \subseteq S_{u_2}$
and  $S_{v_1} \subseteq R_{v_1}$;
(ii) $|R_{u_1}| = |R_{v_1}|$; and
(iii) $d(R_{u_1}, R_{v_1}) \leq 2|R_{u_1}|/\eps$.
This means that 
\[
d(p, P_{v_1}) \leq 2(1+1/\eps)|R_{u_1}| \leq
2(1+1/\eps)|S_{u_2}| \leq 2(1+1/\eps)|S_u|/2^{k-1},
\]
where in the first inequality we bounded the distance between
any point in $R_{u_1}$ and any point in $R_{v_1}$ by the distance between the
squares plus their diameter (since we do not know where
the points lie inside the squares). The second inequality
comes from $R_{u_1} \subseteq S_{u_2}$ and 
the third inequality is due to the fact that $S_{u_2}$ lies
at least $k-1$ levels below $S_u$ in $T'$.

Since $2(1+1/\eps)/2^{k-1} < 1/\eps$ for $k \geq 3$ and  since
$d(u,v) \geq |S_u|/\eps$,
this contradicts the fact that $q$ is the nearest neighbor
of $p$ inside $\mathcal{C}^+_{\phi}(p)$ (Claim~\ref{clm:lune_nn}). 
Thus, $p$ must lie in $Z_u$.
A symmetric argument shows $q \in Z_v$.
\end{proof}

\paragraph{Step 2: Finding the $\mathcal{P}_u$'s.}
For every node $u \in T$, we include in $\mathcal{P}_u$ the $k$ shortest
pairs in direction $\phi$, i.e., the
pairs $\{u, v\} \in \wspd(T)$ such that
(i) $c_v$ is contained
in the $\eps$-cone $\mathcal{C}_\phi(c_u)$ with apex $c_u$
centered around direction $\phi$;
and (ii) there are less than $k$ pairs $\{u, v'\} \in \wspd(T)$
that fulfill (i) and have
$|c_uc_{v'}| < |c_uc_v|$.
Since $k$ is constant,
the $\mathcal{P}_u$'s can be constructed in total linear
time.
Even though each $\mathcal{P}_u$ contains a constant number of
elements, a node might still appear in many such sets, so
we further prune the pairs:
by examining the $\mathcal{P}_u$'s, determine for each $v \in T$ the
set $\mathcal{Q}_v = \{u \in T \mid v \in \mathcal{P}_u\}$.
For each $\mathcal{Q}_v$, find the $k$ closest neighbors
(measured by the distance between their center points) of $v$ in
$\mathcal{Q}_v$, and for all other $\mathcal{P}_u$'s remove the
corresponding pairs $\{u,v\}$.
Now each node appears in only a constant number of pairs of
$\mathcal{P}' = \bigcup_{u  \in T} \mathcal{P}_u$.

\begin {lem}\label{lem:constant-neighbors}
Let $pq$ be an edge of $\emst(P)$ with orientation $\phi$, and
let $\{u,v\}$ be the corresponding wspd-pair.
Then $\{u,v\} \in \mathcal{P}_u$.
\end {lem}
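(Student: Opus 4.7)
The plan is to verify three conditions for $\{u,v\}$: (i) the direction condition $c_v\in\mathcal{C}_\phi(c_u)$; (ii) that $\{u,v\}$ is among the $k$ shortest pairs in direction $\phi$ from $u$; and (iii) that $\{u,v\}$ survives the pruning step that keeps the $k$ closest $u'\in\mathcal{Q}_v$ to $v$. Condition (i) is immediate: since $pq$ has direction $\phi$, the corresponding wspd-pair satisfies $\Phi_{uv}\cap\Phi_\phi\neq\emptyset$, and Observation~\ref{obs:small-Phi} gives $|\Phi_{uv}|\leq 8\eps$, so the center-to-center direction (which lies in $\Phi_{uv}$) is within $O(\eps)$ of $\phi$, placing $c_v$ in $\mathcal{C}_\phi(c_u)$.

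For (ii), I argue by contradiction. Suppose there are $k$ wspd-pairs $\{u,v_1'\},\dots,\{u,v_k'\}$ with $c_{v_i'}\in\mathcal{C}_\phi(c_u)$ and $|c_uc_{v_i'}|<|c_uc_v|$. Pick $p_i'\in P_{v_i'}$; the WSPD uniqueness property makes the $p_i'$ distinct. The well-separation estimates used in the proof of Lemma~\ref{lem:constant-levels} place each $p_i'$ inside $\mathcal{C}^+_\phi(p)$ at distance at most $(1+O(\eps))|c_uc_v|$ from $p$. Since $pq$ is an EMST edge, Claim~\ref{clm:lune_nn} makes $q$ the unique nearest point of $P$ in $\mathcal{C}^+_\phi(p)\cap P$, forcing $|pp_i'|\geq|pq|$. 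Combining the two bounds restricts $|c_uc_{v_i'}|$ to $[(1-O(\eps))|c_uc_v|,|c_uc_v|)$, so all the $v_i'$ lie in a thin annular sector of the cone. The $\eps$-well-separated property further gives $|S_{v_i'}|\leq\eps|c_uc_v|$, and the standard quadtree packing argument for WSPDs (using Observations~\ref{obs:parents-bigger}--\ref{obs:parents-not-ws} together with Claim~\ref{clm:quad-squares}) now caps the number of such pairs by a constant $C$ independent of $n$. Choosing $k>C$ yields the desired contradiction.

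For (iii), pruning removes $\{u,v\}$ only if more than $k$ nodes $u'\in\mathcal{Q}_v$ satisfy $|c_{u'}c_v|<|c_uc_v|$. Each such $u'$ must have $c_v\in\mathcal{C}_\phi(c_{u'})$, equivalently $c_{u'}\in\mathcal{C}_{\phi+\pi}(c_v)$. Because $qp$ is an EMST edge of direction $\phi+\pi$, the argument of (ii) applied at $q$ with the roles of $u,p$ and $v,q$ swapped bounds the number of such $u'$ by the same constant $C$, so $\{u,v\}$ is retained. The main obstacle is the quadtree/WSPD packing step inside (ii): turning ``$k$ distinct $\eps$-well-separated wspd-pairs whose non-$u$ endpoints all sit in a narrow cone at distance close to $|c_uc_v|$ from $u$'' into a constant bound. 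This is where one must leverage the tree structure of the WSPD rather than purely geometric estimates; the rest of the argument is a straightforward extension of the ideas in Lemma~\ref{lem:constant-levels}.
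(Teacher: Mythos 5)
Your overall structure matches the paper's: show $v$ is among the $k$ closest neighbors of $u$ in direction $\phi$ (and symmetrically for $v$), using Observation~\ref{obs:small-Phi}, Claim~\ref{clm:lune_nn}, and Claim~\ref{clm:quad-squares} to confine the ``competing'' nodes to a thin annular sector and then pack. However, there is a genuine gap exactly where you flag ``the main obstacle,'' and the claim that it is ``standard'' does not fill it. The packing step needs a \emph{lower} bound on the sizes of the squares sitting in the annular sector, and you only establish the upper bound $|S_{v_i'}|\leq\eps|c_uc_v|$. An upper bound alone is useless for packing --- many tiny disjoint squares can fit in a region of diameter $O(\eps)$. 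The missing ingredient comes from the other side of Claim~\ref{clm:quad-squares}(iii): $d(R_u(w),R_w)\leq 2|R_w|/\eps$. Combined with the annulus constraint $d(R_u(w),R_w)\geq 1-2\eps-2|R_w|$ (after normalizing $|c_uc_v|=1$), this forces $|R_w|\geq\eps/8$. Only then do you have $\Theta(k)$ squares of size $\Omega(\eps)$ crammed into a region of diameter $O(\eps)$, which bounds $k$.

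A second technical step you omit is disjointness. The squares $R_w$ from Claim~\ref{clm:quad-squares} need not be pairwise disjoint, and the paper handles this with a case split: for a non-compressed node $w$ one uses $S_w$ itself (disjointness is then automatic since the $w\in W$ are unrelated, and $|S_w|\geq|R_w|/2$), while for a compressed child $w$ one uses $R_w$ and argues separately that $R_w\subseteq S_{\parent w}$ cannot meet any other $S_{w'}$. Your argument, as written, picks representative points $p_i'\in P_{v_i'}$ and appeals to ``the standard quadtree packing argument,'' but points do not pack, and the relevant disjoint objects with a size lower bound are precisely what requires this case analysis. Observations~\ref{obs:parents-bigger} and \ref{obs:parents-not-ws} are not where this comes from --- those control the recursion of $\wspd$, not the size lower bound; the work is done by Claim~\ref{clm:quad-squares}(iii) together with the annulus geometry.

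The decomposition into conditions (i)--(iii) is sound, and (iii) does reduce to (ii) by symmetry as you say. But as it stands the proposal leaves unproven exactly the quantitative statement the lemma is asserting.
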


\begin {proof}
We show that $v$ is among the $k$ closest neighbors
of $u$ in direction $\phi$, a symmetric argument shows that $u$
is among the $k$ closest neighbors of $v$ in direction $-\phi$.
We may assume that  $|c_uc_v| =1$.
Suppose that $\{u,v\}$ is not among the $k$ shortest pairs
in direction $\phi$.
Then there is a set $W$ of $k$ nodes
of $T$ such that for all $w \in W$ we have
(i) $c_w \in \mathcal{C}_\phi(c_u)$;
(ii) $|c_uc_w| < 1$; and
(iii) $\{u,w\} \in \wspd(T)$.
By Claim~\ref{clm:quad-squares}, there exists for every
$w \in W$ a pair of squares $R_u(w), R_w$ such that
$S_u \subseteq R_u(w)$, $S_w \subseteq R_w$ and
$|R_u(w)| = |R_w| \leq 2\eps d(R_u(w), R_w) \leq 2\eps$.

\eenplaatje[scale=0.9]{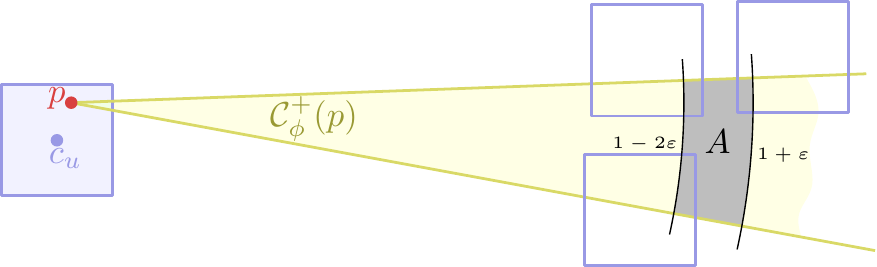} {All squares $R_w$ intersect the region $A$.}

Let $\mathcal{C}^+_{\phi}(p)$ be the cone with apex $p$ and opening angle
$17\eps$ centered around $\phi$. By Observation~\ref{obs:small-Phi},
$S_w \subseteq \mathcal{C}^+_{\phi}(p)$ for all $w \in W$.
Furthermore, every $S_w$ contains a point at distance at most $1+\eps$ from
$p$, because $|c_wp| \leq |c_wc_u| + |c_up| \leq 1+ \eps$.
Also, by Claim~\ref{clm:lune_nn}, every $S_w$ contains a point
at distance at least $|pq| \geq |c_u c_v| - |c_u p| - |q c_v| \geq 1-2\eps$
from $p$. Thus, since $d(R_u(w), R_w) \leq 2|R_w|/\eps$ 
by Claim~\ref {clm:quad-squares} and
$d(R_u(w), R_w) \geq 1 - 2\eps -2|R_w|$, we get $|R_w| \geq \eps/8$,
for $\eps$ small enough. However, this implies that
$W$ has only a constant number of squares:
all $S_w$ (and hence all $R_w$)
intersect the annular segment $A$  inside
$\mathcal{C}^+_{\phi}(p)$ with inner radius $1-2\eps$ and outer radius
$1+\eps$ (see Figure~\ref{fig:lune}).
All $w \in W$ are unrelated, since they are paired with
$u$ in $\wspd(T)$. Furthermore, the set $A$ has diameter $O(\eps)$.
If $w \in W$ is a compressed child, then $R_w$ is contained in the
parent of $w$ and intersects no
other $S_{w'}$, for $w' \in W$. Otherwise, 
$|S_w| \geq |R_w|/2$. Thus, if we assign to each compressed child
$w \in W$ the square $R_w$ and to each other node $w \in W$ the square
$S_w$, we get a collection of $k$ disjoint squares that
meet $A$ and each have diameter $\Omega(\eps)$. 
Since $A$ has diameter $O(\eps)$, 
there can be only a constant number of such
squares, so
choosing $k$ large enough leads to a contradiction.
\end{proof}

\paragraph{Step 3: Finding the Nearest Neighbors.}
Unlike in the previous steps, the algorithm for Step 3 is a bit
involved, so we switch the order and begin by
showing correctness.

\begin{lem}\label{lem:emst_nn}
Let $pq$ be an edge of $\emst(P)$ with direction
$\phi$ and let $\{u,v\}$ be the corresponding wspd-pair.
Then $\{p,q\}$ is the closest pair in $Z_u \otimes Z_v$.
\end{lem}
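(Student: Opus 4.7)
My plan is to apply the cycle property of the Euclidean minimum spanning tree: if $pq$ lies in $\emst(P)$, then in any cycle through $pq$ there must be another edge at least as long as $pq$. Since Lemma~\ref{lem:constant-levels} already guarantees $\{p,q\} \in Z_u \otimes Z_v$, only minimality of $|pq|$ over $Z_u \otimes Z_v$ remains to be established.

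I would argue by contradiction. Suppose some $\{p',q'\} \in Z_u \otimes Z_v$ (with $p' \in Z_u \subseteq P_u$ and $q' \in Z_v \subseteq P_v$) satisfies $|p'q'| < |pq|$. The first step is to control the ``sideways'' displacements $|pp'|$ and $|qq'|$ using well-separation: since $\{u,v\}$ is $\eps$-well-separated (with respect to the $S$-squares, by the remark following Theorem~\ref{thm:wspd}), one has $\max\{|S_u|,|S_v|\} \leq \eps\, d(S_u,S_v) \leq \eps|pq|$, and together with $P_u \subseteq S_u$, $P_v \subseteq S_v$ (the $c$-cluster case only changes this by a $(1+1/c)$ factor) this yields $|pp'|, |qq'| \leq \eps|pq|$. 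Now consider the cycle $p \to q \to q' \to p' \to p$ in the complete graph on $P$. Its three edges other than $pq$ have lengths at most $\eps|pq|$, $|p'q'|$, and $\eps|pq|$, respectively, and each is strictly smaller than $|pq|$ once $\eps < 1$. Hence $pq$ is the unique longest edge of the cycle, so by the cycle property $pq \notin \emst(P)$, contradicting the assumption that $pq$ is an EMST edge. Therefore no such $\{p',q'\}$ exists and $\{p,q\}$ is the closest pair in $Z_u \otimes Z_v$.

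The main obstacles I anticipate are only bookkeeping. First, one must handle the degenerate situations $p = p'$ or $q = q'$, in which the $4$-cycle collapses to a triangle; the same length comparison still identifies $pq$ as the unique longest edge, so the cycle-property argument is unaffected. Second, one must verify that the four vertices $p,q,p',q'$ really do form a (possibly degenerate) cycle, i.e.\ that $p \neq q'$ and $p' \neq q$; this follows because $u$ and $v$ are unrelated nodes of the quadtree, so $S_u$ and $S_v$, and therefore $P_u$ and $P_v$, are disjoint. With these two points addressed, the cycle-property argument goes through cleanly.
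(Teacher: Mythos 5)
Your proof is correct, and it takes a genuinely different route from the paper's. The paper first shows, via the cut property (applied to the subset $Z_u \cup Z_v$), that $pq \in \emst(Z_u \cup Z_v)$. It then argues that well-separation forces all intra-group distances in $Z_u$ and $Z_v$ to be shorter than all inter-group distances, and runs Kruskal's algorithm on $Z_u \cup Z_v$ to conclude that the spanning tree of $Z_u \cup Z_v$ has exactly one inter-group edge, namely the closest pair in $Z_u \otimes Z_v$, which must therefore coincide with $pq$. You instead apply the cycle property directly in $K_P$: if some $\{p',q'\} \in Z_u \otimes Z_v$ were strictly closer, well-separation bounds $|pp'|$ and $|qq'|$ by $\eps|pq|$, so the cycle $p \to q \to q' \to p' \to p$ has $pq$ as its unique longest edge, contradicting $pq \in \emst(P)$. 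This avoids the detour through $\emst(Z_u \cup Z_v)$ and the Kruskal argument entirely, and is a bit more self-contained; the paper's phrasing, on the other hand, is closer in spirit to the standard reduction of EMST to bichromatic closest pairs that motivates the whole construction. Your handling of the degenerate cases ($p=p'$ or $q=q'$, collapsing to a triangle) and the disjointness of $P_u$ and $P_v$ is sound, and the caveat about the $(1+1/c)$ loss for $c$-cluster quadtrees matches the paper's own remark following Theorem~\ref{thm:wspd}.
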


\begin{proof}
By Lemma~\ref{lem:constant-levels}, we have $\{p,q\} \in Z_u \otimes Z_v$.
Furthermore, the cut property of minimum spanning trees implies that
$pq \in \emst(Z_u \cup Z_v)$.
Since $\{u, v\}$ is well-separated, we have
\begin{equation}\label{equ:order}
\max_{\{p',q'\} \in Z_u \otimes Z_u \cup Z_v \otimes Z_v} |p'q'| <
\min_{\{p',q'\} \in Z_u \otimes Z_v} |p'q'|.
\end{equation}
Now consider an execution of Kruskal's MST algorithm on
$Z_u \cup Z_v$~\cite[Chapter~23.2]{CormenLeRiSt09}.
Let $\{p', q'\}$ be the closest pair
in $Z_u \otimes Z_v$. By $(\ref{equ:order})$, the algorithm
considers $p'q'$ only after processing all edges in
$Z_u \otimes Z_u \cup Z_v \otimes Z_v$. Hence, at that point
the sets $Z_u$ and $Z_v$ are each contained in a connected component
of the partial spanning tree, and $\emst(Z_u \cup Z_v)$ can have at most one
edge from $Z_u \otimes Z_v$. Hence, it follows that
$\{p,q\} = \{p',q'\}$, as claimed.
\end{proof}

We now describe the algorithm.
For ease of exposition, we take $\phi = \pi/2$ (i.e.,
we assume that $P$ is rotated so that $\phi$ points in the positive
$y$-direction).
Note that now the squares are not generally axis-aligned anymore, but this
will be no problem.
Given a point $p \in \R^2$, we define the four
\emph{directional cones}
$\mathcal{C}_\leftarrow(p)$,$\mathcal{C}_\uparrow(p)$,
$\mathcal{C}_\rightarrow(p)$, and $\mathcal{C}_\downarrow(p)$
as the leftward, upward, rightward and downward cones with apex $p$ and
opening angle $\pi/2$. The directional cones subdivide the plane into
four disjoint sectors. We will also need the \emph{extended} rightward
cone $\mathcal{C}_\rightarrow^+(p)$ with apex $p$ and opening
angle $\pi/2+16\eps$.

\begin{claim}\label{clm:emptycone}
Let $(u,v)$ be a directed pair in  $\mathcal{P}_\phi$, 
and suppose that $\{p,q\}$ with $p \in P_u$ and $q \in P_v$
is the closest pair for $(u,v)$. 
Then $\mathcal{C}_\uparrow(p) \cap P_u = \emptyset$ and
$\mathcal{C}_\downarrow(q) \cap P_v = \emptyset$.\footnote{Recall 
that we set $\phi=\pi/2$,
so $\uparrow$ and $\downarrow$ mean ``in direction $\phi$''
and ``in direction $-\phi$''.}
\end{claim}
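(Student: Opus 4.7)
The plan is to prove each half by contradiction, using the well-separation of $\{u,v\}$ to show that moving $p$ upward can only decrease its distance to $q$. By symmetry, I will only treat the first statement; the claim about $q$ follows by swapping the roles of $u$ and $v$ and reversing $\phi$.

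Suppose for contradiction that there is a point $p' \in P_u$ with $p' \neq p$ and $p' \in \mathcal{C}_\uparrow(p)$. I would like to conclude $|p'q| < |pq|$, which would contradict the assumption that $\{p,q\}$ is the closest pair between $P_u$ and $P_v$. First, I would locate the direction $\vec{pq}$: since $p \in S_u$ and $q \in S_v$, the directed line from $p$ to $q$ stabs $S_u$ before $S_v$, so its direction lies in $\Phi_{uv}$. Because $\{u,v\}$ is $\eps$-well-separated, Observation~\ref{obs:small-Phi} gives $|\Phi_{uv}| \leq 8\eps$, and since $(u,v)$ is a directed pair in $\mathcal{P}_\phi$ with $\phi = \pi/2$, the interval $\Phi_{uv}$ lies within $O(\eps)$ of $\pi/2$.

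Next, since $p' \in \mathcal{C}_\uparrow(p)$, the vector $\vec{pp'}$ has direction in $[\pi/4, 3\pi/4]$, so the angle $\theta$ between $\vec{pp'}$ and $\vec{pq}$ satisfies $|\theta| \leq \pi/4 + O(\eps)$, giving $\cos\theta \geq \sqrt{2}/2 - O(\eps) > 1/4$ for $\eps$ small enough. At the same time, because $p, p' \in S_u$ and $\{u,v\}$ is $\eps$-well-separated, we have $|pp'| \leq |S_u| \leq \eps\, d(S_u, S_v) \leq \eps|pq|$. Then the law of cosines yields
\[
|p'q|^2 = |pq|^2 + |pp'|^2 - 2|pq|\,|pp'|\cos\theta < |pq|^2
\]
provided $2|pq|\cos\theta > |pp'|$, i.e.\ $\cos\theta > |pp'|/(2|pq|) \geq \eps/2$, which holds for $\eps$ sufficiently small.

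The main (and only) subtlety I expect is a bookkeeping issue when working with a $c$-cluster rather than a compressed quadtree, because in that case $P_u$ is not strictly contained in $S_u$. I would handle this by replacing $S_u$ with an inflated square of diameter $O(|S_u|)$ that contains $P_u$; the well-separation of the corresponding pair is weakened only by a factor $1 + O(1/c)$, which is absorbed into the same calculation once $\eps$ is chosen small enough. The symmetric argument applied to $q$ with $\vec{qp}$ (whose direction lies within $O(\eps)$ of $-\pi/2$) and any $q' \in \mathcal{C}_\downarrow(q) \cap P_v$ gives $|pq'| < |pq|$, completing the proof.
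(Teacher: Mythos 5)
Your proof is correct and rests on exactly the same two ingredients as the paper's: well-separation gives $|\Phi_{uv}| \leq 8\eps$ (so $\overrightarrow{pq}$ is within $O(\eps)$ of $\phi$), and well-separation gives $|S_u| \leq \eps\,d(S_u,S_v) \leq \eps|pq|$. The only difference is packaging: you argue directly via the law of cosines that a violating $p' \in \mathcal{C}_\uparrow(p)\cap P_u$ would satisfy $|p'q| < |pq|$, whereas the paper sets $|pq|=1$ and shows geometrically that the cone at $q$ intersected with $S_v$ lies inside the empty unit disk $D$ centered at $p$ (by computing that the cone boundary meets $\partial D$ at distance $\sqrt{2}-O(\eps)$ from $q$, while $S_v$ has diameter $\leq \eps$). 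These are two renderings of the same argument, and yours is arguably the more elementary computation. One small slip: the chain ``$\cos\theta > |pp'|/(2|pq|) \geq \eps/2$'' has the last inequality reversed; you mean $|pp'|/(2|pq|) \leq \eps/2$, which together with $\cos\theta \geq \sqrt{2}/2 - O(\eps) > \eps/2$ gives what you want. The parenthetical remark about inflated squares in the $c$-cluster case is a reasonable precaution, though for WSPD pairs the paper already notes the degradation is only a $1+1/c$ factor, which your slack in $\eps$ absorbs.
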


\eenplaatje {closest-cone} {The intersection points of $D$ and the boundary of
$\mathcal{C}_\downarrow(q)$ lie outside $S_v$, so
$S_v \cap \mathcal{C}_\downarrow(q) \subseteq D$.}

\begin{proof}
We prove the claim for $\mathcal{C}_\downarrow(q)$, the argument for
$\mathcal{C}_\uparrow(p)$ is symmetric. We may assume that $|pq| = 1$.
By assumption, the unit disk $D$ centered at $p$ contains no points of $P_v$,
so it suffices to show that
$\mathcal{C}_\downarrow(q) \cap S_v \subseteq D$.
Since $\{u,v\} \in \mathcal{P}_\phi$
and by Observation~\ref{obs:small-Phi}, the direction of the
line $\overline{pq}$ differs from $\phi$ by at most $17\eps$.
Therefore, the intersections of the boundaries of
$\mathcal{C}_\downarrow(q)$ and $D$
have distance at least $\sqrt{2} - O(\eps)$ from $q$. However, the
pair $\{u,v\}$ is well-separated, so all points in $P_v$ have distance at
most $\eps$ from $q$, which implies the claim;
see Figure~\ref{fig:closest-cone}.
\end{proof}

Given a  set $Z_u$ for a node $u$ of $T$, we define the \emph{upper chain}
of $Z_u$, $\UC(Z_u)$ as follows:
remove from $Z_u$ all points $p$ such that
$\mathcal{C}_\uparrow(p)$ contains a point from $Z_u$
in its interior. Then sort $Z_u$ by $x$-coordinate and connect consecutive
points by 
line segments.
All segments of $\UC(Z_u)$ have slopes in $[-1,1]$.
Similarly, we define the \emph{lower chain} of $Z_u$, $\LC(Z_u)$, by requiring
the cones $\mathcal{C}_\downarrow(p)$  for the points in $\LC(Z_u)$ to be
empty. The goal now is to compute $\UC(Z_u)$ and $\LC(Z_u)$ for all
nodes $u$.

\tweeplaatjesbreed {chain-care} {chain-graph}
{(a) A node $u$ with $|Z_u| = 5$, 
     and the relevant part of the quadtree. 
 (b) The graph $\Gamma$. 
 Tree edges are black (going right). To avoid clutter, we just show
 two wspd edges (green, going left).
}

Define a directed graph $\Gamma$ as follows:
we create two copies
of each vertex $u$ in $T$, called $\texttt{start}(u)$ and
$\texttt{end}(u)$, and we add a directed edge from
$\texttt{start}(u)$ to $\texttt{end}(u)$ for each
such vertex.
Furthermore,
we replace every edge $uv$ of $T$ ($u$ being the parent of $v$)
by two edges: one from $\texttt{start}(u)$ to $\texttt{start}(v)$,
and one from $\texttt{end}(v)$ to  $\texttt{end}(u)$. We
call these edges the \emph{tree}-edges.
Finally, for every pair $\{u,v\} \in \wspd(T)$, where
$S_v$ is wholly contained in the extended rightward cone
$\mathcal{C}_\rightarrow^{+}(c_u)$, we create a directed
edge from $\texttt{end}(u)$ to $\texttt{start}(v)$. These
edges are called \emph{wspd}-edges.
Figure~\ref {fig:chain-care+chain-graph} shows a small example.

\begin{claim}\label{clm:acyclic}
The graph $\Gamma$ is acyclic.
\end {claim}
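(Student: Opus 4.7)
The plan is to exhibit a real-valued potential function on the vertices of $\Gamma$ that is non-decreasing along every edge and \emph{strictly} increasing along every wspd-edge. Since any cycle must have total change $0$ in any such potential, this would force the cycle to avoid wspd-edges, but the subgraph on tree-edges and self-loops alone is visibly acyclic (the \texttt{start}-copy of $T$ is directed from root to leaves, the \texttt{end}-copy from leaves to root, and the only bridges go from \texttt{start} to \texttt{end}, never back).

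Concretely, I would define
\[
  f(\texttt{start}(u)) \eqdef \inf_{p \in S_u} p_x,
  \qquad
  f(\texttt{end}(u)) \eqdef \sup_{p \in S_u} p_x,
\]
where $p_x$ denotes the $x$-coordinate in the rotated frame in which $\phi$ points in the $+y$-direction (so the ``rightward'' cone is centered on $+x$). Three of the four edge types are immediate: the self-loop $\texttt{start}(u)\to\texttt{end}(u)$ trivially satisfies $f$-monotonicity; for a tree-edge $\texttt{start}(u)\to\texttt{start}(v)$ we have $S_v\subseteq S_u$ (using that compressed children's squares are contained in the parent's square), hence $\inf_{S_v} p_x \ge \inf_{S_u} p_x$; symmetrically for $\texttt{end}(v)\to\texttt{end}(u)$.

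The main work is the wspd-edges. For such an edge $\texttt{end}(u)\to\texttt{start}(v)$ we have $S_v \subseteq \mathcal{C}^+_\rightarrow(c_u)$, a cone of half-angle $\pi/4 + 8\eps$, so
\[
  c_{v,x} - c_{u,x} \;\ge\; |c_u c_v|\cos(\pi/4+8\eps).
\]
Since every point of $S_u$ lies within $|S_u|/2$ of $c_u$ and every point of $S_v$ within $|S_v|/2$ of $c_v$, it suffices to show $c_{v,x}-c_{u,x} > (|S_u|+|S_v|)/2$. Well-separation of $\{u,v\}$ gives $|c_u c_v|\ge d(S_u,S_v)\ge \max\{|S_u|,|S_v|\}/\eps$, so for $\eps$ small enough (as assumed at the start of the section) the desired strict inequality $f(\texttt{start}(v)) > f(\texttt{end}(u))$ follows.

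I expect the only real obstacle to be bookkeeping the cone geometry cleanly; everything else is routine once the potential function is chosen. One small subtlety to double-check is that $S_v \subseteq S_u$ really does hold for \emph{every} tree-edge (including the edge into a compressed node), which is guaranteed by the convention in Section~\ref{sec:cqt&cqt} that a compressed child's square lies inside its parent's. With $f$ non-decreasing on all edges and strictly increasing on wspd-edges, no cycle can contain a wspd-edge, and the remaining acyclic subgraph rules out any cycle at all.
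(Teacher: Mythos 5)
Your potential-function argument is correct, and it is essentially the paper's own proof repackaged: the paper tracks the leftmost $x$-coordinate of the wspd-edge targets $S_{v_i}$ along the cycle and shows it strictly increases, which is exactly what your $f$ makes precise vertex-by-vertex (with the same geometric engine: well-separation plus containment of $S_v$ in $\mathcal{C}^+_\rightarrow(c_u)$ forces $S_v$ strictly to the right of $S_u$). The uniform potential on all of $\Gamma$'s vertices is a slightly cleaner way to state the monotonicity, but the underlying idea is the same.
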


\begin{proof}
Suppose $C$ is a cycle in $\Gamma$.
The tree-edges form an acyclic subgraph, so $C$
has at least one wspd-edge. Let $e_1, e_2, \ldots, e_z$ be
the sequence of wspd-edges along $C$,
and let $v_1, \ldots, v_z$ be
such that the endpoint of $e_i$
is of the form $\texttt{start}(v_i)$.
Finally, write
$C = e_1 \rightarrow C_1 \rightarrow e_2 \rightarrow C_2 \rightarrow \cdots
\rightarrow e_z \rightarrow C_z$, where $C_i$ is the sequence of
tree-edges between two consecutive wspd-edges.
Each $C_i$ consists of a (possibly empty) sequence of
$\texttt{start}-\texttt{start}$
edges, followed by one $\texttt{start}-\texttt{end}$
edge and a (possibly empty) sequence of $\texttt{end}-\texttt{end}$ edges.
Thus, the origin of the next wspd-edge $e_{i+1}$ is
an $\texttt{end}$-node for an ancestor or a
descendant of $v_i$ in $T$. In either case, by the definition of wspd-edges,
it follows that the leftmost point of $S_{v_{i+1}}$ lies
strictly to the right of the leftmost point of $S_{v_i}$. 
Indeed, write $e_{i+1} = (u_{i+1}, v_{i+1})$. Then $S_{v_{i+1}}$
lies strictly to the right of $S_{u_{i+1}}$, because 
$S_{v_{i+1}} \subseteq \mathcal{C}_\rightarrow^{+}(c_{u_{i+1}})$
and because $\{u_{i+1}, v_{i+1}\}$ is well-separated.
If $u_{i+1}$ is a descendant of $v_i$, then 
$S_{u_{i+1}} \subseteq S_{v_i}$ and the leftmost
point of $S_{u_{i+1}}$ cannot lie to the left of the leftmost
point of $S_{v_i}$, which implies the claim. 
If $u_{i+1}$ is an ancestor of $v_i$, then all of $S_{v_{i+1}}$ is
strictly to the right of $S_{v_i}$, and the claim follows again.
Thus,
the leftmost point of $S_{v_{i+1}}$ lies strictly to the right of the leftmost
point of $S_{v_{i}}$ and the leftmost point of
$S_{v_1}$ lies strictly to the right of the leftmost point in
$S_{v_z}$, which is absurd.
\end {proof}

Let $\leq_{\Gamma}$ be a topological ordering
of the nodes of $\Gamma$.

\begin{claim} \label {clm:order}
Any pair $(p, q)$ of points in $Z_u$ with $p \leq_\Gamma q$ satisfies
$q \notin \mathcal{C}_\leftarrow(p)$.
\end {claim}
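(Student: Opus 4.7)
The plan is to argue by contradiction: assume $p, q \in Z_u$ satisfy $q \in \mathcal{C}_\leftarrow(p)$ with $p \leq_\Gamma q$, and then exhibit a directed path in $\Gamma$ that starts at (the node representing) $q$ and ends at (the node representing) $p$. Since $\leq_\Gamma$ is a topological order of an acyclic graph (Claim~\ref{clm:acyclic}), such a path would force $q \leq_\Gamma p$, a contradiction when $p \ne q$.

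To build the path, I first locate the unique wspd-pair. Since $p,q \in P$ are distinct, there is a unique $\{u',v'\} \in \wspd(T)$ with $\{p,q\} \in P_{u'} \otimes P_{v'}$; after relabelling, $q \in P_{u'}$ and $p \in P_{v'}$, and both $u',v'$ are unrelated descendants of $u$. Next I verify that this is a rightward pair in the sense of Section~\ref{sec:emstsup}, namely $S_{v'} \subseteq \mathcal{C}^+_\rightarrow(c_{u'})$: the assumption $q \in \mathcal{C}_\leftarrow(p)$ means the direction from $q$ to $p$ makes angle at most $\pi/4$ with the positive $x$-axis, while Observation~\ref{obs:small-Phi} tells us that all directions from points of $S_{u'}$ to points of $S_{v'}$ lie in an interval of length at most $8\eps$. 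Hence every such direction, and in particular the direction from $c_{u'}$ to any point of $S_{v'}$, deviates from the positive $x$-axis by at most $\pi/4 + 8\eps$, which is strictly less than the half-opening $\pi/4 + 8\eps$ of $\mathcal{C}^+_\rightarrow$, so $S_{v'}$ lies inside this extended cone. By the construction of $\Gamma$, this yields the wspd-edge $\texttt{end}(u') \to \texttt{start}(v')$.

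It remains to connect $q$ to $u'$ and $v'$ to $p$ via tree-edges. Since $q \in P_{u'}$, the leaf $\ell_q$ is a descendant of $u'$ in $T$; the end-end tree-edges climb from $\texttt{end}(\ell_q)$ up to $\texttt{end}(u')$. Similarly, the start-start tree-edges descend from $\texttt{start}(v')$ down to $\texttt{start}(\ell_p)$. Composing with the wspd-edge above (and, if necessary, the $\texttt{start}(\ell_q) \to \texttt{end}(\ell_q)$ edge and/or the $\texttt{start}(\ell_p)\to \texttt{end}(\ell_p)$ edge), I obtain a directed path in $\Gamma$ from either copy of $\ell_q$ to either copy of $\ell_p$. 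Under the identification of a point with the start/end node of its leaf (which is how $\leq_\Gamma$ is being used to order points of $Z_u$), this path witnesses $q <_\Gamma p$, contradicting the hypothesis $p \leq_\Gamma q$.

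The one delicate point is the angular bookkeeping in step two: the extended cone has only a $16\eps$ slack over the leftward cone's complement, and I need to show that the angular spread contributed by well-separation fits inside this slack. Everything else (the unique wspd-pair, the tree-edge paths, the application of acyclicity) is routine given the constructions in Section~\ref{sec:emstsup}.
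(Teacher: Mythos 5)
Your proof is correct and follows essentially the same route as the paper's: assume $q \in \mathcal{C}_\leftarrow(p)$, take the unique wspd-pair containing $\{p,q\}$, invoke Observation~\ref{obs:small-Phi} to place the square of $p$'s side inside the extended rightward cone of $q$'s side (so a wspd-edge exists), and then compose with the start-start / end-end tree-edges to obtain a directed path in $\Gamma$ from $q$'s leaf to $p$'s leaf, contradicting $p \leq_\Gamma q$ via Claim~\ref{clm:acyclic}. One small slip in the angular bookkeeping: you write that $\pi/4 + 8\eps$ is ``strictly less than'' the half-opening $\pi/4+8\eps$ — that is an equality, not a strict inequality — but the bound is tight by design (the extended cone's half-opening was chosen as $\pi/4 + 8\eps$ precisely to absorb the $8\eps$ spread of Observation~\ref{obs:small-Phi}), so the containment $S_{v'} \subseteq \mathcal{C}_\rightarrow^+(c_{u'})$ still holds and the argument goes through.
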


\begin {proof}\label{clm:respecting_order}
Suppose for the sake of contradiction that $q \in \mathcal{C}_\leftarrow(p)$.
Let $v$, $w$ be the descendants of $u$ such that
$q \in P_v$, $p \in P_w$, 
and $\{v,w\} \in \wspd(T)$.
By Observation~\ref{obs:small-Phi}, $S_w$ lies completely
in the extended rightward cone $\mathcal{C}_\rightarrow^+(c_v)$, so
$\Gamma$ has an edge from $\texttt{end}(v)$ to $\texttt{start}(w)$.
Now the tree edges in $\Gamma$
require that the leaf with $q$
comes before $\texttt{end}(v)$ and the leaf with $p$ comes after
$\texttt{start}(w)$, and the claim follows.
\end{proof}

\tweeplaatjes {topo-order} {topo-chain}
{(a) A set of points, and all edges with a slope in $[-1,1]$.
     By Claim~\ref {clm:order}, these edges are all (possibly implicitly)
     present in $\Gamma$.
 (b) A possible ordering $\leq_\Gamma$ of the points that respects $\Gamma$.
}

Since all edges on $\UC(Z_u)$ have slopes in $[-1, 1]$, we immediately
have the following corollary.

\begin{cor}
The ordering $\leq_\Gamma$ respects the orders of $\UC(Z_u)$ and $\LC(Z_u)$.

\end{cor}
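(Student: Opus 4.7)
The corollary is an immediate consequence of Claim~\ref{clm:order}. My plan is to show that for any two points $p, q \in \UC(Z_u)$ with $p_x < q_x$, we have $p <_\Gamma q$, and then to derive the analogous statement for $\LC(Z_u)$ by a symmetric argument.

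The key geometric fact is that the segment joining any two points of $\UC(Z_u)$ has slope in $[-1, 1]$. Indeed, by the chain construction neither of two upper-chain points can lie in the interior of the other's upward cone, and a short case analysis on the relative heights shows this forces the slope of the connecting segment into $[-1, 1]$. Given this, if $p, q \in \UC(Z_u)$ satisfy $p_x < q_x$, then $p$ lies in the leftward cone $\mathcal{C}_\leftarrow(q)$. Applying the contrapositive of Claim~\ref{clm:order} to the pair $(q, p)$ (both elements of $Z_u$), we conclude $q \not\leq_\Gamma p$, and hence $p <_\Gamma q$ in the topological order. Thus $\leq_\Gamma$ restricted to $\UC(Z_u)$ coincides with the $x$-coordinate order of the chain.

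The same argument, with upward cones replaced by downward cones throughout, gives the corresponding statement for $\LC(Z_u)$. The only routine caveat is the borderline case in which the slope equals exactly $\pm 1$; this is handled either by invoking the general position assumption or by fixing a convention about which of the four directional cones contains its boundary rays, and it does not affect the structure of the argument. Since no new machinery beyond Claim~\ref{clm:order} is required, I do not foresee any significant obstacle.
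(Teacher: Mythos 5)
Your proof is correct and takes essentially the same approach as the paper, which notes only that ``all edges on $\UC(Z_u)$ have slopes in $[-1,1]$'' and declares the corollary immediate from Claim~\ref{clm:order}; you have simply spelled out the elided steps (the slope bound, the membership $p \in \mathcal{C}_\leftarrow(q)$, and the contrapositive of the claim).
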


For every node $u \in T$, let $\leq_u$ be the order 
that $\leq_{\Gamma}$ induces on the leaf nodes corresponding to $Z_u$. 

\begin{claim}\label{clm:topo_sort}
All the orderings $\leq_u$ can be found in total time $O(n)$.
\end{claim}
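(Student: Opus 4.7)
The plan is to first compute a single topological order $\leq_\Gamma$ of $\Gamma$, then extract all the induced orders $\leq_u$ simultaneously by a single pass over the leaves in this order.

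First I would verify that $\Gamma$ itself has size $O(n)$: the vertex set has two copies per node of $T$ and so has size $O(n)$, the tree-edges number at most $O(n)$ (two per edge of $T$), and the wspd-edges are a subset of $\wspd(T)$, which has size $O(n)$ by Theorem~\ref{thm:wspd}. Since $\Gamma$ is acyclic by Claim~\ref{clm:acyclic}, a standard topological sort (e.g.\ Kahn's algorithm or a DFS post-order) computes $\leq_\Gamma$ in $O(n)$ time on a pointer machine.

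Next, observe that during the construction of the sets $Z_u$ in Step~1, each leaf $w$ of $T'$ is explicitly inserted into $Z_u$ for at most $k = O(1)$ ancestors $u$. Hence, while building the $Z_u$'s, I can simultaneously build for each leaf $w$ of $T$ a constant-size list $A(w)$ of the nodes $u$ with the point of $P_w$ in $Z_u$; this takes $O(n)$ total time, and $\sum_w |A(w)| = \sum_u |Z_u| = O(n)$. For each node $u$ I also allocate an empty list $L_u$, which will hold $\leq_u$.

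Now I traverse the leaves of $T$ in the order induced by $\leq_\Gamma$ (which we can read off directly from the topological sort, since the leaves are among the vertices of $\Gamma$). For each leaf $w$ visited, and for every $u \in A(w)$, I append $w$ to $L_u$. Since each append is a pointer-machine operation and the total number of appends is $\sum_w |A(w)| = O(n)$, this phase also takes $O(n)$ time. After the traversal, each $L_u$ contains exactly the points of $Z_u$ in the order $\leq_\Gamma$, which is precisely $\leq_u$.

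The main obstacle I anticipate is the pointer-machine constraint: we cannot use array indexing on the topological order. This is handled by storing, alongside each vertex of $\Gamma$ produced by the topological sort, a pointer back to the associated leaf of $T$, and at each leaf a pointer into $A(w)$; then the whole procedure is a straightforward traversal of $O(n)$-size linked structures with $O(n)$ total pointer operations, giving the claimed $O(n)$ bound.
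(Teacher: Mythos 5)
Your proposal is correct and follows essentially the same approach as the paper: topologically sort $\Gamma$ once in $O(n)$ time (using that $|\wspd(T)| = O(n)$ bounds the number of wspd-edges), then scan the leaves in that order and append each point to the lists $L_u$ of the $O(1)$ nodes $u$ with $p \in Z_u$. One small inaccuracy: since Step~1 runs its insertion procedure twice (once for directed pairs $(u,v)$ and once for $(v,u)$), each leaf can be inserted into up to $2k$ sets $Z_u$ rather than $k$; this is still $O(1)$, so the $O(n)$ bound is unaffected.
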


\begin {proof}
To find the orderings $\leq_u$, perform a topological sort on
$\Gamma$, in linear time\footnote{Note that
$\Gamma$ has $O(n)$ edges, as
$|\wspd(T)| = O(n)$.}~\cite[Chapter~22.4]{CormenLeRiSt09}.
With each node $u$ of $T$ store a list $L_u$, initially empty.
We scan the nodes of $\Gamma$ in order. Whenever we see 
a leaf for a point $p \in P$, we append $p$ to the
at most $2k$ 
lists $L_u$ for the nodes $u$ with $p  \in Z_u$.
The total running time is 
$O(n + \sum_{u \in T} |Z_u|) = O(n)$, and $L_u$
is sorted according to $\leq_u$ for each $u \in T$.
\end{proof}

\begin{claim}\label{clm:findUCLC}
For any node $u \in T$, if $Z_u$ is sorted according to $\leq_{u}$,
we can find $\UC(Z_u)$ and
$\LC(Z_u)$ in
time $O(|Z_u|)$.
\end{claim}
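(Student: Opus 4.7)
My plan is to construct $\UC(Z_u)$ (and $\LC(Z_u)$ symmetrically) by a Graham-scan-like sweep that processes the points in the given $\leq_u$ order while maintaining a stack $S$ of candidate upper-chain points. When a new point $p$ arrives, I will first repeatedly pop the current top $t$ so long as $p \in \mathcal{C}_\uparrow(t)$, and I will then push $p$ provided the (possibly new) top $t$ does not satisfy $t \in \mathcal{C}_\uparrow(p)$. The chain $\LC(Z_u)$ will be handled by the same algorithm with $\mathcal{C}_\downarrow$ replacing $\mathcal{C}_\uparrow$.

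The crucial structural property I plan to maintain is that the elements $c_1,\dots,c_m$ of $S$ from bottom to top are mutually incomparable under $\mathcal{C}_\uparrow$ (this is directly enforced by the push and pop rules) and are strictly left-to-right ordered, i.e.\ $c_i \in \mathcal{C}_\leftarrow(c_j)$ for all $i < j$. The second part follows from Claim~\ref{clm:order}: the $\leq_u$-order of insertion gives $c_i \notin \mathcal{C}_\rightarrow(c_j)$, the antichain property rules out $c_i \in \mathcal{C}_\uparrow(c_j)$, and $c_i \in \mathcal{C}_\downarrow(c_j)$ would place $c_j \in \mathcal{C}_\uparrow(c_i)$, again contradicting the antichain; only $\mathcal{C}_\leftarrow$ remains.

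The main obstacle will be to argue that checking only the top of $S$ suffices to detect when $p$ is dominated by some stack element. I plan to prove the following geometric dichotomy under general position: if $c_i \in \mathcal{C}_\uparrow(p)$ and $c_i \in \mathcal{C}_\leftarrow(c_j)$, then either $c_j \in \mathcal{C}_\uparrow(p)$ or $p \in \mathcal{C}_\leftarrow(c_j)$. The proof should be a short triangle-inequality computation that writes $c_{j,x} - p_x = (c_{j,x} - c_{i,x}) + (c_{i,x} - p_x)$, bounds $|c_{j,y} - p_y| \leq |c_{j,y} - c_{i,y}| + |c_{i,y} - p_y|$, and substitutes the two given cone inequalities. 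Applied up the stack, if some $c_i \in S$ dominates $p$, then for each later $c_j$ the alternative $p \in \mathcal{C}_\leftarrow(c_j)$ is forbidden (it would force $p \leq_u c_j$, contradicting $c_j \leq_u p$ from Claim~\ref{clm:order}), so $c_j$ and hence the current top both dominate $p$.

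Completeness and soundness will then follow by case analysis. A point $p \in \UC(Z_u)$ lies in no other point's upward cone, so nothing in $S$ dominates $p$ when it is processed and $p$ is pushed, and $p$ is never subsequently popped; conversely, a point not on $\UC(Z_u)$ has a dominator $q$, and, using that the $\mathcal{C}_\uparrow$ relation is transitive, the chain of dominators through any previously popped elements always produces a stack witness when $p$ is processed, which by the lemma above must be the current top, so $p$ is not pushed. Since every point is pushed and popped at most once, the algorithm runs in $O(|Z_u|)$ time, and the symmetric run for $\LC(Z_u)$ adds another $O(|Z_u|)$, proving the claim.
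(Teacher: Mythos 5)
Your algorithm is the same Graham-type pass the paper uses (pop the top $t$ while $p\in\mathcal{C}_\uparrow(t)$, push unless the new top lies in $\mathcal{C}_\uparrow(p)$), and the geometric facts you invoke — the stack is an antichain ordered by $\mathcal{C}_\leftarrow$, and the dichotomy ``$c_i\in\mathcal{C}_\uparrow(p)$ and $c_i\in\mathcal{C}_\leftarrow(c_j)$ imply $c_j\in\mathcal{C}_\uparrow(p)$ or $p\in\mathcal{C}_\leftarrow(c_j)$'' — are correct (easiest seen in the rotated coordinates $a=x+y$, $b=y-x$, where the four cones become open quadrants). That said, your correctness argument is not the paper's and it has a real gap.

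The problem is in your soundness direction. You argue that a point $p\notin\UC(Z_u)$ ``is not pushed,'' and you justify this by tracing a chain of dominators through previously popped elements down to a surviving stack element, which by your dichotomy lemma and Claim~\ref{clm:order} forces the current top to dominate $p$. But that chain only exists if \emph{some} dominator of $p$ is processed before $p$. Claim~\ref{clm:order} gives information only about the $\mathcal{C}_\leftarrow$/$\mathcal{C}_\rightarrow$ quadrants of the processing order; a point $q\in\mathcal{C}_\uparrow(p)$ can perfectly well satisfy $p\leq_u q$. When every dominator of $p$ comes later in $\leq_u$, the stack top does not dominate $p$ at the moment $p$ is scanned, $p$ \emph{is} pushed, and your argument says nothing further. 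The conclusion ``$p$ is not pushed'' is therefore false as stated, and without a second half showing such a $p$ is later popped (by showing the earliest dominator $q$ of $p$ necessarily dominates everything pushed on top of $p$, again via the dichotomy plus Claim~\ref{clm:order} to rule out $q\in\mathcal{C}_\leftarrow(\cdot)$), the soundness claim does not follow.

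The paper avoids this bifurcation entirely by choosing a stronger invariant: after processing the first $i$ points of $L_u$, the stack equals $\UC(L_i)$ — the upper chain of the \emph{prefix}, not the subset of $\UC(Z_u)$ seen so far. With that formulation the induction step is a short three-case check (does $p$ fall in $\mathcal{C}_\downarrow(r)$, $\mathcal{C}_\uparrow(r)$, or $\mathcal{C}_\rightarrow(r)$ of the current top $r$?), using cone containments like $\mathcal{C}_\uparrow(p)\subseteq\mathcal{C}_\uparrow(r)$ when $p\in\mathcal{C}_\uparrow(r)$. That invariant automatically accounts for points that are pushed and later popped, since $\UC(L_i)$ legitimately changes as $i$ grows. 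I would suggest either adopting the prefix-invariant formulation or explicitly adding the ``pushed then popped'' half to your soundness argument. (A minor slip: you describe $p\in\UC(Z_u)$ as ``lying in no other point's upward cone,'' but the paper's definition is that $\mathcal{C}_\uparrow(p)$ contains no other point; the two are not equivalent. Your subsequent use of ``dominates'' is consistent with the correct definition, so this is just a wording issue.)
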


\begin{proof}
We can find $\UC(Z_u)$ by a Graham-type pass through $L_u$.
An example of such a list is shown in Figure~\ref {fig:topo-chain}.
That is, we scan $L_u$ from left to right, 
maintaining
a tentative upper chain $U$, stored as a stack. 
Let $r$ be the rightmost point of $U$.
On scanning a new point $p$, we distinguish cases depending
in which of the four quadrants
$\mathcal{C}_\leftarrow(r)$, $\mathcal{C}_\uparrow(r)$, 
$\mathcal{C}_\rightarrow(r)$, or $\mathcal{C}_\downarrow(r)$
it lies in.
By Claim~\ref {clm:respecting_order}, we know that 
$p \notin \mathcal{C}_\leftarrow(r)$. 
If $p \in \mathcal{C}_\downarrow(r)$, we discard $p$ and continue to
the next point in $L_u$. 
If $p \in \mathcal{C}_\uparrow(r)$, we
pop $r$ from $U$ and reassess $p$ from the point of view of the
new rightmost point of $U$.
If $p \in \mathcal{C}_\rightarrow(r)$, we push $p$ onto $U$.

The algorithm takes $O(|Z_u|)$ time, because every point is pushed
or popped from the stack at most once and because it takes constant
time to decide which point to push or pop.
Now we argue correctness. For this, we use induction in order to
prove that after $i$ steps, we have correctly computed the
upper chain for the first $i$ points in $L_u$, $\UC(L_i)$. This clearly
holds for the first point. Now consider the cases for the $(i+1)$-th
point $p$. 
\begin{itemize}
\item
If $p \in \mathcal{C}_\downarrow(r)$, then $p$ is certainly
not on the upper chain. Furthermore, 
$\mathcal{C}_\downarrow(p) \subseteq \mathcal{C}_\downarrow(r)$,
so $p$ cannot conflict with any other point on $\UC(L_i)$,
so in this case $\UC(L_{i+1}) = \UC(L_i)$.
\item
If $p \in \mathcal{C}_\uparrow(r)$, then 
$\mathcal{C}_\uparrow(p) \subseteq \mathcal{C}_\uparrow(r)$ and $p$
must be on $\UC(L_{i+1})$. Furthermore, every point that we remove
from $\UC(L_i)$ has $p$ in its upper cone and cannot be on
$\UC(L_{i+1})$. Now let $r'$ be the first point of $\UC(L_i)$ that
is not popped. Since 
$\mathcal{C}_\leftarrow(r') \subseteq \mathcal{C}_\leftarrow(p)$ and since
the remainder of $\UC(L_i)$ lies inside of 
$\mathcal{C}_\leftarrow(r')$, there are no conflicts between $p$ and the
points we have not popped. Thus $\UC(L_{i+1})$ is computed correctly.
\item If $p \in \mathcal{C}_\rightarrow(r)$, then 
$\mathcal{C}_\uparrow(p) \subseteq \mathcal{C}_\uparrow(r) \cup
\mathcal{C}_\rightarrow(r)$, and $p$ is on $\UC(L_{i+1})$, because
$\mathcal{C}_\rightarrow(r)$ contains no points from $L_i$.
Futhermore, $\UC(L_i)$ is contained in $\mathcal{C}_\leftarrow(p)$,
so $p$ conflicts with no point on $\UC(L_i)$ and the result is correct.
\end{itemize}
This finished the inductive step and the correctness proof.
The lower chain is computed in an analogous manner.
\end{proof}

\begin{claim}\label{clm:find_closest}
For any node $u \in T$ and any pair $\{u,v\}$ in $\mathcal{P}_u$, given
$\UC(Z_u)$ and $\LC(Z_v)$, we can find the closest pair in $Z_u \otimes Z_v$
in time $O(|Z_u| + |Z_v|)$.
\end{claim}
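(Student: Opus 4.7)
The plan is to reduce the closest-pair search in $Z_u \otimes Z_v$ to a closest-pair search between the two vertically stacked $x$-monotone chains $\UC(Z_u)$ and $\LC(Z_v)$, and then to exploit the Monge structure of the resulting squared-distance matrix.

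First I would show that the closest pair $\{p,q\}$ in $Z_u \otimes Z_v$ must satisfy $p \in \UC(Z_u)$ and $q \in \LC(Z_v)$. Claim~\ref{clm:emptycone} is stated for $P_u, P_v$, but its proof uses only the well-separation of $\{u,v\}$ and Observation~\ref{obs:small-Phi}, both of which are insensitive to shrinking to the subsets $Z_u \subseteq P_u$ and $Z_v \subseteq P_v$. Re-running the argument therefore gives $\mathcal{C}_\uparrow(p) \cap Z_u = \emptyset$ and $\mathcal{C}_\downarrow(q) \cap Z_v = \emptyset$, which by the definitions of $\UC$ and $\LC$ puts $p$ on $\UC(Z_u)$ and $q$ on $\LC(Z_v)$.

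Next I would record the geometry of the two chains. Since $\{u,v\}\in \mathcal{P}_\phi$ with $\phi=\pi/2$, Observation~\ref{obs:small-Phi} and well-separation combine to force $\overline{c_u c_v}$ to lie within $O(\eps)$ of vertical and to satisfy $d(c_u,c_v) \geq \max\{|S_u|,|S_v|\}/\eps$. Hence the vertical gap between $S_u$ and $S_v$ is $\Omega(\max\{|S_u|,|S_v|\}/\eps)$, dwarfing both diameters; $\LC(Z_v)$ sits strictly above $\UC(Z_u)$; and both chains are $x$-monotone with edge slopes in $[-1,1]$.

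Finally, I would write $\UC(Z_u)=(p_1,\ldots,p_a)$ and $\LC(Z_v)=(q_1,\ldots,q_b)$ in $x$-order and define $D[i][j] = |p_iq_j|^2$. A short expansion gives
\[
D[i_1][j_1]+D[i_2][j_2]-D[i_1][j_2]-D[i_2][j_1] \;=\; -2\,(p_{i_2}-p_{i_1})\cdot(q_{j_2}-q_{j_1}),
\]
and the slope bound $|\Delta y| \leq \Delta x$ on each chain, combined with $\Delta x,\Delta X > 0$ whenever $i_1<i_2$ and $j_1<j_2$, makes the dot product nonnegative, so the quantity above is $\leq 0$. Thus $D$ is a Monge matrix, and its row minima---the indices $q^{\ast}(p_i)$---can be computed in $O(a+b)$ time by the SMAWK algorithm; the desired closest pair is then the one minimising $|p_iq_{q^{\ast}(p_i)}|$ over $i$, giving total time $O(|Z_u|+|Z_v|)$. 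I expect the main obstacle to be recognising and rigorously establishing this Monge structure---without it one is tempted by naive two-pointer sweeps that do not in general visit the optimal pair---but once it is in place the rest is an invocation of a standard linear-time subroutine.
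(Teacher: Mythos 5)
Your proposal is correct, and it takes a genuinely different route from the paper. The paper closes the two chains into a simple $x$-monotone polygon, runs the Chin--Wang constrained Delaunay triangulation algorithm on it, and reads the closest pair off an edge of that DT; the authors explicitly remark that a more direct method likely exists but that Chin--Wang is needed later anyway, so they reuse it. You instead observe that the squared-distance matrix $D[i][j]=|p_iq_j|^2$ between the two $x$-monotone chains is Monge, and I checked your algebra: the mixed second difference equals $-2(p_{i_2}-p_{i_1})\cdot(q_{j_2}-q_{j_1})$, and the slope bound $|\Delta y|\le\Delta x$ (which holds for the whole vector $p_{i_2}-p_{i_1}$ by summing consecutive edges, not just for single edges) together with $\Delta x>0$ on both chains makes the dot product nonnegative, so $D$ is Monge and SMAWK applies. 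Your first step is also sound: Claim~\ref{clm:emptycone} re-run verbatim with $Z_u,Z_v$ in place of $P_u,P_v$ yields $\mathcal{C}_\uparrow(p)\cap Z_u=\emptyset$ and $\mathcal{C}_\downarrow(q)\cap Z_v=\emptyset$, which by the definition of $\UC$ and $\LC$ places the closest pair on the chains (the paper needs this step too, and leaves it equally implicit). The main trade-off: your route avoids the heavy Chazelle polygon-triangulation machinery hiding inside Chin--Wang and is arguably cleaner as a self-contained argument for this claim; the paper's route has the virtue of not introducing a new subroutine (SMAWK) since Chin--Wang is already a dependency of the overall algorithm. One minor caveat worth a sentence if you pursued this: the paper insists on the pointer-machine model, so you would want to note that SMAWK's access pattern (stack-based column reduction, recursion on alternate rows, and a linear sweep to fill in the interleaved minima) is implementable on a pointer machine given the points stored in linked lists; this is true but not entirely folklore, whereas the Chin--Wang/Chazelle chain is already established in that model.
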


\begin{proof}
Connect the endpoints of $\UC(Z_u)$ and $\LC(Z_v)$ to obtain a simple
polygon (note that the two new edges cannot intersect the chains,
because $\{u,v\}$ has direction $\phi=\pi/2$,
so by Observation~\ref {obs:small-Phi}
$\Phi_{uv} \subseteq [\pi/2-8\frac12\eps,\pi/2+8\frac12\eps]$
and all edges of the chains have
slopes in $[-1,1]$). Then use the algorithm of Chin and
Wang~\cite{ChinWa99} to find the constrained DT of
the polygon in time $O(|Z_u| + |Z_v|)$. The closest pair will appear
as an edge in this DT, and hence can be found in
the claimed time.\footnote{Actually, the resulting polygon is $x$-monotone,
so the most difficult part of the algorithm by Chin and
Wang~\cite{ChinWa99}, finding the visibility map of the
polygon~\cite{Chazelle91}, becomes much easier~\cite{GareyJoPrTa78}.
The problem may allow a much more direct solution, but since
we will later require Chin and Wang's algorithm in full
generality, we do not pursue this direction.}
\end{proof}

\begin{lem}\label{lem:findNN}
In total linear time, we can find for every $u \in T$ and for
every pair $\{u,v\} \in \mathcal{P}_u$ the closest pair in $Z_u \otimes Z_v$.
\end{lem}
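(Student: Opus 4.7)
The plan is to combine the three machinery claims established in Step~3 into a simple three-phase pipeline and then do a weighted double-counting argument to conclude that the total work is $O(n)$.

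Concretely, I would proceed as follows. \textbf{Phase A (topological sort).} First, I invoke Claim~\ref{clm:topo_sort} to construct, in total time $O(n)$, the linearly sorted list $L_u$ for every node $u \in T$; by that claim each $L_u$ is ordered according to the restriction $\leq_u$ of the topological order $\leq_\Gamma$ to the points of $Z_u$. \textbf{Phase B (upper and lower chains).} Next, for every node $u \in T$ I run the Graham-type scan of Claim~\ref{clm:findUCLC} on $L_u$ to produce $\UC(Z_u)$ and $\LC(Z_u)$; by that claim this costs $O(|Z_u|)$ per node, hence $O\bigl(\sum_{u \in T} |Z_u|\bigr) = O(n)$ in total, by Step~1 of the construction. \textbf{Phase C (bichromatic closest pairs).} Finally, I iterate over the pairs. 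For each $u \in T$ and each pair $\{u, v\} \in \mathcal{P}_u$, since $\{u,v\} \in \mathcal{P}_\phi$, I know from Observation~\ref{obs:small-Phi} that $\Phi_{uv}$ is contained in a tiny interval around $\phi = \pi/2$, so the ``upper'' and ``lower'' roles are unambiguous: I apply Claim~\ref{clm:find_closest} to the already-computed $\UC(Z_u)$ and $\LC(Z_v)$ (swapping if $v$ is below $u$) to obtain the closest pair in $Z_u \otimes Z_v$ in time $O(|Z_u| + |Z_v|)$.

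To bound the cost of Phase~C, I charge each pair $\{u, v\} \in \mathcal{P}'$ a cost of $O(|Z_u| + |Z_v|)$ and sum:
\[
\sum_{\{u,v\} \in \mathcal{P}'} \bigl(|Z_u| + |Z_v|\bigr) \;=\; \sum_{u \in T} \bigl|\{w : \{u,w\} \in \mathcal{P}'\}\bigr| \cdot |Z_u|.
\]
By the pruning performed in Step~2 of Section~\ref{sec:emstsup}, every node $u$ appears in only $O(1)$ pairs of $\mathcal{P}'$, so the right-hand side is $O\bigl(\sum_{u \in T} |Z_u|\bigr) = O(n)$. Summing the contributions of Phases A, B and C gives a total running time of $O(n)$, as claimed.

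The only subtlety worth flagging is the very mild one in Phase~C: Claim~\ref{clm:find_closest} needs $\UC$ of the ``upper'' set and $\LC$ of the ``lower'' set with respect to the direction $\phi$. Because we precompute \emph{both} $\UC(Z_u)$ and $\LC(Z_u)$ for every $u$ in Phase~B, we can orient each pair $\{u,v\} \in \mathcal{P}_\phi$ correctly at no additional asymptotic cost (Observation~\ref{obs:small-Phi} guarantees the orientation is well-defined once $\eps$ is small enough). No other step requires new geometric argument, so the main work of the lemma is already carried by Claims~\ref{clm:topo_sort}, \ref{clm:findUCLC}, and \ref{clm:find_closest} together with the $O(1)$-appearances bound on $\mathcal{P}'$.
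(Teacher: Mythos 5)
Your proposal is correct and takes essentially the same route as the paper: the paper's proof is a two-line invocation of Claims~\ref{clm:topo_sort}, \ref{clm:findUCLC}, and \ref{clm:find_closest} followed by exactly the bound $\sum_{u\in T}\sum_{\{u,v\}\in\mathcal{P}_u}(|Z_u|+|Z_v|)=O(\sum_u|Z_u|)=O(n)$ using the fact that every node appears in $O(1)$ pairs after the Step~2 pruning. Your phased exposition and the explicit double-counting identity are a more spelled-out version of that same argument, and the orientation remark is a fair (if minor) clarification that the paper leaves implicit.
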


\begin{proof}
By Claims~\ref{clm:topo_sort}, \ref{clm:findUCLC}, \ref{clm:find_closest},
the time to find all the closest pairs is  proportional to
\[
O(n + \sum_{u \in T} \sum_{\{u,v\} \in \mathcal{P}_u} (|Z_u| + |Z_v|))
= O(n + \sum_{u \in T} |Z_u|) = O(n),
\]
because every $v$ appears in only a constant number of $\mathcal{P}_u$'s.
\end{proof}

\paragraph{Putting it together.}
We thus obtain the main result of this section.

\begin{theorem}\label{thm:qtree->h}
Given a compressed quadtree $T$ for $P$ and \emph{$\wspd(T)$}, we can 
find a graph
$H$ with $O(n)$ edges such that $H$ contains all edges of $\emst(P)$.
It takes $O(n)$ time to construct $H$.
\end{theorem}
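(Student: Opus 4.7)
The plan is to assemble the theorem directly from the three-step construction developed above, applied for each $\phi \in Y$, and to argue separately that (a) the total size of the resulting graph is $O(n)$, (b) every edge of $\emst(P)$ appears in it, and (c) the construction time is $O(n)$.

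First I would set $H \eqdef \bigcup_{\phi \in Y} H_\phi$, where $|Y| = 2\pi/\eps = O(1)$ is a constant number of directions. To bound the number of edges: by construction each $\mathcal{P}'$ contains $O(n)$ pairs, since $|T| = O(n)$ and each $u \in T$ belongs to $O(k) = O(1)$ pairs of $\mathcal{P}'$. Step~3 produces one edge per pair of $\mathcal{P}'$, so $|H_\phi| = O(n)$, and summing over the constantly many $\phi$'s gives $|H| = O(n)$.

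For correctness I would fix an arbitrary edge $pq$ of $\emst(P)$ and exhibit a $\phi \in Y$ such that $pq \in H_\phi$. Because the $\eps$-cones $\mathcal{C}_\phi$ for $\phi \in Y$ cover $[0,2\pi)$, there exists a $\phi \in Y$ with $\Phi_\phi \cap \Phi_{uv} \neq \emptyset$, where $\{u,v\}$ is the unique wspd-pair in $\mathcal{P}$ containing $\{p,q\}$; thus $pq$ has direction $\phi$ in the sense defined earlier. Lemma~\ref{lem:constant-levels} then yields $\{p,q\} \in Z_u \otimes Z_v$, Lemma~\ref{lem:constant-neighbors} guarantees $\{u,v\} \in \mathcal{P}_u \subseteq \mathcal{P}'$, and Lemma~\ref{lem:emst_nn} shows that $\{p,q\}$ is exactly the closest pair in $Z_u \otimes Z_v$. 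Consequently $pq$ is the edge that Step~3 inserts into $H_\phi$ for the pair $\{u,v\}$, so $pq \in H_\phi \subseteq H$.

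For the running time, I would account for the work per direction $\phi$. Computing $\wspd(T)$ takes $O(n)$ time once, by Theorem~\ref{thm:wspd}. Step~1 (constructing $T'$ and distributing points into the sets $Z_u$) is linear by the post-order traversal argument, and yields $\sum_u |Z_u| = O(kn) = O(n)$. Step~2 is linear since selecting the $k$ shortest cone-pairs per node and then pruning to $k$ incident pairs per node takes $O(k)$ work per node. Step~3 is linear by Lemma~\ref{lem:findNN}. Summing over the $O(1)$ values of $\phi$ keeps the total time at $O(n)$, completing the construction of $H$.

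The only place that requires more than bookkeeping is the verification that every EMST edge is captured, and that follows immediately by chaining Lemmas~\ref{lem:constant-levels}, \ref{lem:constant-neighbors}, and~\ref{lem:emst_nn}; the rest is just combining previously established ingredients, so I do not expect any real obstacle here beyond checking that the set $Y$ is chosen finely enough (which holds because the $\eps$-cones tile the circle and the wspd-pair of any EMST edge has direction set $\Phi_{uv}$ of length at most $8\eps$ by Observation~\ref{obs:small-Phi}, hitting at least one $\Phi_\phi$).
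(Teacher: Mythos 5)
Your proposal is correct and follows the same route as the paper's own proof, which simply cites Lemmas~\ref{lem:constant-levels}, \ref{lem:constant-neighbors}, and~\ref{lem:emst_nn} for correctness and the Step~1/Step~2 discussion plus Lemma~\ref{lem:findNN} for the running time; you have just unpacked those references. (One tiny redundancy: $\wspd(T)$ is given as input to the theorem, so there is no need to invoke Theorem~\ref{thm:wspd} to compute it.)
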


\begin{proof}
The fact that $H$ contains the EMST follows
from Lemmas~\ref{lem:constant-levels}, \ref{lem:constant-neighbors} and
\ref{lem:emst_nn}. The running time follows from the discussion at the
beginning of Steps 1 and 2 
and from Lemma~\ref{lem:findNN}.
\end{proof}

\subsection {Extracting the EMST}
  We want to extract $\emst(P)$, but no general-purpose deterministic linear 
  time pointer machine algorithm for this problem is known:
  the fastest such algorithm whose running time can be analyzed
  needs $O (n \alpha (n))$ steps~\cite {Chazelle00}.
  However, the special structure of the graph $H$ and the $c$-cluster quadtree
  $T$ make it possible to achieve linear time.

  We know that $H$  contains all EMST edges.
  Furthermore, by construction each edge of $H$ corresponds
  to a wspd-pair. Thus, we can associate each edge $e$
  of $H$ with two nodes $u$ and $v$ such that $\{u,v\}$ is
  the wspd-pair for the endpoints of $e$. The pruning operation in
  Step~2 of Section~\ref{sec:emstsup} ensures that each
  node is associated with $O(1)$ edges of $H$, and we store
  a list of these edges at each node of $T$.
  Now we use Theorem~\ref{thm:cluster-compressed-equiv}
  to convert our quadtree into a $c$-cluster quadtree $T$.
  During this conversion, we can preserve the information
  about which edges of $H$ are associated with which 
  nodes of $T$, because each old square overlaps with 
  only a constant number of new squares of similar
  size. A special case are those edges that have an
  endpoint associated with a compressed child.
  During the conversion of Theorem~\ref{thm:cluster-compressed-equiv},
  compressed children either become regular squares (during the balancing
  operation), or they correspond to $c$-clusters and are replaced
  by representative points in the parent tree. In the former case,
  we handle the compressed child just like any regular square, in the latter
  case, we associate $e$ with the square that contains the representative
  point for the $c$-cluster.
  
  Next, we would like ensure for each edge $e$ of $H$ that the associated
  squares in $T$ have size between $\eps|e|/2$ and $2\eps|e|$, where
  $|e|$ denotes the length of $e$. 
  For the endpoints that were associated with regular squares in the
  original quadtree, such a square can be found by considering a constant 
  number of ancestors and descendants in $T$, by 
  Claim~\ref{clm:quad-squares}.  If the associated square was a compressed
  child that has become a regular square, we may need to consider more than
  a constant number of ancestors, but each such ancestor is considered only
  a constant number of times, since the compressed child has a 
  constant number of associated edges. If $e$ has an endpoint 
  that is now associated with a representative point, we may need to subdivide
  the square containing the representative point, but by 
  Corollary~\ref{cor:c-cluster-QT-extended} the total work is linear.
  Thus, in total linear time we can obtain a $c$-cluster
  tree $T$ such that each square of $T$ is associated with $O(1)$
  edges of $H$ and such that the two associated square of each 
  edge $e$ of $H$  contain the endpoints of $e$ and have size
  $\Theta(\eps|e|)$. 

  By the cut property of minimum spanning trees,
  $\emst(P)$ is connected within each $c$-cluster. Thus, we can process
  the clusters bottom-up, and we
  only need to find the EMST within a $c$-cluster given
  that the points in each child are already connected. 
  Within this cluster, $T$ is a regular uncompressed quadtree,
  and we can use the structure of $T$ to perform an appropriate
  variant of \boruvka's MST algorithm~\cite{Boruvka26,Tarjan83} in linear time.

  \begin{lem}\label{lem:extract-emst}
    Let $T'$ be a subtree of $T$ corresponding to a $c$-cluster, and
    let $E$ be the edges in $H$ associated with $T'$.
    Then $\emst(P) \cap E$ can be computed in time $O(|E| + |V(T')|)$.
  \end{lem}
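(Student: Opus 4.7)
The plan is to apply a bottom-up variant of \boruvka's algorithm along $T'$, exploiting two facts: by Step~2 of Section~\ref{sec:emstsup}, every node of $T'$ has $O(1)$ associated edges of $E$, so $|E| = O(|V(T')|)$; and the associated squares of an edge $e \in E$ have size $\Theta(\eps|e|)$, so edges attached to deeper (smaller) nodes of $T'$ are strictly shorter. Since we are inside a single $c$-cluster, $T'$ is an uncompressed regular quadtree, with a clean hierarchical structure; the ``atoms'' to be spanned are the points of $P$ lying in the cluster together with the representatives of the child $c$-clusters, already connected internally by the outer recursion.

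I would traverse $T'$ in post-order. At each leaf, each atom starts as a singleton component. At an internal node $u$ with children $v_1,\ldots,v_k$ (at most four), I would first recurse into each $v_i$, obtaining the partial EMST built from edges of $E$ associated with descendants of $v_i$. Then I would take the $O(1)$ edges of $E$ associated with $u$ itself, sort them by length in constant time, and process them in Kruskal style: for each edge $\{p,q\}$, if $p$ and $q$ currently lie in different components, add $\{p,q\}$ to the partial EMST and merge. Because edges at $u$ are strictly longer than any edge at a proper descendant, this order respects Kruskal's, so by Theorem~\ref{thm:qtree->h} and the cut property applied within the cluster, every EMST edge in $E$ is eventually added and no non-EMST edge is mistakenly kept.

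The main obstacle is executing the union--find operations in amortized $O(1)$ time on a pointer machine, without paying the customary $\alpha$-factor of general-purpose union--find. I would avoid a general union--find by piggybacking on the quadtree: each live component stores its \emph{head} at the smallest node $u \in T'$ whose subtree contains every atom of the component. A find from an atom $p$ walks upward from $p$'s leaf, following ``component pointers'' at each visited node until it reaches the head; a merge at $u$ installs a new head at $u$, points the two old heads to it, and overwrites every pointer traversed during the triggering find so that subsequent finds from those nodes skip directly to the new head. Since every node of $T'$ hosts $O(1)$ associated edges and therefore triggers $O(1)$ merges, and since each find's walk can be charged to the merge whose pointer update it enables (each node being visited at most $O(1)$ times before being short-circuited), the total pointer work, added to the $O(1)$ local work per node of $T'$, amortizes to $O(|V(T')| + |E|)$, as claimed.
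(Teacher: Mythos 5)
Your approach (post-order Kruskal with a bespoke union--find on the quadtree) is genuinely different from the paper's, and the difference is exactly where the paper has to work hardest. The paper deliberately avoids union--find: it groups the edges by length into sets $E_1,\ldots,E_h$ keyed to the quadtree levels, shows via the cut property that \emph{before} $E_i$ is touched every square at the corresponding level already meets a single connected component (so a component label can simply be stored at each square and propagated one level up at a time --- no ``find'' is ever performed), and runs \boruvka-phases on each $E_i$, bounding the edges examined per phase by the crossing-number inequality so that the per-level cost is $O(|E_i|+|V_i|)$. Your plan replaces this with Kruskal and a union--find, and that is where the gap is.

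Two concrete problems.

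First, the ordering claim is wrong as stated. The associated square of an edge $e$ has size in $[\eps|e|/2,\,2\eps|e|]$, i.e.\ only $\Theta(\eps|e|)$. This pins $|e|$ to the node's level only up to a constant factor, so an edge attached at a deeper node need \emph{not} be shorter than one attached at a shallower node if the nodes differ by $O(1)$ levels, and edges in sibling subtrees at comparable depths can interleave arbitrarily in length. Hence a straight post-order traversal is not a Kruskal order. (This is repairable --- e.g.\ merge the per-level $O(1)$ edge lists into true sorted order across an $O(\log(1/\eps))$-level sliding window --- but you need to say so.)

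Second, and more seriously, the amortization of your union--find is asserted, not proved, and it is precisely the step the paper warns about: ``no general-purpose deterministic linear-time pointer-machine algorithm is known\ldots the fastest such algorithm whose running time can be analyzed needs $O(n\alpha(n))$ steps.'' Your scheme stores each component's head at the LCA-like node of its footprint and does a find by walking up from a leaf with path compression; the claim that ``each node is visited at most $O(1)$ times before being short-circuited'' is the entire content of the bound and you give no argument for it. It is also unclear how the scheme handles a node whose subtree contains atoms from several live components (several heads may sit at, or have paths through, the same node), which makes the meaning of ``the component pointer at a visited node'' ambiguous and makes it possible for a short-circuit installed on behalf of one component to misdirect a later find on behalf of another. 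In short, what you have sketched is a special-case disjoint-set structure whose linear-time analysis is a nontrivial claim in its own right; the paper sidesteps this entirely by using the geometry (the cut property gives component labels per square for free, and the crossing-number inequality bounds the \boruvka\ work). If you want to keep the Kruskal route you must either prove the union--find bound or, better, recover the paper's invariant that each square is monochromatic before its edges are touched, at which point you no longer need a union--find at all.
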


\begin{proof}
  Let $\ell$ be the size of the root square of $T'$.
  Through a level order traversal of $T'$ we group the
  squares in $V(T')$ by height into layers $V_1$, $V_2$, $\ldots$, $V_h$
  (where $V_1$ is the bottommost layer, and $V_h$ contains
  only the root). The squares in $V_i$ have size $\ell/2^{h-i}$.
  As stated above, each square $S$ has a constant number of
  associated edges in $E$ that have one endpoint in $S$ and length
  length  between $|S|/2\eps$ and $2|S|/\eps$.
  To find the EMST, we subdivide the edges into
  sets $E_i$, where $E_i$ contains all edges with length
  in $[\ell/(\eps2^{h-i}), \ell/(\eps2^{h-i-1}))$. Given the
  $V_i$, we can determine the sets $E_i$ in total time
  $O(|E| + |V(T')|)$, as the edges for $E_i$ are
  associated only with squares in $V_{i-\alpha}$, $V_{i-\alpha+1}$,
  $\ldots$, $V_{i+\alpha}$,
  for some constant $\alpha$. 
  Note that every edge
  in $E_i$ is crossed by $O(1)$ other edges in $E_i$, because all
  $e \in E_i$ have roughly the same length and because every pair of squares in
  $V_i$ has only a constant number of associated edges in $E_i$.

  Now we compute the EMST by processing the sets $E_1$,
  $\ldots$, $E_h$ in order. Here is how to
  process $E_i$. We consider the squares in $V_i$.
  Assume that we know for each square of $V_i$ the connected
  component in the current partial EMST it meets (initially
  each $c$-cluster is its own component). By the cut property, every
  square $S$ meets
  only one connected component, as $S$ is much smaller than the
  edges in $E_i$.
  Eliminate all edges in $E_i$ between squares in the same component,
  and 
  remove duplicate edges between
  each two components, keeping only the shortest of these edges
  (this takes $O(|E_i|)$ time with appropriate pointer manipulation).
  Then find the shortest edge out of each component
  and add these edges to the partial EMST. Determine the new components
  and merge their associated edge sets.
  This sequence of steps is called
  a \emph{\boruvka-phase}.
  Perform \boruvka-phases until $E_i$ has no edges left.

  By the crossing-number inequality~\cite[Theorem~4.3.1]{Matousek02}, the 
  number of edges considered in
  each phase is proportional to the number $r$ of components with an outgoing
  edge in that phase. Indeed, viewing each component as a supervertex,
  we have an embedding of a graph with $r$ vertices and $z$ edges such
  that there are $O(z)$ crossings (since every edge $e \in E_i$ is crossed
  by $O(1)$ other edges in $E_i$). Thus, the crossing number
  inequality yields $z^3/r^2 \leq \beta z$, 
  for some
  constant $\beta > 0$, so $z = O(r)$.  Since the number of
  components at least halves in each phase, and since initially there
  are at most $|V_i|$ components, the total time
  for $E_i$ is $O(|E_i| + |V_i|)$. Finally, label each
  square in $V_{i+1}$ with the component  it meets and proceed
  with round $i+1$.
  In total, processing $T$ takes time $O(|V(T')| + |E|)$, as desired.
\end{proof}

\subsection{Finishing Up}
We conclude:
\begin {theorem} \label {thm:wspd->dt}
Let $P$ be a planar point set and $T$ be a compressed quadtree 
or a $c$-cluster quadtree 
for $P$.  Then $\DT(P)$ can be computed in time $O(|P|)$.
\end {theorem}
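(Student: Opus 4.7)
The plan is to combine the machinery developed so far in a straightforward pipeline. By Theorem~\ref{thm:cluster-compressed-equiv}, compressed and $c$-cluster quadtrees are interconvertible in linear time, so without loss of generality I may assume at each stage whichever form is most convenient. First I would invoke Theorem~\ref{thm:wspd} on $T$ to obtain $\wspd(T)$, a linear-size WSPD, in $O(n)$ time. With this WSPD in hand, I would apply Theorem~\ref{thm:qtree->h} to build the supergraph $H$ of $\emst(P)$ with $O(n)$ edges, also in $O(n)$ time. Here I would carefully record, for each edge of $H$, the wspd-pair $\{u,v\}$ that produced it; by the pruning in Step~2 of Section~\ref{sec:emstsup}, each node of $T$ is associated with only $O(1)$ edges of $H$.

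Next, if $T$ is not already a $c$-cluster quadtree, I would convert it to one using Theorem~\ref{thm:cluster-compressed-equiv}, carrying along the edge-to-node associations as described in the paragraph preceding Lemma~\ref{lem:extract-emst} (treating large compressed children as regular squares, and replacing $c$-clusters by representative points as appropriate). Then, for every edge $e \in H$ I would, in linear total time, promote or demote its associated squares along the tree (using Claim~\ref{clm:quad-squares} and Corollary~\ref{cor:c-cluster-QT-extended} to bound the work) so that each endpoint of $e$ sits in an associated square of size $\Theta(\eps|e|)$, with each node still associated with $O(1)$ edges.

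Now I would extract $\emst(P)$ itself. By the cut property of minimum spanning trees, the EMST is connected within every $c$-cluster, so I can process the $c$-cluster tree bottom-up and, at each cluster, assume all points in every child cluster are already connected. Within a single cluster $T'$, the tree is a regular uncompressed quadtree, and Lemma~\ref{lem:extract-emst} produces $\emst(P) \cap E$ (where $E$ is the set of $H$-edges associated with $T'$) in time $O(|E|+|V(T')|)$ via a \boruvka-style sweep that exploits the crossing-number inequality. Summing over all clusters and using that $\sum |V(T')| = O(n)$ and $|H| = O(n)$ gives $\emst(P)$ in $O(n)$ total time.

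Finally, with $\emst(P)$ and the point set $P$ in hand, I would apply the algorithm of Chin and Wang~\cite{ChinWa99}, which constructs $\DT(P)$ from a connected spanning subgraph in deterministic linear time on a pointer machine. Chaining all of these linear-time steps yields the $O(|P|)$ bound. The main conceptual obstacle was really packed into the earlier sections (establishing Theorems~\ref{thm:cluster-compressed-equiv} and~\ref{thm:qtree->h} and Lemma~\ref{lem:extract-emst}); the final proof itself is just the assembly of these components while carefully threading the edge/node annotations through the compressed-to-cluster conversion so that the \boruvka{} extraction can run cluster by cluster without losing linearity.
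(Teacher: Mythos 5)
Your proof follows exactly the same pipeline as the paper's: normalize via Theorem~\ref{thm:cluster-compressed-equiv}, obtain $\wspd(T)$ by Theorem~\ref{thm:wspd}, build the supergraph $H$ by Theorem~\ref{thm:qtree->h}, convert to a $c$-cluster quadtree while threading the edge annotations, extract $\emst(P)$ bottom-up via Lemma~\ref{lem:extract-emst}, and finish with Chin and Wang. You have in fact incorporated a little more of the surrounding discussion (the edge-to-node bookkeeping through the conversion) than the paper's terse proof, but the approach is identical.
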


\begin{proof}
If $T$ is a $c$-cluster quadtree, invoke 
Theorem~\ref{thm:cluster-compressed-equiv} to convert it
to a compressed quadtree.
Then use Theorem~\ref{thm:wspd} to obtain $\wspd(T)$.
Next, apply Theorem~\ref{thm:qtree->h} to compute the
supergraph $H$ of $\emst(P)$. After that, if necessary, convert $T$ to a 
$c$-cluster quadtree for $P$ via Theorem~\ref{thm:cluster-compressed-equiv},
and apply Lemma~\ref{lem:extract-emst} to each $c$-cluster, in a bottom-up
manner, to extract $\emst(P)$. Finally, apply the algorithm
by Chin and Wang~\cite{ChinWa99} to find $\DT(P)$.
All this takes time $O(|P|)$, as claimed.
\end{proof}

\section {From Delaunay Triangulations to $c$-Cluster Quadtrees}
\label {sec:dt->qt}

    For the second direction of our equivalence we need to show
    how to compute a $c$-cluster quadtree for $P$ when given
    $\DT(P)$. This was already done
    by Krznaric and Levcopolous~\cite{KrznaricLe95,KrznaricLe98},
    but their algorithm works in a stronger model of
    computation which includes the floor function and allows access to data
    at the bit level.
    As argued in the introduction, we prefer the real RAM/pointer machine,
    so we need to do some work to adapt their
    algorithm to our computational model.
    In this section we describe how Krznaric and Levcopolous's
    algorithm can be modified to avoid bucketing and bit-twiddling techniques.
    The only difference is that in the resulting $c$-cluster
    quadtree the squares for the $c$-clusters are not perfectly
    aligned with the squares of the parent quadtree. In our setting,
    this does not matter. The goal of this section is to prove the following
    theorem.

\begin{theorem}\label{thm:dt->c-cluster-qt}
Given $\DT(P)$, we can compute a $c$-cluster quadtree
for $P$ in linear deterministic time on a pointer machine.
\end{theorem}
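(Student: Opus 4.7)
The plan is to combine the Krznaric--Levcopoulos result for the $c$-cluster tree (Theorem~\ref{thm:c-cluster-tree}) with a pointer-machine construction of the per-node quadtree pieces $T_u^Q$. The high-level control flow mirrors Krznaric--Levcopoulos; the work is in replacing their bucketing/floor-function subroutines with pointer-machine-compatible ones, at the cost of allowing the base squares of the $T_u^Q$ to be slightly misaligned with the enclosing cluster squares, which is exactly what the relaxed definition of $c$-cluster quadtree in Section~\ref{sec:cqt&cqt} was designed to permit.

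First, I would apply Theorem~\ref{thm:c-cluster-tree} to obtain the $c$-cluster tree $T_c$ for $P$ from $\DT(P)$ in $O(n)$ time. While doing so, I would piggyback on the traversal to compute, for every internal node $u$ of $T_c$, a representative point from each child together with an axis-aligned bounding square for this set $R_u$; since each representative contributes to $O(1)$ such bookkeeping operations along its root-to-leaf path in $T_c$, the total cost stays $O(n)$.

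Second, for every internal node $u$ of $T_c$ with $m_u$ children, I would build a balanced regular quadtree $T_u^Q$ on $R_u$ in $O(m_u)$ time. The target size is justified by Lemma~\ref{lem:c-cluster-QT}, and the total budget $\sum_u m_u = O(n)$ permits splicing all the $T_u^Q$'s into $T_c$ to produce the desired $c$-cluster quadtree. To build $T_u^Q$ without the floor function, I would exploit two resources: the Delaunay triangulation restricted to $R_u$, which can be obtained from $\DT(P)$ by contracting each child cluster of $u$ to its representative and using the fact that representatives are mutually well-separated (so Delaunay neighbors among representatives are identifiable locally from $\DT(P)$); and the freedom granted by the relaxed alignment to place the root of $T_u^Q$ at any convenient square of side $\Theta(|B_{R_u}|)$. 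Starting from such a root, I would subdivide top-down, using the DT on $R_u$ for constant-time nearest-neighbor queries to decide which subsquares are non-empty and where further subdivision is needed, finally invoking Theorem~\ref{thm:balance} (or Corollary~\ref{cor:balance-'n-thread}) to balance. If a less direct route turns out cleaner, one may instead first construct any valid regular quadtree and then realign/balance it via Theorem~\ref{thm:shiftbalance} and Corollary~\ref{cor:qt-shift}.

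The main obstacle is making the per-node quadtree construction truly $O(m_u)$-time on a pointer machine: Krznaric--Levcopoulos use bucketing to land the representatives directly in their correct grid cells, and there is no analogous primitive in our model. The replacement must show that the DT structure on $R_u$ carries enough proximity information to drive the top-down construction using only comparisons, arithmetic, and pointer operations, with amortized constant work per created cell. The charging argument will rely on the fact that the children of $u$ are mutually $(1/c_1)$-semi-separated (so each representative's ``zone of influence'' in $T_u^Q$ has $O(1)$ cells, as in Observation~\ref{obs:close}) together with Corollary~\ref{cor:c-cluster-QT-extended} to absorb the extra subdivisions caused by misaligned starting squares. Once each $T_u^Q$ is built and spliced in, the resulting tree is a valid $c$-cluster quadtree under our relaxed definition, completing the proof.
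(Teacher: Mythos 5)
Your outline matches the paper's skeleton exactly at the top level: apply Theorem~\ref{thm:c-cluster-tree} to get $T_c$, then build each per-node quadtree $T_u^Q$ in $O(m_u)$ time with the size bound coming from Lemma~\ref{lem:c-cluster-QT} / Corollary~\ref{cor:c-cluster-QT-extended}, accepting slightly misaligned base squares. You also correctly pinpoint the crux: replacing Krznaric--Levcopoulos's floor/bucketing point location by something pointer-machine compatible. But the proposal stops there. Having identified the obstacle, it does not actually resolve it -- ``using the DT on $R_u$ for constant-time nearest-neighbor queries to decide which subsquares are non-empty'' is a restatement of the problem, not a mechanism, and the fallback via Theorem~\ref{thm:shiftbalance} / Corollary~\ref{cor:qt-shift} is circular: those routines presuppose that a compressed quadtree is already in hand, which is exactly what you are trying to build.

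What the paper actually supplies, and what your proposal is missing, is the machinery that makes the construction work. First, a preprocessing step (Lemma~\ref{lem:out}) computes for each node $u$ of $T_c$ the set $\texttt{out}(u)$ of Delaunay edges with exactly one endpoint in $P_u$ and both endpoints in $P_{\parent u}$; the pointer-machine ingredient here is replacing Harel--Tarjan LCA by the offline algorithm of Buchsbaum~\emph{et al.} Second, building $T_u^Q$ is done by a Dijkstra-like exploration (Algorithm~\ref{alg:explore}): starting from the smallest cluster, the algorithm repeatedly calls $\texttt{findStar}$ to enumerate short Delaunay edges leaving the current active squares, traces those edges through the partial balanced quadtree by following $\texttt{neighbor}$ pointers (constant time per edge, because edge length is $\Theta(|S|)$ and the tree is balanced), and refines the tree for newly discovered clusters. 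Correctness hinges on Invariant~\ref{inv:neighborhood}, whose proof uses the constant dilation of $\DT(P)$: any undiscovered cluster in the neighborhood of an active square would force a short Delaunay path, hence a short edge picked up by $\texttt{findStar}$. This exploration order, the neighbor-pointer tracing, and the dilation argument are the actual substitute for bucketing; without them the claim ``the DT carries enough proximity information to drive the top-down construction'' is unsubstantiated. Finally, your suggestion of ``contracting each child cluster to its representative'' to get a DT on representatives is both ill-defined (representatives are not points of $P$, so they have no edges in $\DT(P)$) and would itself require a hereditary-DT subroutine, which is one of the corollaries of this paper rather than an available primitive.

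So: right scaffold, right diagnosis of the obstacle, but the core technical content of the proof -- the $\texttt{out}$-set preprocessing, the $\texttt{explore}$/$\texttt{findStar}$ traversal, and the dilation-based correctness argument -- is absent.
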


In the following, we will refer to the paper by Krznaric and
Levcopolous~\cite{KrznaricLe98} as KL.
Our description is meant to be self-contained; however, we refer the reader to KL for more intuition and a more elaborate description of the main ideas.

\subsection{Terminology}
We begin by recalling some terminology from KL.

\begin{itemize}
\item \textbf{neighborhood.}
The \emph{neighborhood} of a square $S$ of a quadtree
consists of the 25 squares of size $|S|$ concentric around
$S$ (including $S$); see Figure~\ref{fig:neighbors}.
\item \textbf{direct neighborhood.}
The \emph{direct neighborhood} of a square
$S$ consists of the 9 squares of size $|S|$ directly adjacent to $S$
(including $S$); see Figure~\ref{fig:neighbors}.
\item \textbf{star of a square.}
Let $P$ be a planar point set, and let $S$ be a square. The \emph{star}
of $S$, denoted by $\bigstar(S)$, is the set of all edges $e$ in $\DT(P)$
such that (i) $e$ has one endpoint inside $S$ and one endpoint outside the
neighborhood of $S$; and (ii) $|e| \leq 16|S|$, where $|e|$ is
the length of $e$.
\item \textbf{dilation.}
Let $P$ be a planar point set, and $G$ a connected plane graph with
vertex set $P$. The \emph{dilation} of $P$ is the distortion between
the shortest path metric in $G$ and the Euclidean distance, i.e.,
the maximum ratio, over all pairs of distinct points $p,q \in P$, between
the length of the shortest path in $G$ from $p$ to $q$, and $|pq|$. There
are many families of planar graphs whose dilation is bounded by a
constant~\cite{DasJo89}.
In particular, for any planar point set $P$, the
dilation of $\DT(P)$ is bounded
by $2\pi/(3\cos(\pi/6)) \leq 2.42$~\cite{KeilGu92}.
\item \textbf{orientation.}
The \emph{orientation} of a line segment $e$ is the angle
the line through $e$ makes with the $x$-axis.
\end{itemize}

\eenplaatje {neighbors} {The neighborhood of a square $S$. The direct neighbors
are shown in dark blue, the others in light blue.}

\subsection{Preprocessing}
\label{sec:dt->c-cluster-preprocess}

By Theorem~\ref{thm:c-cluster-tree}, we can obtain a $c$-cluster
tree $T_c$ for $P$ in linear time, given $\DT(P)$. Thus, we only
need to construct the regular quadtrees $T_u^Q$ for each node 
$u$ in $T_c$. This is done by processing each node of $T_c$
individually. First, however, we need to perform a preprocessing step
in order to find  for each edge $e$ of $\DT(P)$ the node of  $T_c$
that is the least common ancestor of $e$'s endpoints.
For every node $u \in T_c$, we define $\texttt{out}(u)$ as the
set of edges in $\DT(P)$ that have exactly one endpoint in
$P_u$ and both endpoints in $P_{\parent u}$. Clearly, every edge is
contained in exactly two sets $\texttt{out}(u)$ and $\texttt{out}(v)$,
where $u$ and $v$ are siblings in $T_c$.
The following is a simple variant of a lemma from
KL~\cite[Lemma~3]{KrznaricLe98}.

\begin{lem}[Krznaric-Levcopolous]\label{lem:out}
Let $P$ be a planar $n$-point set.
Given $DT(P)$ and a $c$-cluster tree $T_c$ for $P$, the sets
\emph{$\texttt{out}(u)$} for every node $u \in T_c$ can be found in overall
$O(n)$ time and space on a pointer machine.
\end{lem}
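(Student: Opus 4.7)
My plan is to reduce the task to an offline least-common-ancestor (LCA) computation. For each edge $(pq)$ of $\DT(P)$, let $w = \texttt{lca}(p,q)$ in $T_c$; then $(pq)$ belongs to exactly $\texttt{out}(u)$ and $\texttt{out}(v)$, where $u$ and $v$ are the children of $w$ on the root-to-$p$ and root-to-$q$ paths, and to no other $\texttt{out}$-set. Thus every Delaunay edge contributes to exactly two $\texttt{out}$-sets, so $\sum_u |\texttt{out}(u)| = 2|\DT(P)| = O(n)$, matching the claimed space bound. The work splits into (i) computing all LCAs in linear time and (ii) converting each LCA into the correct pair of child nodes.

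For step (i), I would first perform a DFS on $T_c$ in $O(n)$ time to produce an Euler tour, together with the depth of each visited node and, for every node, its first and last positions in the tour. The $O(n)$ LCA queries (one per DT-edge) then become an offline range-minimum problem over a sequence whose consecutive entries differ by exactly $\pm 1$. For step (ii), once $w$ is known, identifying the child of $w$ containing $p$ (resp.\ $q$) is $O(1)$: compare the Euler-tour index of $p$'s leaf against the first-visit indices of the children of $w$, which are stored contiguously. All bookkeeping is trivially $O(n)$ once the LCAs are in hand.

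The central obstacle is performing the offline LCA step in linear time on a pointer machine. KL carry out an analogous step using bucketing and the floor function, both unavailable in our model, and Tarjan's classical union-find offline LCA gives only $O(n\alpha(n))$. To stay within the pointer machine, I would invoke a pointer-machine-friendly linear-time LCA algorithm (e.g.\ Buchsbaum--Kaplan--Rogers--Westbrook's construction, which relies only on pointer manipulation and comparisons) as a black box. Alternatively, one could exploit the $\pm 1$ structure of the Euler-tour depth sequence together with a micro--macro decomposition in which each micro-block of size $\Theta(\log n)$ is normalised by its sign pattern; since there are only a constant number of distinct patterns per (geometrically shrinking) size class, the necessary tables fit within the pointer-machine model. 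With such an LCA routine in place, the remaining assembly of the $\texttt{out}$-sets is routine and the overall $O(n)$ bound follows.
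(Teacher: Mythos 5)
Your proposal matches the paper's approach on the crucial point: reduce to an offline LCA problem, then replace the word-RAM LCA algorithm of KL by the pointer-machine algorithm of Buchsbaum et al., which is exactly what the paper does. However, there are two issues worth flagging.

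First, your step (ii) is not actually $O(1)$ per query on a pointer machine. Given the LCA $w$ of $p$ and $q$, you need the child of $w$ on the path to $p$, i.e.\ the level ancestor of $p$ at depth $\mathrm{depth}(w)+1$. Comparing ``the Euler-tour index of $p$'s leaf against the first-visit indices of the children of $w$, which are stored contiguously'' is either a scan over the children of $w$ (costing $O(\deg(w))$ per query, and $\deg(w)$ is unbounded in a $c$-cluster tree) or a binary search/direct lookup, neither of which is a constant-time pointer-machine operation. The paper sidesteps this entirely by citing KL's reduction to offline LCA queries \emph{in two appropriate trees} -- a construction arranged so that the LCA answer directly identifies the relevant children, making a separate step (ii) unnecessary. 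If you want a self-contained reduction you would need to reproduce that two-tree trick (or an equivalent offline level-ancestor routine), not just LCA on $T_c$ itself.

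Second, your alternative fallback (micro--macro decomposition with precomputed tables indexed by sign patterns) does not work on a pointer machine. Constant-time table lookup is precisely what the pointer-machine model disallows, and with micro-blocks of size $\Theta(\log n)$ there are $n^{\Theta(1)}$ distinct sign patterns, not $O(1)$; this is the ``Four Russians'' trick and it genuinely requires the word RAM. Fortunately you do not need it, since Buchsbaum et al.\ already gives you what you want.
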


\begin{proof}
KL show how to reduce the problem of determining the sets
$\texttt{out}(u)$ to $O(n)$ off-line least-common ancestor (lca) queries 
in two appropriate trees. For the lca-queries,
they invoke an algorithm by Harel and Tarjan~\cite{HarelTa84} that requires
the word RAM. However, since all lca-queries are known in
advance (i.e., the queries are \emph{off-line}), we  
may instead use an algorithm by
Buchsbaum \etal~\cite[Theorem~6.1]{BuchsbaumGeKaRoTaWe08} which 
requires $O(n)$ time and space on a pointer machine.
\end{proof}

\subsection{Processing a Single Node of $T_c$}
\label{sec:one-node}
We now describe the preprocessing that is necessary on a single
node $u$ of $T_c$ before the quadtree $T_u^Q$ can be constructed.
Let $v_1, v_2, \ldots, v_m$
be the children of $u$. For each child $v_i$, let
$\delta_i \eqdef d(P_{v_i}, P_u \setminus P_{v_i})$.

\begin{claim}\label{clm:delta_i}
For $i = 1, \ldots, m$, \emph{$\texttt{out}(v_i)$} contains an edge of
length $\delta_i$.
\end{claim}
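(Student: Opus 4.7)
The plan is to exhibit an edge of length $\delta_i$ in $\texttt{out}(v_i)$ by taking the realiser of the minimum and showing it is a Delaunay edge of $P$. Concretely, pick $p \in P_{v_i}$ and $q \in P_u \setminus P_{v_i}$ with $|pq| = \delta_i$ (such a pair exists because $v_i$ is a proper child of $u$, so $P_u \setminus P_{v_i}$ is nonempty). Then $pq$ has exactly one endpoint in $P_{v_i}$ and both endpoints in $P_u = P_{\overline{v_i}}$. Thus, the moment we confirm that $pq \in \DT(P)$, the edge automatically lies in $\texttt{out}(v_i)$ by definition, and it realises $\delta_i$ by construction.

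For the Delaunay property I would invoke the empty-disk characterisation: it suffices to show that the open disk $D$ of diameter $pq$ contains no point of $P \setminus \{p, q\}$. There are two cases. For $z \in P_u \setminus \{p, q\}$ lying in $D$, Thales' theorem gives $\angle pzq > \pi/2$, so $|pz|, |qz| < |pq| = \delta_i$; then either $(z, q)$ (if $z \in P_{v_i}$) or $(p, z)$ (if $z \notin P_{v_i}$) contradicts the minimality defining $\delta_i$. For $z \in P \setminus P_u$ (a case that is vacuous when $u$ is the root of $T_c$), the $c$-cluster property of $u$ gives $|pz| \ge d(P_u, P\setminus P_u) \ge c\,|B_{P_u}|$, while $p, q \in P_u$ forces $\delta_i = |pq| \le |B_{P_u}|$; if $z$ lay in $D$ we would have $|pz| \le |pq| \le |B_{P_u}|$, which is incompatible with the $c$-cluster lower bound as soon as $c > 1$.

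Combining the two cases, $D$ is empty of points of $P \setminus \{p, q\}$, so $pq$ is an edge of $\DT(P)$ and the claim follows. The only delicate step is the second case: it is exactly where the hypothesis that $u$ is a $c$-cluster is used, guaranteeing that points of $P$ outside $P_u$ are far enough from $P_u$ not to intrude on $D$. Without this separation, a point of $P \setminus P_u$ could sit near $P_u$ and spoil the emptiness of $D$, so restricting attention to $\texttt{out}(v_i) \subseteq \DT(P)$ would no longer capture the minimum-distance pair. Everything else is a direct application of the standard closest-pair/empty-disk argument for Delaunay triangulations.
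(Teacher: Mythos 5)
Your proof is correct, but it takes a genuinely different route from the paper's. The paper reduces to the minimum spanning tree: it runs Kruskal's algorithm and argues, via the $c$-cluster separation (all intra-$P_{v_i}$ pairs are shorter than $\delta_i$, and no pair leaving $P_{v_i}$ is shorter), that the edge realising $\delta_i$ is added to $\emst(P)$, and then invokes the classical fact $\emst(P) \subseteq \DT(P)$. You instead prove the Delaunay property directly by verifying the empty-diametral-disk (Gabriel) criterion: any point of $P_u$ inside the disk would beat $\delta_i$ via Thales, and any point of $P \setminus P_u$ is ruled out by the $c$-cluster bound $d(P_u, P\setminus P_u) \ge c|B_{P_u}| > \delta_i$. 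Both proofs use the $c$-cluster separation in essentially the same two places (once to keep the witness edge inside $P_u$, once to isolate $P_{v_i}$), but yours is self-contained and purely geometric, whereas the paper's is shorter given that $\emst \subseteq \DT$ is already cited and the rest of Section~\ref{sec:dt->qt} is phrased in MST/dilation terms. One small nuance worth flagging in your write-up: you need $c>1$ in Case~2, which is fine since the paper treats $c$ as a large constant, but it is worth saying explicitly because the definition of $(c_1,c_2)$-cluster trees only assumes $c_1\ge 1$.
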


\begin{proof}
If $\DT(P)$ contains an edge $e$ with an endpoint in $P_{v_i}$ and
with length $\delta_i$, then $e$ must
be in $\texttt{out}(v_i)$, by the definition of a $c$-cluster.
Since $\emst(P)$ is a subgraph of $\DT(P)$, it thus suffices to show that
$\emst(P)$ contains such an edge.
Consider running Kruskal's MST algorithm on $P$. According to the
definition of a $c$-cluster, by the time the algorithm considers the edge $e$
that achieves $\delta_i$, the partially constructed EMST contains exactly
one connected component that has precisely the points in $P_{v_i}$.
Therefore, $e \in \emst(P)$, and the claim follows.
\end{proof}

\paragraph{Initialization.}
By scanning the sets $\texttt{out}(v_i)$, we determine a child $v_j$
with minimum $\delta_j$ (by Claim~\ref{clm:delta_i} a shortest edge in
$\texttt{out}(v_i)$ has length $\delta_i$).
We may assume that $j = 1$. Let $S_1$ be a square that contains
$P_{v_1}$ and that has side-length $\delta_1 / 8$. Let $\alpha$ be
the smallest integer such
that four squares of size $2^{\alpha-1}\delta_1/8$ cover all of $P_u$.
Lemma~\ref {lem:c-cluster-QT} implies that $\alpha = O(m)$.

The goal is to compute $T_u^Q$, the balanced regular quadtree aligned at 
$S_1$ such that each $P_{v_i}$ is contained in squares of size $\delta_i/8$.
To begin, we use $S_1$ to initialize $T_u^Q$ as the partial 
balanced quadtree $T_u^Q$ shown in Figure~\ref{fig:initialqt}.
\eenplaatje [scale=0.8] {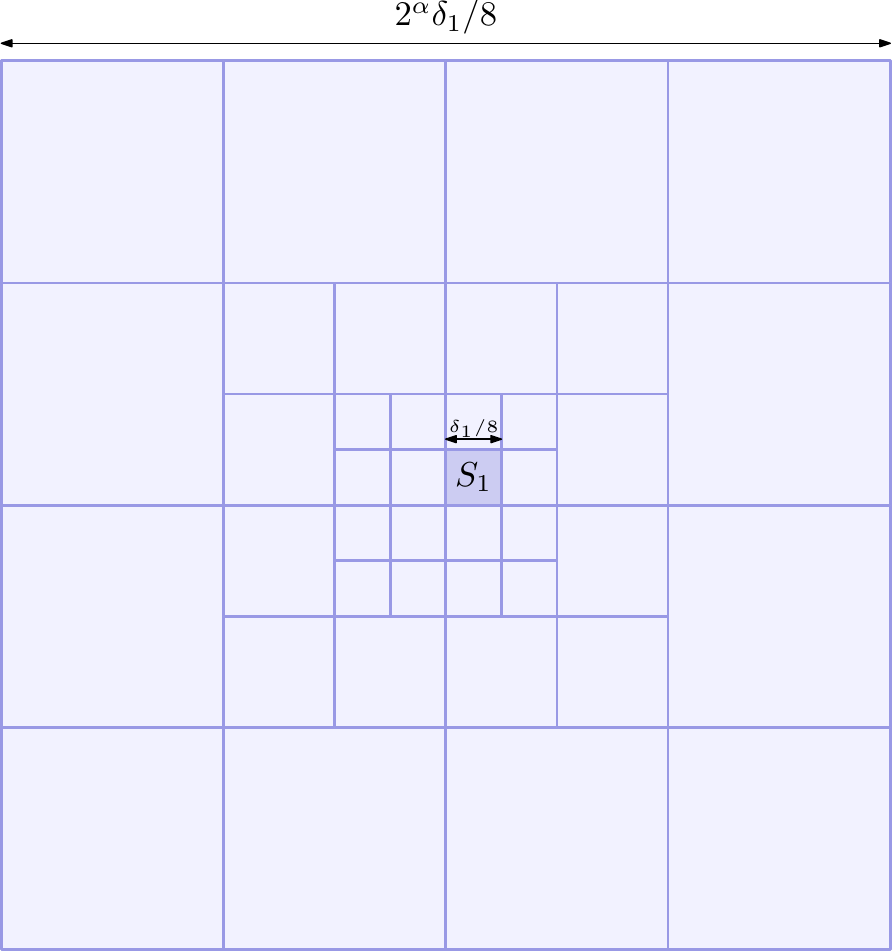} {The initial quadtree.}
Every square $S$ of $T_u^Q$ stores the following fields:
\begin{itemize}
\item $\texttt{parent}$: a pointer to the parent square, $\texttt{nil}$ for
       the root;
\item $\texttt{children}$: pointers for the four children of $S$,
     $\texttt{nil}$ for a leaf;
\item $\texttt{neighbors}$: links to the four orthogonal neighbors of $S$
      in the quadtree $T_u^Q$ with size $|S|$
      (or size $2|S|$, if no smaller neighbor exists);
\end{itemize}

The fields $\texttt{parent}$, $\texttt{children}$, and $\texttt{neighbors}$
are initialized for all the nodes in $T_Q$.

\begin{lem}\label{lem:node-init}
The total time for the initialization phase is
\emph{$O(m + \sum_{i=1}^{m} |\texttt{out}(v_i)|)$}.
\end{lem}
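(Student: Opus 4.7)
The plan is to bound the cost of each substep of the initialization and sum them. First, by Claim~\ref{clm:delta_i}, the value $\delta_i$ equals the length of the shortest edge in $\texttt{out}(v_i)$, so a single scan of each $\texttt{out}(v_i)$ produces $\delta_i$, and a subsequent pass through the $m$ values yields the minimum, which we rename $\delta_1$. This step costs $O(m + \sum_{i=1}^m |\texttt{out}(v_i)|)$ and also delivers, for each child $v_i$, at least one representative point $p_i \in P_{v_i}$ (any endpoint of an edge of $\texttt{out}(v_i)$ lying in $P_{v_i}$).

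Next, I need a rough upper bound on the diameter of $P_u$ in order to fix $\alpha$. The axis-aligned bounding box of $\{p_1,\dots,p_m\}$, inflated by $2 \max_i |B_{v_i}|$, contains $P_u$; using the $c$-cluster property $|B_{v_i}| \leq \delta_i/c_1$ together with the $\delta_i$'s that are already at hand, this bounding box can be computed in $O(m)$ time. Armed with this estimate I compute $\alpha$ by a successive-doubling loop: starting from $\alpha = 1$ with $r = \delta_1/8$ and doubling $r$ at each iteration, I stop at the first $\alpha$ for which four axis-aligned copies of a square of side $r$, positioned on the grid aligned with $S_1$, suffice to cover the estimated bounding box. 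No floor or logarithm is used, and by Lemma~\ref{lem:c-cluster-QT} we have $\alpha = O(m)$, so the loop terminates in $O(m)$ steps.

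Finally, I construct the initial balanced quadtree of Figure~\ref{fig:initialqt}. It consists of the chain of $\alpha$ ancestor squares of $S_1$, together with the constant number of surrounding squares at each level required by the balance condition, for a total of $O(\alpha) = O(m)$ squares. I build it top-down along this chain, filling in $\texttt{parent}$, $\texttt{children}$, and $\texttt{neighbors}$ as squares are created: the parent pointer is known from the construction; the children pointers are set when the square is subdivided; and the neighbor pointers can be maintained inductively, because every newly created square at level $i$ has neighbors either at level $i$ (already produced during the same subdivision step) or at level $i-1$ (the as-yet-unrefined leaves of the preceding level), all reachable in $O(1)$ time via the pointers of the parent. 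Adding the three contributions proves the bound $O(m + \sum_i |\texttt{out}(v_i)|)$ asserted in the lemma.

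The main obstacle is the computation of $\alpha$ on a pointer machine without floor, logarithm, or bit manipulation; this is precisely where KL's original algorithm uses a stronger model, and our successive-doubling scheme, combined with the $O(m)$ bound on $\alpha$ from Lemma~\ref{lem:c-cluster-QT}, replaces it within the required time budget. The second minor subtlety is that the bounding-box estimate must be easy to compare to an axis-aligned square of side $r$ aligned with $S_1$, but this reduces to a constant number of coordinate comparisons per iteration and so does not affect the asymptotic running time.
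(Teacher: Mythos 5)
Your proof is correct and takes the same approach as the paper: both rely on Lemma~\ref{lem:c-cluster-QT} to conclude $\alpha=O(m)$ (so the initial tree in Figure~\ref{fig:initialqt} has $O(m)$ nodes), with everything else reduced to scanning the $\texttt{out}$-lists. The paper's own proof is only two sentences and glosses over the details you spell out — in particular the successive-doubling scheme for finding $\alpha$ on a pointer machine without the floor function, and the constant-factor bounding-box estimate via representatives so that one never needs to touch all of $P_u$ — so your elaboration is a useful fleshing-out rather than a different argument.
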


\begin{proof}
By Lemma~\ref{lem:c-cluster-QT}, the initial size of $T_u^Q$ is $O(m)$.
All other operations consist of scanning the \texttt{out}-lists
or are linear in the size of $T_u^Q$.
\end{proof}

\subsection {Building the Tree $T_u^Q$}

\begin{algorithm}
  $\texttt{explore}(\mathcal{S}, \texttt{maxsize})$
  \begin{enumerate}
    \item\label{step:activeS} Set $\texttt{active} := \mathcal{S}$.
    \item Set $\texttt{newActive} := \emptyset$.
    \item\label{step:loop}
          Until the squares in $\texttt{active}$ have size greater than
          $\texttt{maxsize}$:
    \begin{enumerate}
      \item\label{step:activeLoop} For every square $S$ in 
            $\texttt{active}$ call the function $\texttt{findStar}(S)$ 
      to determine $\bigstar(S)$.  Append $\parent S$ to 
      $\texttt{newActive}$, if it is not present yet.
      \item\label{step:starLoop} For every
            edge $e \in \bigcup_{S \in \texttt{active}} \bigstar(S)$,
            if $e$ has an endpoint in an undiscovered cluster,
      call the function $\texttt{newCluster}(S,e)$, and append
      all the squares returned by this call to $\texttt{newActive}$.
      \item \label{step:newActive} Set $\texttt{active} := \texttt{newActive}$.
    \end{enumerate}
  \end{enumerate}
 $\texttt{newCluster}(S,e)$
  \begin{enumerate}
    \item   Walk along $e$  through the current $T_u^Q$ to find the
            square $S'$ of $T_u^Q$ that contains the other endpoint
            of $e$.
            This tracing is done by following the appropriate
            $\texttt{neighbor}$ pointers from $S$.
     \item \label{step:refine} Refine $T_u^Q$ for the new cluster, and
            let $\mathcal{S}'$ be the set of leaf squares containing the newly
            discovered cluster.
      \item Call 
            $\texttt{explore}(\mathcal{S}', 
      \text{size of squares in \texttt{active}})$.
      Afterwards, return the $\texttt{active}$ squares from the recursive call.
  \end{enumerate}
  \caption{Computing a $c$-cluster quadtree for the children of
    a $c$-cluster.}
  \label{alg:explore}
\end{algorithm}

Now we build the tree $T_u^Q$ by a traversing $\DT(P)$ in
a way reminiscent of Dijkstra's algorithm~\cite{CormenLeRiSt09}.
In their algorithm, KL make extensive use of the floor function 
in order to locate points inside their quadtree squares. The 
purpose of this section is to argue that this point location work 
can be done through local traversal of the quadtree, without 
the floor function.
Refer to Algorithm~\ref{alg:explore}.
The heart of the algorithm is the procedure $\texttt{explore}$, 
which is initially called as
$\texttt{explore}(\{S_1\},2^{\alpha-1}\delta_1/8)$.
The procedure $\texttt{explore}$ builds the tree $T_u^Q$ level
by level, beginning with the level of $S_1$.
At each point, it maintains
a set $\texttt{active}$ of all squares at the current level that contain 
a cluster that has already been processed. For each such square $S$,
it calls a function $\texttt{findStar}$. This function
returns all edges of the Delaunay triangulation that have one endpoint in
$S$ and have length $\alpha|S|$, for a constant $\alpha$. Using 
$\texttt{findStar}$
we can  new clusters whose distance from the active clusters
is comparable to the size of the squares in the current level.
We will say more about the implementation $\texttt{findStar}$ below.
For each new cluster, we call the procedure $\texttt{newCluster}$ 
which adds more squares to $T_u^Q$ to accommodate the new cluster and 
recursively explores the short edges out of this new cluster. After
the recursive call has finished, we can continue the exploration of the
tree at the current level. 

We now give the details
for the refinement in Step~\ref{step:refine} of $\texttt{newCluster}$:
Let $v_j$ be
the cluster that contains the other endpoint $q$ of $e$ (we can find $v_j$
in constant time, since $e \in \texttt{out}(v_j)$, and
since for each edge we store the two clusters whose $\texttt{out}$-lists
contain it). Subdivide the current leaf square containing $q$ (and 
possibly also its neighbors if they contain points from $P_{v_j}$) 
in quadtree-fashion until $P_{v_j}$ is contained in squares of 
size $\delta_j/8$.  Then balance the quadtree and update the 
$\texttt{neighbor}$ pointers
accordingly. 

The algorithm is recursive, and at each point there exists
a sequence $\mathcal{E}_1$, $\mathcal{E}_2$, $\ldots$, $\mathcal{E}_z$
of instantiations (i.e., stack frames) 
of $\texttt{explore}$, where $\mathcal{E}_{i+1}$
was invoked by $\mathcal{E}_{i}$. Each $\mathcal{E}_i$ has a set
$\texttt{active}_i$ of active
squares, such that all squares in each $\texttt{active}_i$
have the same size, and such that the squares in $\texttt{active}_{i+1}$
are not larger than the squares in $\texttt{active}_i$.
We say that a square is active if it is contained in
$\texttt{active}^T \eqdef \bigcup_i \texttt{active}_i$.
The neighborhood of $\texttt{active}^T$
is the union of the neighborhoods of all boxes in $\texttt{active}$.
We maintain the following invariant:

\begin{invariant}\label{inv:neighborhood}
At all times during the execution of \emph{$\texttt{explore}$},
all undiscovered $c$-clusters lie outside the neighborhood of
\emph{$\texttt{active}^T$}.
\end{invariant}

\begin{claim}\label{clm:invariant}
Invariant~\ref{inv:neighborhood} is maintained by \emph{$\texttt{explore}$}.
\end{claim}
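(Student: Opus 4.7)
The plan is to prove the invariant by induction on the sequence of events that modify $\texttt{active}^T$. There are two such events: (A) a recursive call in $\texttt{newCluster}$ introduces the small squares $\mathcal{S}'$ around a newly discovered cluster, and (B) the step $\texttt{active} := \texttt{newActive}$ in some frame promotes an active set to the next (doubled) level after all stars at the current level have been processed.

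For the base case, at the very first call $\texttt{explore}(\{S_1\},\cdot)$, we have $\texttt{active}^T = \{S_1\}$ with $|S_1| = \delta_1/8$, and $v_1$ minimizes the $\delta_i$. Hence every other cluster $v_i$ satisfies $\delta_i \geq \delta_1$ and therefore lies at distance at least $\delta_1 - |B_{v_1}|$ from $S_1$. For the $c$-cluster constant chosen sufficiently large, this distance is larger than the extent $2.5|S_1|$ of the neighborhood, so the invariant holds.

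For event (A), the new squares $\mathcal{S}'$ cover a $c$-cluster $v_j$ and have size $\Theta(\delta_j/8)$. Every other cluster is separated from $P_{v_j}$ by at least $\delta_j \geq c|B_{v_j}|$, which again exceeds the constant-factor extent of the neighborhood of $\mathcal{S}'$ when $c$ is large enough. Hence adding $\mathcal{S}'$ to $\texttt{active}^T$ preserves the invariant. The main obstacle is event (B): we must argue that after processing level $|S|$ in a frame, no undiscovered cluster lies in the enlarged neighborhood of the promoted, level-$2|S|$ squares. The tool for this is the bounded dilation of the Delaunay triangulation (at most $2\pi/(3\cos(\pi/6)) < 2.43$).

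Suppose for contradiction that $C$ is undiscovered and some $q \in C$ lies in the enlarged neighborhood of a level-$2|S|$ square in $\texttt{newActive}$; in particular $q$ is within $L_\infty$-distance $O(|S|)$ of some active cluster point $p$. By the dilation bound, $\DT(P)$ contains a path from $p$ to $q$ of total length at most $D\cdot O(|S|)$ for the dilation constant $D<2.43$. Walking along this path, let $(p',q')$ be the first edge that crosses from a discovered cluster into an undiscovered cluster $C^*$; since this edge lies on the short path, $|p'q'| \leq 16|S|$ for the constants in the definition of $\bigstar$. The inductive hypothesis applied before event (B) guarantees that $q'$ lies outside the neighborhood of $p'$'s active square $S''$ (otherwise $C^*$ would have violated the previous invariant). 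Hence $(p',q') \in \bigstar(S'')$, so $\texttt{findStar}(S'')$ placed it in the union scanned at step~(3b), and $C^*$ was discovered. If $C^*=C$ we are done; otherwise the recursive $\texttt{newCluster}$ call on $C^*$ inductively processes $C^*$ up to scale $|S|$ and establishes the invariant for its own $\texttt{active}^T$, which then lets us repeat the dilation/star argument on the remaining portion of the path. Since the path has bounded length and each cluster it visits contributes a positive-length edge, this iteration terminates, contradicting the undiscovered status of $C$ and establishing the invariant after event (B).
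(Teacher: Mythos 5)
Your proof follows the same route as the paper's: distinguish the two kinds of updates to $\texttt{active}^T$ (initial/recursive squares around a newly found cluster versus promotion of active squares to the parent level), dismiss the former by the choice of $\delta_j/8$, and handle the latter by combining the constant dilation of $\DT(P)$ with the $\bigstar$ (findStar) mechanism. Your identification of the first discovered-to-undiscovered edge $(p',q')$ is the same device as the paper's ``last discovered point $p'$,'' and your explicit invocation of the prior invariant to certify that $q'$ is outside the neighborhood of $S''$ (so that $(p',q') \in \bigstar(S'')$) is a point the paper leaves implicit; that's a small improvement in clarity.

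Where you go astray is the final ``iteration.'' Once you have established that $\texttt{findStar}(S'')$ would have returned $(p',q')$, the cluster $C^*$ containing $q'$ would have been passed to $\texttt{newCluster}$ and hence would be discovered --- which directly contradicts the defining property of $(p',q')$ as the first edge leaving the discovered region. The proof by contradiction is finished at that point. Your branching ``if $C^*=C$ we are done; otherwise\ldots\ repeat\ldots\ the iteration terminates'' conflates the static proof device of walking along the path $\pi$ with the dynamic evolution of the algorithm, and it is not needed: the single contradiction suffices, exactly as in the paper. This does not make the argument wrong, but it obscures rather than strengthens it; I'd cut that last step and land the contradiction immediately.
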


\begin{proof}
The set $\texttt{active}^T$ only changes in Steps~\ref{step:activeS} and
\ref{step:newActive}. The invariant is maintained in
Step~\ref{step:activeS}, since the size of the squares in $\mathcal{S}$ 
(i.e., $\delta_i/8)$ is chosen such that their neighborhoods 
can contain no point from any other cluster. 

Let us now consider Step~\ref{step:newActive}. The
set $\texttt{newActive}$ contains two kinds of squares:
(i) the parents of squares processed in the current iteration
of the main loop; and (ii) squares that were added to
$\texttt{newActive}$ after a recursive call. We only need
to focus on squares of type (i), since squares of type (ii)
are already added to $\texttt{active}^T$ during the recursive
call.
Suppose that $\texttt{active}^T$ contains a square $S$ whose neighborhood
has a point $p \in P$ in an undiscovered cluster. Since $S \in
\texttt{active}^T$,
there is a point $q \in P \cap \mathcal{S}$, and by the definition
of neighborhood, we have $d(p,q) \leq 3|S|$.
However, since the dilation of $\DT(P)$ is at most
$2.5$~\cite{KeilGu92}, $\DT(P)$ contains a path $\pi$ of length at most
$8|S|$ from $p$ to $q$. Let
$p'$ be the last discovered point along $\pi$. The point $p'$ lies
in an active square $S'$ with $|S'| \geq |S|$, and the edge $e$ leaving
$p'$ on $\pi$ has length at most $8|S'|$. Therefore,
$e \in \bigstar(S'')$  for a descendant $S''$ of $S'$, which
contradicts the fact that $p'$ is the last discovered point along
$\pi$.
\end{proof}

\begin{lem}\label{lem:node-explore}
The total running time of \emph{$\texttt{explore}$}, excluding the
calls to \emph{$\texttt{findStar}$}, is 
\emph{$O(m + \sum_{i=1}^{m} |\texttt{out}(v_i)|)$}.
\end{lem}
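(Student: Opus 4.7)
\medskip
\noindent\emph{Proof plan.}
The plan is to decompose the total work of $\texttt{explore}$ (summed over all of its recursive instantiations, and excluding the interior of $\texttt{findStar}$) into four buckets:
(a) the modifications of the partial tree $T_u^Q$ made in Step~\ref{step:refine} of $\texttt{newCluster}$;
(b) the bookkeeping of the successive lists $\texttt{active}$ in Steps~\ref{step:activeS}, \ref{step:activeLoop} and \ref{step:newActive};
(c) the edge-trace at the start of $\texttt{newCluster}$; and
(d) the inner loop over $\bigcup_{S\in\texttt{active}}\bigstar(S)$ in Step~\ref{step:starLoop}.
I plan to bound (a)--(c) by $O(m)$ and (d) by $O(\sum_i|\texttt{out}(v_i)|)$, which together yield the lemma.

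For (a), I would invoke Corollary~\ref{cor:c-cluster-QT-extended}: the final $T_u^Q$ has $O(m)$ nodes, since its leaves are refined down to $\delta_j/8 \leq \delta_j = d(P_{v_j},P_u\setminus P_{v_j})$, and by Theorem~\ref{thm:balance} each subdivision, the induced rebalancing, and the update of the neighbour pointers costs amortized $O(1)$ per new square. For (b), each square of $T_u^Q$ is inserted into some $\texttt{active}$ or $\texttt{newActive}$ list only a constant number of times (once when it is created at the bottom of the search, and once as a parent when a child promotes it), so the bookkeeping is $O(|T_u^Q|)=O(m)$. For (c), $\texttt{newCluster}$ is invoked at most $m-1$ times in total, once per newly discovered child $v_j$; since every traced $e$ satisfies $|e|\leq 16|S|$ and the stored neighbour pointers link $S$ to squares of size at least $|S|$ (the tree being balanced), each trace crosses only $O(1)$ squares and runs in $O(1)$ time.

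The main obstacle is (d), because a priori the union of stars could be much larger than $\sum_i|\texttt{out}(v_i)|$. Here I would use a two-step charging argument. First, I would show that each Delaunay edge $e$ appears in $\bigstar(S)$ for only $O(1)$ squares $S$ during the entire run: the joint requirements $|e|\leq 16|S|$ and ``the other endpoint of $e$ lies outside the $5\times 5$ neighbourhood of $S$'' pinch $|S|$ into a constant range $[\,|e|/16,\,O(|e|)\,]$, and at each such scale only $O(1)$ squares can contain an endpoint of $e$. Second, I would show that every edge actually encountered in Step~\ref{step:starLoop} is an inter-cluster edge among the children of $u$: an intra-cluster edge in $P_{v_j}$ has length at most $|B_{v_j}|\leq\delta_j/c$, and once $P_{v_j}$ has been refined into squares of side $\delta_j/8$, both of its endpoints lie inside the neighbourhood of any square in $T_u^Q$ that contains one of them (for $c$ large enough), so such an edge belongs to no star; and every edge leaving $P_u$ has length at least $c|B_u|$, which exceeds $16|S|$ for every square of $T_u^Q$. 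Thus each edge counted in Step~\ref{step:starLoop} lies in $\texttt{out}(v_i)\cap\texttt{out}(v_j)$ for some siblings $v_i,v_j$, which bounds (d) by $O(\sum_i|\texttt{out}(v_i)|)$. Summing (a)--(d) completes the argument.
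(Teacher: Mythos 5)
Your proposal is correct and matches the paper's own proof in structure: both charge the active-list bookkeeping and the quadtree-refinement work to the $O(m)$ nodes of the final $T_u^Q$ (via Lemma~\ref{lem:c-cluster-QT} / Corollary~\ref{cor:c-cluster-QT-extended}), bound each edge trace in $\texttt{newCluster}$ by $O(1)$ using balance and the length bound on star edges, and bound the star iterations by noting that every Delaunay edge appears in only $O(1)$ stars and that only inter-cluster $\texttt{out}$-edges can appear. You spell out more explicitly than the paper why intra-cluster and out-of-$P_u$ edges cannot appear in any star, which the paper compresses into an ``it follows''; conversely, the paper makes explicit (via Invariant~\ref{inv:neighborhood}) that rebalancing after a $\texttt{newCluster}$ call cannot cascade into descendants of already-active squares, a point your amortization in bucket (a) silently relies on.
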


\begin{proof}
All squares appearing in $\texttt{active}^T$ are ancestors of non-empty
leaf squares in the final tree $T_u^Q$. Therefore, by
Lemma~\ref{lem:c-cluster-QT},
the total number of iterations for the loop in
Step~\ref{step:activeLoop} is $O(m)$. Furthermore, 
$\bigstar(S)$ contains only edges of length $\Theta(|S|)$, so every
edge appears in only a constant number of stars. It follows that the
total size of the $\bigstar$-lists, and hence the total number of
iterations of the loop in Step~\ref{step:starLoop} is
$O(\sum_{i=1}^{m} |\texttt{out}(v_i)|)$.

It remains to bound the time for tracing the edges and balancing the
tree.  Since $T_u^Q$ is balanced and since $\bigstar(S)$ contains
only edges of length $\Theta(|S|)$, the tracing along the $\texttt{neighbor}$
pointers of an edge takes constant time (since we traverse a constant number
of boxes of size $\Theta(|S|)$). By Invariant~\ref{inv:neighborhood},
the other endpoint of the edge  is contained in a leaf square
of the current $T_u^Q$ of size $\Theta(|S|)$. (This is because the quadtree
is balanced and because the other endpoint of the edge lies outside
the neighborhood of the active squares.) Therefore, the time to build
the balanced
quadtree for the new leaf squares containing the newly discovered cluster
can be charged to the corresponding nodes in the final $T_u^Q$, of which
there are $O(m)$. Furthermore, note that by
Invariant~\ref{inv:neighborhood}, balancing the quadtree for the newly
discovered leaf squares does not affect any descendants of the active
squares.
\end{proof}

\subsection {Implementing $\texttt{findStar}$}

KL show how to exploit the geometric properties of the Delaunay triangulation
in order to implement the function $\texttt{findStar}$,
quickly. For this, they store two additional fields with each
active square, called $\texttt{characteristic}$ and
$\texttt{shortcuts}$~\cite[Section~6]{KrznaricLe98}, and they explain how to
maintain these lists throughout the procedure. This part of the algorithm works 
on a real RAM/pointer machine without any further modification, so we 
just state their result.
\begin{lem}\label{lem:findStar}
The total time for all calls to \emph{$\texttt{findStar}$} and the
maintenance of the required data structures is
\emph{$O(m + \sum_{i=1}^{m} |\texttt{out}(v_i)|)$}.
\qed
\end{lem}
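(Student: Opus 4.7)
The plan is to adopt the implementation of $\texttt{findStar}$ developed by Krznaric and Levcopoulos~\cite{KrznaricLe98}, and to verify that it is in fact free of bucketing and bit-level operations. A direct reading of their construction suggests this is the case: both the $\texttt{characteristic}$ and $\texttt{shortcuts}$ fields that they attach to every active square $S$ are built through local navigation in $\DT(P)$ and along the $\texttt{neighbor}$ pointers of $T_u^Q$, neither of which requires the floor function.

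Conceptually, $\texttt{characteristic}(S)$ would store an $O(1)$-size set of representative edges of $\bigstar(S)$, one per side of the neighborhood of $S$, and $\texttt{shortcuts}(S)$ would store pointers threading these representatives together around the boundary of the neighborhood. A call to $\texttt{findStar}(S)$ could then start from any representative and walk around $S$ in the rotational order of $\DT(P)$, reporting the star-edges one at a time and using the shortcuts to skip over Delaunay fans that stay inside the neighborhood. This gives $O(1 + |\bigstar(S)|)$ per call.

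The running-time bound would then follow from two charging arguments. First, since every edge $e \in \bigstar(S)$ satisfies $|e| = \Theta(|S|)$, a fixed Delaunay edge can belong to only $O(1)$ stars over the course of the algorithm, so $\sum_S |\bigstar(S)| = O(\sum_i |\texttt{out}(v_i)|)$. Second, whenever Step~\ref{step:refine} splits a square or introduces new squares for a freshly discovered cluster, the characteristic and shortcut fields of the newly created children can be derived from those of the parent in time proportional to the star-edges touched, again chargeable to the out-lists. Together with the $O(m)$ overhead for initializing these fields on the starting tree (Lemma~\ref{lem:c-cluster-QT}), this yields the claimed bound $O(m + \sum_i |\texttt{out}(v_i)|)$.

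The main obstacle will be the bookkeeping needed to splice the shortcut structure correctly when Step~\ref{step:refine} creates many levels of new squares at once to accommodate a newly discovered cluster. One must argue that the cost of re-threading the shortcuts can be charged either to squares in the final tree $T_u^Q$, using Lemma~\ref{lem:c-cluster-QT}, or to the star-edges that cross the affected region, without any global rebuild and without appealing to hashing. KL's original analysis is intricate on precisely this point, and the bulk of the proof would go into carefully translating their operations into pointer-machine primitives.
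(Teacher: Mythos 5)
Your proposal takes essentially the same approach as the paper: the paper's own ``proof'' of this lemma is just the surrounding paragraph, which cites Section~6 of Krznaric and Levcopoulos~\cite{KrznaricLe98} for the $\texttt{characteristic}$ and $\texttt{shortcuts}$ structures and observes that ``this part of the algorithm works on a real RAM/pointer machine without any further modification, so we just state their result'' (hence the \texttt{qed} attached to the statement itself). Your conceptual description of the two fields, the charging argument via $|e|=\Theta(|S|)$, and the concern about re-threading shortcuts during Step~\ref{step:refine} are all consistent with KL's construction, but they amount to an optional elaboration of what the paper deliberately leaves as a black-box citation rather than a different route.
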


\subsection {Putting Everything Together}
We can now finally prove Theorem~\ref{thm:dt->c-cluster-qt}.

\begin{proof}[Proof of Theorem~\ref{thm:dt->c-cluster-qt}]
First, we use Theorem~\ref{thm:c-cluster-tree} to find a $c$-cluster
tree $T_c$ for $P$ in $O(n)$ time. Next, we use the algorithm from
Section~\ref{sec:dt->c-cluster-preprocess} to preprocess the tree.
By Lemma~\ref{lem:out}, this also takes $O(n)$ time. Finally,
we process each node of $T_c$ using the algorithm from 
Section~\ref{sec:one-node}. 
By Lemmas~\ref{lem:node-init}, \ref{lem:node-explore},
and \ref{lem:findStar}, this takes total time
$\sum_{j} 1+ |\texttt{out}(v_j)|$, where the sum ranges over
all the nodes of $T_c$. This sum is $O(n)$ because there are
$O(n)$ nodes in $T_c$, and because every edge of $\DT(P)$
appears in exactly two $\texttt{out}$-lists. Hence, the total running
time is linear, as claimed.
\end{proof}

\section {Applications}
\label {sec:applications}

  As mentioned in the introduction, 
  our result yields
  deterministic versions of several recent randomized algorithms
  related to DTs. 
  Firstly, we can immediately derandomize
  an algorithm for hereditary DTs by
  Chazelle~\etal~\cite{ChazelleDeHuMoSaTe02,ChazelleMu11}:

  \begin {cor}\label{cor:split}
     Let $P$ a planar $n$-point set, and let $S \subseteq P$.
     Given $\DT(P)$, we can find $\DT(S)$
     in deterministic time $O(n)$ on a pointer machine.
  \end {cor}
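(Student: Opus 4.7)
The plan is to use the equivalence established in this paper to reduce the hereditary problem to the easy operation of restricting a quadtree to a subset of its points. The workflow will be: (i) convert $\DT(P)$ to a $c$-cluster quadtree for $P$, (ii) restrict this quadtree to the subset $S$, and (iii) convert the resulting quadtree back to $\DT(S)$.

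In more detail, first I would apply Theorem~\ref{thm:dt->c-cluster-qt} to $\DT(P)$ to obtain a $c$-cluster quadtree $T$ for $P$ in $O(n)$ time on a pointer machine. Then I would traverse $T$ in post-order: at each leaf, discard the points that do not lie in $S$; at each internal node (including the quadtree pieces $T_u^Q$ and the $c$-cluster children), mark the node as empty if all its descendants became empty, and prune such empty subtrees. After this pruning, I would contract each maximal path of nodes that has only one non-empty child into a single compressed edge, and merge compressed children into longer compressed paths whenever the single-child property is preserved. This yields a tree $T_S$ whose leaves correspond to the points of $S$, whose internal structure is inherited from $T$, and whose size is $O(|T|) = O(n)$. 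All of these bookkeeping operations are local and take time proportional to the number of visited nodes, so the whole step runs in $O(n)$.

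The key point is that $T_S$ is a valid (compressed or $c$-cluster) quadtree for $S$: the squares and their geometric relationships are inherited unchanged from $T$, so well-separation and cluster properties in $T$ translate directly to analogous properties in $T_S$ (up to adjusting constants slightly, which is absorbed by the $(c_1,c_2)$-cluster relaxation and by Theorem~\ref{thm:cluster-compressed-equiv}). Once $T_S$ is in hand, I would invoke Theorem~\ref{thm:wspd->dt} to obtain $\DT(S)$ in time $O(|S|) = O(n)$. The whole pipeline thus runs in deterministic linear time on the real RAM/pointer machine, giving the corollary.

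The main obstacle I anticipate is the pruning and recompression step: I need to make sure that deleting points from $S_v$ for nodes $v$ never produces a structure whose compressed children violate the $|S_{\child v}| \leq |S_v|/a$ condition or the $c$-cluster separation condition in a way that cannot be fixed in linear time. This is why the flexible $(c_1,c_2)$-cluster framework and the shifting/balancing machinery of Section~\ref{sec:shiftbalance} are crucial: if a local adjustment is ever needed (e.g., to keep the cluster tree well-separated after the subset restriction), Theorem~\ref{thm:shiftbalance} and Theorem~\ref{thm:cluster-compressed-equiv} can be applied to the affected constant-size neighborhoods without increasing the total work beyond $O(n)$.
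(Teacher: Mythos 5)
Your plan — convert $\DT(P)$ to a $c$-cluster quadtree via Theorem~\ref{thm:dt->c-cluster-qt}, prune the leaves for $P \setminus S$, then apply Theorem~\ref{thm:wspd->dt} — is exactly the paper's route. Where you diverge is in the middle step: you invest considerable effort in recompressing single-child paths, re-establishing the $|S_{\child v}| \leq |S_v|/a$ invariant, and appealing to the shifting/balancing machinery of Section~\ref{sec:shiftbalance} to repair any broken cluster or compression conditions. The paper observes that none of this is necessary: because the target bound is $O(n)$ (the size of $P$, not of $S$), one can simply leave the pruned tree as a ``partly compressed'' quadtree that still contains long single-child paths, and hand this directly to Theorem~\ref{thm:wspd->dt}. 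Your instinct that new $c$-clusters may appear (and that compressed-node ratios may no longer hold) after deletion is correct, but the cure you propose is heavier than the disease; the paper's one-line footnote observation resolves the same concern with no extra machinery, so you should state explicitly that the $O(n)$ budget makes recompression and cluster-repair unnecessary rather than invoking Theorem~\ref{thm:shiftbalance}.
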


  \begin{proof}
     Use Theorem~\ref{thm:dt->c-cluster-qt}  to find a $c$-cluster
     quadtree $T$ for $P$,
     remove the leaves for $P \setminus S$ from $T$ and trim it
     appropriately.\footnote{Deleting $P \setminus S$ might
     create new $c$-clusters. However, since we are aiming
     for running time $O(n)$, we can  apply Theorem~\ref{thm:wspd->dt}
     to a partly compressed quadtree that may contain
     long paths where every node has only one child.}
     Finally, apply Theorem~\ref{thm:wspd->dt} to extract $\DT(S)$ from $T$,
     in time $O(n)$.
  \end{proof}

  Secondly, we obtain deterministic analogues of the algorithms by
  Buchin~\etal~\cite {BuchinLoMoMuXX} to preprocess imprecise point sets
  for faster DTs. For example, we can prove the following:
  \begin {cor}\label{cor:IDT}
    Let $\mathcal{R} = \langle R_1, R_2, \ldots, R_n\rangle$ be
    a sequence of 
    $n$ $\beta$-fat
    planar regions so that 
    no point in $\R^2$ 
    meets more than $k$ of them. 
    We can preprocess $\mathcal{R}$ in $O(n \log n)$
    deterministic time into an $O(n)$-size data structure
    so that given a sequence of  $n$ points 
    $P = \langle p_1, p_2, \ldots, p_n \rangle$ 
    with $p_i \in R_i$ for all $i$, we can find $\DT(P)$
    in deterministic time $O(n \log(k/\beta))$ on a pointer
    machine.
  \end {cor}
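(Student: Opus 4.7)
The plan is to follow the blueprint of Buchin et al.~\cite{BuchinLoMoMuXX} for preprocessing imprecise point sets, but to replace the randomized Delaunay construction at the end of their pipeline with our new deterministic algorithm from Theorem~\ref{thm:wspd->dt}. The preprocessing builds a compressed quadtree on the regions themselves, and the query turns this into a compressed quadtree on the actual sample $P$, from which $\DT(P)$ is extracted in deterministic linear time.

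In the preprocessing phase I would compute a compressed quadtree $T_{\mathcal{R}}$ on one representative point per region $R_i$ (for example, the center of a ball inscribed in $R_i$, which exists by $\beta$-fatness). Constructing a compressed quadtree on $n$ points deterministically on a pointer machine takes $O(n \log n)$ time. Augment $T_{\mathcal{R}}$ so that each $R_i$ stores a pointer to a \emph{home node} $u_i$ whose square $S_{u_i}$ has diameter $\Theta(\mathrm{diam}(R_i))$ and fully contains $R_i$. This augmentation is computable in linear time once $T_{\mathcal{R}}$ is built, and the resulting data structure has size $O(n)$.

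In the query phase, when the sample $P = \langle p_1, \ldots, p_n\rangle$ with $p_i \in R_i$ arrives, I would process the points one at a time: for each $p_i$, start at the home node $u_i$ and descend into $T_{\mathcal{R}}$, subdividing leaf squares as necessary, until $p_i$ lies in its own leaf (separated from all other representatives and from already-inserted sample points). By Corollary~\ref{cor:balance-'n-thread} and Corollary~\ref{cor:qt-shift}, the local refinement can be done without losing the compressed quadtree structure, and by Theorem~\ref{thm:shiftbalance} any realignments needed around compressed children cost only $O(1)$ amortized. After all $n$ points are inserted, one ends up with a compressed quadtree $T_P$ for $P$. A final application of Theorem~\ref{thm:wspd->dt} extracts $\DT(P)$ in deterministic $O(|P|) = O(n)$ time, which is absorbed by the query cost.

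The main obstacle is the quantitative bound on the depth of the local descent, namely that each point location costs at most $O(\log(k/\beta))$. This is where the $\beta$-fatness and the depth-$k$ hypothesis enter: a $\beta$-fat region $R_i$ of diameter $d_i$ contains a disk of radius $\Omega(\beta d_i)$, so any other representative lying within $R_i$'s home square belongs to one of at most $k$ regions overlapping $R_i$, and a standard volume/packing argument (as in Buchin et al.) shows that after at most $O(\log(k/\beta))$ quadtree refinements of $u_i$ the point $p_i$ is isolated. Summing over all $n$ points and adding the linear-time Delaunay extraction yields the claimed $O(n \log(k/\beta))$ bound. The remaining steps (maintaining balance of $T_P$ under incremental refinements, carrying the $\texttt{neighbor}$ pointers, and verifying that $T_P$ is a valid input to Theorem~\ref{thm:wspd->dt}) are routine applications of the machinery developed in Sections~\ref{sec:prelim}--\ref{sec:qt->dt}.
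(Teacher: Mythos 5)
Your high-level plan (preprocess a quadtree adapted to the regions, turn it at query time into a quadtree on the sample, then extract $\DT(P)$ via Theorem~\ref{thm:wspd->dt}) matches the intended pipeline. But there is a genuine gap, and it sits precisely where your writeup waves it off as ``routine applications of the machinery.'' The Buchin~\etal\ construction that you are implicitly relying on does \emph{not} produce a standard compressed quadtree for $P$ at query time: it produces a \emph{skewed} quadtree, in which squares need not be perfectly aligned and some squares are cut off. Nothing in Section~\ref{sec:qt->dt} is stated for skewed quadtrees, so Theorem~\ref{thm:wspd->dt} cannot simply be invoked on the output. The paper's actual proof spends essentially all of its effort on exactly this issue: it re-examines Lemmas~\ref{lem:constant-levels}, \ref{lem:constant-neighbors}, and \ref{lem:findNN} and shows, using the key observation from Buchin~\etal\ (that every truncated square is adjacent to a square of comparable area), that the volume and packing arguments still go through, and then converts to a $c$-cluster quadtree via Theorem~\ref{thm:cluster-compressed-equiv} before applying Lemma~\ref{lem:extract-emst}. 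Your proof has no analogue of this step.

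Your proposal also deviates in the preprocessing: you build a compressed quadtree on one representative point per $R_i$, rather than the representative quadtree for $\mathcal R$ used by Buchin~\etal. This invites a second problem with the claimed $O(\log(k/\beta))$ descent depth. The fatness and depth-$k$ hypotheses constrain the \emph{regions}, not the sample points: two sample points $p_i, p_j$ can be arbitrarily close to one another even when $\mathcal R$ is $\beta$-fat with bounded depth, so descending until $p_i$ is ``separated from \ldots\ already-inserted sample points'' can be arbitrarily deep without compressed nodes, and the interaction with compressed nodes is exactly what you have not analyzed. Finally, the appeal to Corollary~\ref{cor:qt-shift} and Theorem~\ref{thm:shiftbalance} for ``$O(1)$ amortized realignments around compressed children'' misreads those results: they are global linear-time procedures for rebuilding/shifting an entire quadtree, not per-insertion operations with constant amortized cost, and the paper does not establish any such amortized guarantee. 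To repair the argument you should follow the cited Buchin~\etal\ construction as is, acknowledge that the output is a skewed quadtree, and then explicitly re-derive the relevant lemmas of Section~\ref{sec:qt->dt} in that setting, as the paper does.
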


\begin{proof}
    The method of
    Buchin~\etal~\cite[Theorem~4.3 and Corollary~5.6]{BuchinLoMoMuXX}
    proceeds by computing a
    representative quadtree $T$ for $\mathcal{R}$.
    Given $P$, the algorithm finds for every point in $P$
    the leaf square of $T$ that contains it, and then uses this
    information to obtain a compressed quadtree $T'$ for $P$ in time
    $O(n \log(k/\beta))$. However, $T'$ is \emph{skewed}
    in the sense that not all its squares need to be perfectly
    aligned and that some squares can be cut off.
    However, the authors argue
    that even in this case $\wspd(T)$ takes $O(n)$ time and
    yields a linear-size WSPD~\cite[Appendix~B]{BuchinLoMoMuXX}.
    The main observation~\cite[Observation~B.1]{BuchinLoMoMuXX}
    is that any (truncated) square $S$ in $T'$ is adjacent to at least
    one square whose area is at least a constant fraction of the
    area $S$ would have without clipping.
    Since in skewed quadtrees the size of a node is at most half
    the size of its parent, the argument of Lemma~\ref{lem:constant-levels}
    still applies.
    To see that Lemma~\ref{lem:constant-neighbors} holds,
    we need to check that the volume argument goes through.
    For this, note that by the main observation of Buchin~\etal, we
    can assign every square $R_w$ (the notation is as in the proof of
    Lemma~\ref{lem:constant-neighbors}) to an adjacent square of
    comparable size at distance $O(\eps)$ from $A$. Since every such
    square is charged by disjoint descendants from a constant number of
    neighbors, the volume argument still applies, and
    Lemma~\ref{lem:constant-neighbors} holds. Lemma~\ref{lem:findNN}
    only relies on well-separation and the combinatorial structure of
    $T$, and hence remains valid.
    Finally, in order to apply Lemma~\ref{lem:extract-emst},
    we need to turn $T'$ into a $c$-cluster quadtree,
    which takes linear time by Theorem~\ref{thm:cluster-compressed-equiv}.
    Thus, the total running time is $O(n \log(k/\beta)$, as claimed.
\end{proof}

  Finally, Buchin and Mulzer~\cite{BuchinMu11} 
  showed that for word RAMs, DTs are no harder than sorting. 
  We can now do it deterministically.
  Let $\texttt{sort}(n)$ be the time to sort $n$ integers
  on a $w$-bit word RAM. The best deterministic bound
  for $\texttt{sort}(n)$ is 
  $O(n \log\log n)$~\cite{Han04}.\footnote{For specific ranges of $w$, 
  we can do better.
  For example, if $w = O(\log n)$,  radix sort shows
  that $\texttt{sort}(n) = O(n)$~\cite{CormenLeRiSt09}.}
  \begin {cor}\label{cor:wramDT}
     Let $P$ be a planar $n$-point set given by $w$-bit integers, for some 
     word-size $w \geq \log n$. 
     We can find $\DT(P)$ in deterministic time $O(\emph{\texttt{sort}}(n))$ on
     a word RAM supporting the 
     $\emph{\texttt{shuffle}}$-operation.\footnote{For two $w$-bit words,
     $x = x_1 \ldots x_w$ and $y = y_1 \ldots, y_w$, we define
     $\texttt{shuffle}(x,y)$ as the $2w$-bit word
     $z = x_1 y_1 x_2 y_2 \ldots x_wy_w$.}
  \end {cor}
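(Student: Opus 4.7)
The plan is to reduce the problem to computing a compressed quadtree for $P$, from which $\DT(P)$ can be extracted deterministically in linear time via Theorem~\ref{thm:wspd->dt}. The reduction itself follows the strategy of Buchin and Mulzer~\cite{BuchinMu11}: with the \texttt{shuffle} operation available, a compressed quadtree for a point set given by $w$-bit integer coordinates can be built in $O(\texttt{sort}(n))$ time, and once the quadtree is in hand our new deterministic pipeline takes over.

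Concretely, I would proceed as follows. First, for each point $p_i = (x_i, y_i) \in P$, compute its \emph{Morton code} $z_i \eqdef \texttt{shuffle}(x_i, y_i)$; this produces, in $O(n)$ time, a $2w$-bit integer per point whose bits encode the unique root-to-leaf path in the uncompressed quadtree over the base square $[0,2^w)^2$. Second, sort the points by their Morton codes using an integer sort in $O(\texttt{sort}(n))$ time; recall that on a $w$-bit word RAM integers of $O(w)$ bits can be sorted within the same $\texttt{sort}(n)$ bound by treating $2w$-bit words as pairs of $w$-bit words and performing two passes. Third, from the $z$-sorted sequence, construct a compressed quadtree $T$ in $O(n)$ time: the standard bottom-up scheme computes, for each consecutive pair of points, the length of the longest common prefix of their Morton codes (equivalently, the depth of their deepest common quadtree ancestor), and a single stack-based sweep knits these depths into a compressed quadtree. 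Fourth, apply Theorem~\ref{thm:wspd->dt} to $T$ to obtain $\DT(P)$ in deterministic $O(n)$ time on a pointer machine.

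Adding the four phases, the total running time is $O(n) + O(\texttt{sort}(n)) + O(n) + O(n) = O(\texttt{sort}(n))$, as claimed. The main substantive obstacle is purely notational: one must verify that Morton-order sorting really does suffice to reconstruct a compressed quadtree deterministically on a word RAM, which is exactly the content of the Buchin--Mulzer reduction~\cite{BuchinMu11}; that paper, however, then invokes a \emph{randomized} algorithm to pass from the quadtree to the Delaunay triangulation, and it is precisely this last step that Theorem~\ref{thm:wspd->dt} now makes deterministic, yielding the deterministic $O(\texttt{sort}(n))$ bound.
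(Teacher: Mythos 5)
Your high-level plan — build a compressed quadtree from Morton codes in $O(\texttt{sort}(n))$ time and then run the paper's deterministic quadtree-to-DT pipeline — is the right idea, and the phase decomposition you give ($O(n)$ to compute codes, $O(\texttt{sort}(n))$ to sort, $O(n)$ to knit the tree, $O(n)$ for the DT) sums correctly. However, treating the final step as a black-box application of Theorem~\ref{thm:wspd->dt} glosses over a subtlety the paper's own proof is careful about, and this is where the gap lies.

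The Buchin--Mulzer construction does \emph{not} produce the actual quadtree squares $S_v$; it produces only the combinatorial tree structure together with the bounding boxes $B_v$. Theorem~\ref{thm:wspd->dt}, however, runs $\wspd$ on the squares $S_v$, then passes through Theorem~\ref{thm:cluster-compressed-equiv}, the shift-and-balance machinery of Theorem~\ref{thm:shiftbalance}, and the pointer-machine \boruvka\ extraction of Lemma~\ref{lem:extract-emst}, all of which manipulate the squares geometrically. Your stack-based sweep over LCP depths of Morton codes recovers exactly the combinatorial structure Buchin--Mulzer get, so without further argument (e.g.\ showing that square coordinates can be recovered by masking on a word RAM without an unshuffle operation) you are in the same situation as the paper: you do not have the $S_v$ to hand. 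The paper's proof therefore does \emph{not} invoke Theorem~\ref{thm:wspd->dt} as a whole; instead it runs $\wspd$ on the bounding boxes, and then re-verifies the three lemmas that make Theorem~\ref{thm:qtree->h} go through in this weaker setting: Lemma~\ref{lem:constant-levels} (by checking that bounding-box sizes still shrink geometrically every five levels, requiring a larger constant $k$), Lemma~\ref{lem:constant-neighbors} (by noting $B_v \subseteq S_v$ so the volume argument still applies), and Lemma~\ref{lem:findNN} (which depends only on well-separation and combinatorics). It then sidesteps the $c$-cluster-quadtree conversion and Lemma~\ref{lem:extract-emst} entirely by extracting the EMST from the $O(n)$-edge supergraph $H$ with Fredman and Willard's transdichotomous MST algorithm, which is $O(n)$ on a word RAM. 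Your proof as written neither establishes that the squares are available (so the black-box use of Theorem~\ref{thm:wspd->dt} is justified) nor re-checks the lemmas in the bounding-box-only setting; it needs to do one or the other.
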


\begin{proof}
      Buchin and Mulzer~\cite{BuchinMu11} show how to find a
      compressed quadtree $T$  for $P$ in time $O(\texttt{sort}(n))$,
      using the $\texttt{shuffle}$-operation. They actually do not
      find the squares of the quadtree, only the combinatorial
      structure of $T$ and the bounding boxes $B_v$.  It is
      easily seen that the algorithm $\wspd$ also works
      in this case.

      To apply Lemma~\ref{lem:constant-levels}, we need to check that
      the sizes of the bounding boxes decrease geometrically down the
      tree. For this, consider a node $v \in T$ with associated point
      set $P_v$ and the quadtree square $S_v$ (i.e., the smallest aligned
      square of size $2^{l}$ such that the coordinates of all points
      in $P_v$ share the first $w-l$ bits). Let $B_v$ be the bounding box
      of $P_v$, and let $l'$ be such that $2^{l'+1} \geq |B_v| \geq 2^{l'}$.
      Clearly, $B_v$ meets at most nine aligned squares of size $2^{l'}$,
      arranged in a $3 \times 3$ grid. Hence, any descendant $\child v$ of
      $v$ that is at least five levels below $v$ must have
      $|B_{\child v}| \leq |S_{\child v}| \leq |B_v|/2$, since after at most
      four (compressed) quadtree divisions the  squares for $B_v$ have
      been separated. Thus, the proof of Lemma~\ref{lem:constant-levels}
      goes through as before, if we choose $k$ larger and
      consider every fifth node along the chain $u_1, u_2, \ldots, u_k, u$.

      Lemma~\ref{lem:constant-neighbors} still holds, because
      every bounding box $B_v$ is contained in a (possibly much
      larger) square $S_v$, so the volume argument applies.
      Also, Lemma~\ref{lem:findNN} only relies on well-separatedness
      and the combinatorial structure of $T$, so we can find
      the graph $H$ in linear time. After that, it takes $O(n)$
      time to compute $\emst(P)$, using the transdichotomous minimum
      spanning tree algorithm by Fredman and Willard~\cite{FredmanWi94}.
\end{proof}

\section{Conclusions}

We strengthen the connections between proximity structures
in the plane and sharpen several known results between them.
Even though our results are optimal, the underlying algorithms are 
still quite subtle, and it may be of interest to see whether some of
them can be simplified. It is also interesting to see whether systematic
derandomization techniques, like $\eps$-nets, can be useful to yield 
alternative deterministic algorithms for some of the problems considered
here. Finally, some of the previous results also apply to higher dimensions,
whereas we focus exclusively on the plane. Can we obtain analogous 
derandomizations for $d \geq 3$?

\section*{Acknowledgments}

This work was initiated while the authors were visiting the Computational
Geometry Lab at Carleton University. We would like to thank the group
at Carleton and especially our hosts Jit Bose and Pat Morin for their
wonderful hospitality and for creating a great research environment.
We also would like to thank Kevin Buchin and Martin N\"ollenburg for
sharing their thoughts on this problem with us.
Work on this paper by M. L\"offler has been supported by the Office of
Naval Research under MURI grant N00014-08-1-1015.
Work by W. Mulzer has been supported in part by NSF grant CCF-0634958,
NSF CCF 083279, and a Wallace Memorial Fellowship in Engineering.

\small

\bibliographystyle {abbrv}
\newcommand{\SortNoop}[1]{}

\normalsize

\appendix
  \section {Computational Models}
  \label{app:models}

Since our results concern different computational models, 
we use this appendix to describe them in more detail. 
Our two models are the real RAM/pointer machine and 
the word RAM.
\paragraph{The Real RAM/Pointer Machine.}
The standard model in computational geometry
is the \emph{real RAM}. Here, data is represented
as an infinite sequence of storage cells. These cells
can be of two different types: they can store real numbers
or integers. The model supports standard operations on
these numbers in constant time, including addition, 
multiplication, and elementary functions
like square-root.
The \emph{floor} function can be used to truncate a 
real number to an integer, but if we were allowed to 
use it arbitrarily, the real RAM could solve PSPACE-complete
problems in polynomial time~\cite{Schoenhage79}.
Therefore, we usually have only a restricted floor function 
at our disposal,
and in this paper it will be banned altogether. 

The  \emph{pointer machine}~\cite{Knuth97} models 
the list processing capabilities
of a computer and
disallows the use of constant time table lookup.
The data structure is modeled as a directed
graph $G$ with bounded out-degree. Each node in
$G$ represents a \emph{record}, with a bounded
number of pointers to other records and a bounded number
of (real or integer) data items.  
The algorithm can access data only by following pointers
from the inputs  (and a bounded number of global entry
records); random access is not possible. The data can be 
manipulated through the usual real RAM operations (again, 
we disallow the floor function).

\paragraph{Word RAM.}
The \emph{word RAM} is essentially a real RAM without
support for real numbers. However, on a real RAM,
the integers are usually treated as atomic, whereas
the word RAM allows for powerful bit-manipulation tricks.
More precisely, the word RAM represents the 
data as a sequence of
$w$-bit words, where $w \geq \log n$ ($n$ being the
problem size).
Data can be accessed arbitrarily, and standard
operations, such as Boolean operations 
(\texttt{and}, \texttt{xor}, \texttt{shl}, $\ldots$), addition, or
multiplication take constant time. There are many variants of
the word RAM, depending on precisely which instructions are 
supported in constant time. The general consensus seems
to be that any function in $\text{AC}^0$
is acceptable.\footnote{$\text{AC}^0$ is the 
class of all functions $f: \{0,1\}^* \rightarrow \{0,1\}^*$ that
can be computed by a family of circuits $(C_n)_{n \in \mathbb{N}}$ with the 
following properties: (i) each $C_n$ has $n$ inputs; (ii) there exist constants
$a,b$, such that $C_n$ has at most $an^b$ gates, for $n\in \mathbb{N}$; 
(iii) there is a constant $d$ such that for all $n$ the length of the longest
path from an input to an output in $C_n$ is at most $d$ (i.e., the
circuit family has bounded depth); (iv) each gate
has an arbitrary number of incoming edges (i.e., the \emph{fan-in} is 
unbounded).} However, it is always preferable to rely on a set of operations
as small, and as non-exotic, as possible.
Note that multiplication is not in $\text{AC}^0$~\cite{FurstSaSi84}. 
Nevertheless, it is usually
included in the word RAM instruction set~\cite{FredmanWi94}.

\end{document}